\pgfplotsset{compat=1.16}
\tikzstyle{internal} = [draw, fill, shape=circle]
\tikzstyle{external} = [shape=circle]
\tikzstyle{square}   = [draw, fill, rectangle]
\tikzstyle{triangle} = [draw, fill, regular polygon, regular polygon sides=3, inner sep=3pt]
\tikzstyle{pentagon} = [draw, fill, regular polygon, regular polygon sides=5, inner sep=2pt, minimum size=14pt]
\tikzset{every fit/.append style=text badly centered}
\tikzset{>=latex} 
\newcommand{\tp}[1]{{\left( #1 \right)}}
\newcommand{\Ex}{\mathop{\mathbb{{}E}}\nolimits}
\renewcommand{\Pr}{\mathop{\mathrm{Pr}}\nolimits}
\def\*#1{\mathbf{#1}}
\def\+#1{\mathcal{#1}}
\def\-#1{\mathrm{#1}}
\def\=#1{\mathbb{#1}}
\def\^#1{\mathbb{#1}}
\newcommand{\abs}[1]{\ensuremath{\left\vert#1\right\vert}}
\newcommand{\ceil}[1]{\lceil#1\rceil}
\newcommand{\set}[1]{\left\{#1\right\}}
\newcommand{\eps}{\varepsilon}
\newcommand{\dist}{\operatorname{dist}}
\newcommand{\defeq}{:=}
\newtheorem{theorem}{Theorem}
\newtheorem{lemma}[theorem]{Lemma}
\newtheorem{observation}[theorem]{Observation}
\newtheorem{proposition}[theorem]{Proposition}
\newtheorem{corollary}[theorem]{Corollary}
\theoremstyle{definition}
\newtheorem{condition}[theorem]{Condition}
\newtheorem{definition}[theorem]{Definition}
\theoremstyle{remark}
\newtheorem*{remark}{Remark}
\crefname{theorem}{Theorem}{Theorems}
\crefname{observation}{Observation}{Observations}
\crefname{claim}{Claim}{Claims}
\crefname{condition}{Condition}{Conditions}
\crefname{algorithm}{Algorithm}{Algorithms}
\crefname{property}{Property}{Properties}
\crefname{example}{Example}{Examples}
\crefname{fact}{Fact}{Facts}
\crefname{lemma}{Lemma}{Lemmas}
\crefname{corollary}{Corollary}{Corollaries}
\crefname{definition}{Definition}{Definitions}
\crefname{remark}{Remark}{Remarks}
\crefname{proposition}{Proposition}{Propositions}
\crefname{equation}{equation}{equations}
\crefname{enumi}{Case}{Case}
\def\prob#1#2#3{\goodbreak\begin{list}{}{\labelwidth\z@ \itemindent-\leftmargin
      \itemsep\z@  \topsep6\p@\@plus6\p@
      \let\makelabel\descriptionlabel}
  \item[\textbf{Name}]#1
  \item[\textbf{Instance}]#2
  \item[\textbf{Output}]#3
  \end{list}}
\providecommand\@dotsep{5}
\def\listtodoname{Todo list}
\def\listoftodos{\@starttoc{tdo}\listtodoname}
\newcommand{\dTV}{d_{\mathrm{TV}}}
\title{Deterministic counting from coupling independence}
\author{Author(s)}
\author{Xiaoyu Chen, Weiming Feng, Heng Guo, Xinyuan Zhang, Zongrui Zou}
\address[Xiaoyu Chen, Xinyuan Zhang, Zongrui Zou]{State Key Laboratory for Novel Software Technology, New Cornerstone Science Laboratory, Nanjing University, 163 Xianlin Avenue, Nanjing, Jiangsu Province, China.}
\address[Weiming Feng]{School of Computing and Data Science, The University of Hong Kong, Pokfulam Road, Hong Kong, China.}
\address[Heng Guo]{School of Informatics, University of Edinburgh, Informatics Forum, Edinburgh, EH8 9AB, United Kingdom.}
\thanks{This project has received funding from the European Research Council (ERC) under the European Union's Horizon 2020 research and innovation programme (grant agreement No.~947778). Weiming Feng acknowledges the support from Dr.~Max R\"ossler, the Walter Haefner Foundation and the ETH Z\"urich Foundation during his affiliation with ETH Z\"urich.}
\begin{document}

\begin{abstract}
  We show that spin systems with bounded degrees and coupling independence admit fully polynomial time approximation schemes (FPTAS).
  We design a new recursive deterministic counting algorithm to achieve this. 
  As applications, we give the first FPTASes for $q$-colourings on graphs of bounded maximum degree $\Delta\ge 3$, when $q\ge (11/6-\eps_0)\Delta$ for some small $\eps_0\approx 10^{-5}$,
  or when $\Delta\ge 125$ and $q\ge 1.809\Delta$,
  and on graphs with sufficiently large (but constant) girth, when $q\geq\Delta+3$.
  These bounds match the current best randomised approximate counting algorithms
  by Chen, Delcourt, Moitra, Perarnau, and Postle (2019), Carlson and Vigoda (2024), and Chen, Liu, Mani, and Moitra (2023), respectively.
\end{abstract}
\maketitle

\section{Introduction}

The power of randomness is a classical topic in the theory of computing.
Randomised algorithms have found many early successes in the field of approximate counting.
A striking example is the polynomial-time volume estimation algorithm for convex bodies by Dyer, Frieze, and Kannan \cite{DFK91},
whereas deterministic approximation algorithms requires at least exponential membership queries \cite{Ele86,BF87}.
However, this lower bound is valid only for membership query models, and does not rule out efficient deterministic approximation algorithms in general.
While volume estimation remains difficult for efficient deterministic approximation,
deterministic approximate counting algorithms have been quickly catching up with their randomised counterparts for many other problems,
since the introduction of the correlation decay technique \cite{weitz2006counting,BG06}.

By now, a number of different deterministic approximate counting techniques have been developed.
In addition to the correlation decay method, 
one may utilise zero-freeness of polynomials \cite{Bar16,PR17}, linear programming based methods \cite{Moi19,GLLZ19,JPV21}, statistical physics related techniques \cite{HPR20,JPP23,JPSS22}, 
or even direct derandomisation of Markov chains \cite{FGWWY23}.
In many occasions, these methods have achieved optimal results, or at least match or even outperform the best randomised algorithms,
such as for the hardcore gas model \cite{weitz2006counting,PR17}, for Holant problems with log-concave signatures \cite{HLQZ25}, or in the local lemma settings \cite{WY24}. 

Despite all these successes, there is one problem where deterministic algorithms are still lagging behind,
namely counting the number of proper colourings.
The study of this problem was initiated by Jerrum \cite{Jer95},
who showed a rapid mixing bound for Glauber dynamics for $q$-colourings on graphs of maximum degree $\Delta$, when $q>2\Delta$.
This was subsequently improved by Vigoda \cite{Vig00} to $q>11/6\Delta$ by considering the flip dynamics.
Via more careful analysis, the constant was then improved to $(11/6-\eps_0)$ for some $\eps_0\approx10^{-5}$ by Chen, Delcourt, Moitra, Perarnau, and Postle \cite{chen2019improved},
and to $1.809$ by Carlson and Vigoda \cite{CV24} for $\Delta\ge 125$.
In contrast, on the deterministic side, the first efficient algorithm by Gamarnik and Katz \cite{GK07} requires $q>2.844\Delta$ via the correlation decay method.
This was later improved by Lu and Yin \cite{LY13} to $q>2.581\Delta$.
With the more recent technique using zero-freeness of polynomials, Liu, Srivastava, and Sinclair \cite{LSS19} gave a fully polynomial time approximation schemes (FPTAS) when $q\ge 2\Delta$,
and this bound was very recently improved to $q\ge (2-\eps_1)\Delta$ for some $\eps_1\approx 0.002$ by Bencs, Berrekkal, and Regts \cite{BBR24}.

In this paper, we close this gap between deterministic and randomised algorithms for approximate counting colourings.
We introduce a new algorithm that takes inspiration from both the linear programming method and the correlation decay method.
We show that this algorithm is efficient as long as coupling independence holds and there is a marginal lower bound.
Here coupling independence is a method to establish the so-called spectral independence \cite{ALO20,AL20}, a relatively new tool to analyse mixing times of Markov chains.
The notion of coupling independence is formally introduced in \cite{CZ23}, although it has been implicitly established before that, such as in \cite{FGYZ22,blanca2022mixing,liu2021coupling}.
Previously, coupling independence is mainly used to analyse Markov chains,
and here we show that it also implies deterministic approximate counting algorithms.
We also show that contractive coupling for Markov chains can be used to establish coupling independence.
Thus, with our technique, the contractive couplings from \cite{chen2019improved,CV24} imply FPTASes with matching bounds.
We describe our main results in more detail in \Cref{sec:results},
and give a high-level technical overview in \Cref{sec:techniques}.


\subsection{Main results}\label{sec:results}

We state our main results in the general context of spin systems.
A spin system is specified by the tuple $\mathcal{S}=(G, q, A_E, A_V)$.
Given a graph $G=(V,E)$ and an integer $q>0$,
a state of the system is a configuration $\sigma: V\rightarrow [q]$.
Namely, the state space is $[q]^V$.
The weight of a configuration are characterised by the matrix $A_E\in\mathbb{R}^{q\times q}_{\ge 0}$ and vector $A_V\in\mathbb{R}^{q}_{\ge 0}$.
The Gibbs distribution $\mu$ of $\mathcal{S}$ is defined by
\begin{align*}
  \mu(\sigma) \propto w(\sigma) \defeq \prod_{\{u,v\}\in E}A_E(\sigma(u),\sigma(v))\prod_{v\in V}A_V(\sigma(v)).
\end{align*}
The normalising factor of $\mu$, namely the so-called partition function of $\mathcal{S}$, is defined by
\begin{align*}
  Z \defeq \sum_{\sigma \in [q]^V}w(\sigma).
\end{align*}
When $A_E= \begin{pmatrix}
  1 & 1 \\
  1 & 0
\end{pmatrix}$ and $A_v=  \begin{pmatrix}
  1  \\
  \lambda
\end{pmatrix}$, 
this encodes the hardcore gas model.
When $A_E=J-I$, where $J$ is the all-$1$ matrix and $I$ is the identity matrix, and $A_v=\*1$ is the all-$1$ vector,
$\mu$ is uniform over proper $q$-colourings,
and $Z$ is the number of them.

To introduce coupling independence, we need to define Hamming and Wasserstein distances.
For two configurations $\sigma$ and $\tau$, let their Hamming distance be
\begin{align*}
  \dist\{\sigma,\tau\}\defeq \abs{\{v\mid v\in V,~\sigma(v)\neq\tau(v)\}}. 
\end{align*}
The Wasserstein distance is defined next.
\begin{definition}  \label{def:Wasserstein}
  Let $(\Omega, \-d)$ be a finite metric space.
  For any two distributions $\mu$ and $\nu$ on $\Omega$, the \emph{$1$-Wasserstein distance} (\emph{W1-distance}) with respect to the metric $\-d$ between $\mu$ and $\nu$ is defined as
  \begin{align} \label{eq:def-W1-dis}
    \+W_{\-d}(\mu, \nu) := \inf_{\+C} \Ex_{(X,Y)\sim\+C}[\-d\tp{X,Y}],
  \end{align}
  where the infimum is taken over all the possible couplings $\+C$ between $\mu$ and $\nu$.
\end{definition}
For two $\Omega$-valued random variables $X, Y$ with distribution $\mu,\nu$, we may also use $\+W_{\-d}(X,Y)$ to denote $\+W_{\-d}(\mu,\nu)$.
When the distance $\-d$ is the Hamming distance, we also omit the subscript and write $\+W(\mu,\nu)$.

\begin{definition}[Coupling independence] \label{def:CI}
  We say a Gibbs distribution $\mu$ satisfies \emph{$C$-coupling independence} if, for any two partial configurations $\sigma$ and $\tau$ on $\Lambda\subseteq V$ such that $\dist\{\sigma,\tau\}=1$,
  \begin{align*}
    \+W(\mu^{\sigma},\mu^{\tau})\le C,
  \end{align*}
  where $\mu^{\sigma}$ and $\mu^{\tau}$ denote the Gibbs distribution conditional on $\sigma$ and $\tau$, respectively.
\end{definition}

In addition to coupling independence (CI), our main theorem also requires marginal lower bound.
Let $\mu^\sigma_v$ be the marginal distribution at $v$ conditional on $\sigma$.

\begin{definition}[Marginal lower bound]\label{def:marginal-bound}
  We say a Gibbs distribution $\mu$ over $[q]^V$ is $b$-marginally bounded if for any partial configuration $\sigma \in [q]^\Lambda$ on $\Lambda \subseteq V$, any vertex $v \notin \Lambda$, any spin $c \in [q]$ with $\mu^\sigma_v(c) >0$,
  \begin{align}
    \mu^\sigma_v(c) \geq b.
  \end{align} 
\end{definition}

Now we are ready to state our main theorem.

\begin{theorem}\label{thm:main}
  Let $q \geq 2,b > 0,C > 0, \Delta \geq 3$ be constants. There exists a deterministic algorithm such that given a permissive spin system $\mathcal{S}=(G,q,A_E,A_V)$ and error bound $0 < \eps < 1$, if the Gibbs distribution of $\mathcal{S}$ is $b$-marginally bounded and satisfies $C$-coupling independence, and the maximum degree of $G$ is at most $\Delta$, then it returns $\hat{Z}$ satisfying $(1-\eps)Z \leq \hat{Z} \leq (1+\eps)Z$ in time $(\frac{n}{\eps})^{f(q,b,C,\Delta)}$,
  where $f(q,b,C,\Delta)=\Delta^{O(C(\log b^{-1} + \log C +\log\log\Delta))} \log q$ is a constant.
\end{theorem}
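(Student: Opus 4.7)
My plan is to reduce counting to approximating marginals by the usual self-reducibility telescope and then design a recursive marginal estimator. Concretely, for any ordering $v_1,\ldots,v_n$ of $V$ and any reference configuration $\sigma^*\in[q]^V$ with $w(\sigma^*)>0$ (such a $\sigma^*$ exists by permissiveness),
\begin{align*}
\frac{w(\sigma^*)}{Z} \;=\; \prod_{i=1}^n \mu^{\sigma^*_{<i}}_{v_i}\bigl(\sigma^*(v_i)\bigr).
\end{align*}
The $b$-marginal bound forces each factor on the right-hand side to be at least $b$, so $(1\pm \eps/(3n))$-relative approximations to every conditional marginal suffice. Moreover, both $b$-marginal boundedness and $C$-coupling independence are preserved under conditioning, so it is enough to build a deterministic routine that, given any instance satisfying the hypotheses, a vertex $v$, and a target spin $c$, approximates $\mu_v(c)$ within additive error $\delta$ in time $(1/\delta)^{f(q,b,C,\Delta)}$.

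For this routine I would design a recursive algorithm in the spirit of the correlation-decay-via-LP expansion: expand $\mu_v(c)$ as a weighted sum over spin assignments on successively larger subsets $S\supseteq\{v\}$, and recurse on conditioned instances on $G-S$. Each step peels off a neighbour of the current frontier, records the local weight contributed by $A_E$ and $A_V$, and branches over that neighbour's possible spins; low-weight branches are pruned using the $b$-marginal bound so that the surviving recursion tree at depth $\ell$ has size $(q\Delta)^{O(\ell)}$. Because the structural hypotheses persist under conditioning, the same pruning rule applies uniformly throughout.

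The main obstacle, and technical heart of the proof, is the error analysis: showing that truncating the recursion at depth
\begin{align*}
\ell \;=\; O\bigl(C(\log b^{-1} + \log C + \log\log\Delta) + \log(1/\delta)\bigr)
\end{align*}
produces additive marginal error at most $\delta$. Classical correlation decay cannot be invoked directly because $C$-coupling independence is a global property of the Gibbs distribution, not a local per-vertex contraction. Instead I would argue an \emph{amortised} decay: couple the full recursion with its truncated counterpart using the optimal Wasserstein coupling between the two conditional Gibbs distributions that differ only on the truncation boundary, invoke $C$-coupling independence to bound the expected number of disagreements by $C$ per flipped boundary vertex, and convert a disagreement into a multiplicative deviation using the $b$-marginal bound. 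Summing these contributions along the boundary and iterating should yield a geometric contraction per depth level with rate controlled by $C$ and $\log b^{-1}$, producing the exponent in $f(q,b,C,\Delta)$. Combining the per-marginal error with the $n$-fold telescope and multiplying by the tree size then gives the claimed total running time $(n/\eps)^{f(q,b,C,\Delta)}$.
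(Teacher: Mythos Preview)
Your reduction to marginal estimation via self-reducibility is fine and matches the paper. The gap is in the marginal estimator itself and, crucially, in the error analysis.

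First, the algorithmic scheme you describe --- peeling off one neighbour of the current frontier at a time and branching over its spins --- is a Weitz-style correlation-decay tree. You correctly note that $C$-coupling independence is a \emph{global} statement (a bound on total Wasserstein distance), not a per-vertex contraction, so there is no reason why this tree should contract at each level. Your proposed fix, an ``amortised decay'' obtained by coupling the full and truncated recursions and invoking the $C$-CI bound on the truncation boundary, does not close this gap: $C$-CI tells you that flipping one boundary vertex perturbs the whole configuration by at most $C$ in expected Hamming distance, but it says nothing about how much of that perturbation lands at any particular distance from the flip, and your boundary can have $\Delta^\ell$ vertices. Summing $C$ over the boundary gives a bound that grows, not shrinks, with depth. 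To get a geometric decay you need an intermediate result: $C$-CI implies that the \emph{total influence at distance $\ell$} decays like $2C\cdot 2^{-\lceil \ell/(2C)\rceil}$ (the paper's Theorem~7). This is proved by a sphere-averaging argument --- out of every $2C$ consecutive spheres at least one carries total influence $\le 1/2$, and one iterates --- and it is not at all automatic from the definition of CI.

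Second, even granting such a decay, a bare branching recursion of the kind you sketch does not obviously exploit it. The paper's actual algorithm is structurally different: rather than peeling neighbours, it jumps directly to the sphere $S_R(u)$ for a carefully chosen radius $R$, writes a constant-size LP whose variables encode a \emph{partial} coupling of $\mu^\sigma$ and $\mu^\tau$ on $S_R(u)$ (in the style of Moitra, but with a new set of ``overflow constraints'' that bound the probability the coupling fails), and uses binary search over this LP to estimate the marginal ratio at $u$. The LP's leaf constraints require as inputs the marginal ratios at the bad leaves, and these are supplied recursively. The crux (their Lemma~10) is that if the input leaf ratios have relative error $\eps$, the LP output has relative error $\le \eps/2$, so $O(\log(n/\eps))$ levels suffice. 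This error-halving comes precisely from the overflow constraints combined with the total-influence-decay bound at radius $R$; your outline contains no mechanism of this kind. (Also, a minor point: you invoke the $b$-marginal bound to ``prune low-weight branches'', but $b$ is a \emph{lower} bound on marginals and does not justify pruning.)
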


Being permissive is a mild technical condition we need (see~\Cref{def:permissive}).
It roughly requires that all the conditional distributions are well-defined.
All applications considered in this paper satisfy it.
The marginal lower bound is also a mild requirement, since if we treat $q$, $\Delta$, $A_E$, and $A_V$ all as constants,
then there is a constant $b$ such that any permissive system is $b$-marginally bounded (see \Cref{obs:permissive-lb}).
However, for concrete systems, there are usually better lower bound than the generic one in \Cref{obs:permissive-lb}.
Thus, we choose to make it explicit in the statement of \Cref{thm:main}.
We also note that the exponent $f(q,b,C,\Delta)$ is not optimised -- our goal is to present the new algorithm as simple and as clearly as possible.

\Cref{thm:main} together with coupling independence from \cite{CLMM23} and the marginal lower bound \cite[Lemma 3]{GKM15}\footnote{The bound yields $b\ge q^{-1}\left( \frac{2}{3} \right)^{\Delta}$ in the setting of \Cref{cor:main-colouring-high}.} directly implies the following result.

\begin{corollary} [Colouring: high-girth graphs]  \label{cor:main-colouring-high}
  Let $q$ and $\Delta$ be two integers satisfying $\Delta \geq 3$ and $q \geq \Delta + 3$. There eixsts a constant $g_0 > 0$ depending only on $\Delta$ such that the following holds.   There exists an FPTAS for the number of proper $q$-colourings on graphs $G$ of maximum degree $\Delta$ and girth at least $g_0$.
\end{corollary}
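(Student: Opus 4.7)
The plan is to verify the three hypotheses of \Cref{thm:main}---permissiveness, $b$-marginal boundedness, and $C$-coupling independence---for the proper $q$-colouring spin system on graphs of bounded degree and large girth, with constants depending only on $q$ and $\Delta$.

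First I would set up the colouring model as a spin system by taking $A_E = J - I$ and $A_V = \*1$, so that $\mu$ is uniform on proper $q$-colourings and $Z$ counts them. Permissiveness is immediate: for any partial configuration $\sigma$ on $\Lambda$ and any uncoloured vertex $v$, at most $\deg(v) \leq \Delta$ colours are forbidden by its neighbours, and since $q \geq \Delta + 3$ there are at least three choices remaining at every step, so every partial proper colouring extends to a full proper colouring. For the marginal lower bound, I would invoke the reference given in the footnote, namely \cite[Lemma~3]{GKM15}, which yields $\mu^\sigma_v(c) \geq q^{-1}(2/3)^{\Delta}$ whenever $\mu^\sigma_v(c) > 0$; this gives $b = q^{-1}(2/3)^{\Delta} > 0$ as a constant depending only on $q$ and $\Delta$.

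The substantive ingredient is the coupling independence hypothesis, which I would import directly from \cite{CLMM23}. Their path-coupling--based analysis produces a one-step contractive coupling of Glauber dynamics for proper $q$-colourings on graphs of maximum degree $\Delta$ and girth at least some $g_0 = g_0(\Delta)$, whenever $q \geq \Delta + 3$. Iterating such a contraction is well known to yield an $O(1)$-bound on the $W_1$-distance $\+W(\mu^{\sigma},\mu^{\tau})$ between Gibbs measures conditioned on partial configurations differing at a single vertex; this is precisely $C$-coupling independence in the sense of \Cref{def:CI}, with a constant $C$ depending only on $q$ and $\Delta$. The girth threshold $g_0$ inherited from \cite{CLMM23} becomes the $g_0$ in the corollary.

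Finally, \Cref{thm:main} applied with these constants returns $\hat{Z}$ satisfying $(1-\eps)Z \leq \hat{Z} \leq (1+\eps)Z$ in time $(n/\eps)^{f(q,b,C,\Delta)}$, and since $q$, $b$, $C$, $\Delta$ are all constants in this regime, $f(q,b,C,\Delta) = O(1)$ and the algorithm is an FPTAS. The only step that requires any care is the translation from the coupling statement in \cite{CLMM23} to the Wasserstein formulation of \Cref{def:CI}; this is a routine reformulation, but it is where I would spend most of the writing effort to make sure the constants line up.
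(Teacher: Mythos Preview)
Your plan is essentially the same as the paper's: verify permissiveness, import the marginal lower bound from \cite[Lemma~3]{GKM15}, import coupling independence from \cite{CLMM23}, and apply \Cref{thm:main}. That is exactly what the paper does.

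One factual correction is worth making. You describe the input from \cite{CLMM23} as a ``one-step contractive coupling of Glauber dynamics'' obtained by path coupling, which you would then iterate to get an $O(1)$ Wasserstein bound. That is not what \cite{CLMM23} proves in the regime $q\ge\Delta+3$ on high-girth graphs; no contractive one-step coupling for Glauber dynamics is known there. Instead, \cite[Lemma~5.13]{CLMM23} builds a direct recursive coupling between $\mu^\sigma$ and $\mu^\tau$ (exploiting the local tree-like structure) and bounds the expected Hamming distance outright, which is already the statement of $C$-coupling independence in \Cref{def:CI}. So the ``translation'' step you flag as the main writing effort is in fact unnecessary: the Wasserstein bound is the content of their lemma, not a consequence of iterating a Markov-chain contraction. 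The route you sketch (contractive Markov chain $\Rightarrow$ CI) is the mechanism of \Cref{sec:contractive-coupling} and is used for \Cref{cor:main-colouring}, not for the high-girth corollary.
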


The bound in \Cref{cor:main-colouring-high} matches the rapid mixing result by Chen, Liu, Mani and Moitra~\cite{CLMM23}.
Prior to our work, no FPTAS is known in this setting.

To establish coupling independence for colourings in general bounded degree graphs, we make use of contractive coupling, a tool typically used to bound the mixing time of Markov chains.
Given two copies of a Markov chain, a contractive coupling ensures that after a step, the expected distance decreases multiplicatively.
See \eqref{eq:contract-1} and \eqref{eq:contract-2} for some examples.
Previously, contractive couplings have been used to establish spectral independence \cite{blanca2022mixing,liu2021coupling}.
In fact, their proof implicitly established the stronger result of coupling independence.
In \Cref{sec:contractive-coupling}, we give a simpler and more direct argument on how to establish coupling independence from contractive coupling.
Using the contractive couplings from \cite{CV24} and \cite{chen2019improved}, as well as the marginal lower bound \cite[Lemma 3]{LY13},\footnote{The bound yields $b\ge q^{-1}e^{-\frac{1}{\alpha-1}}$ in the setting of \Cref{cor:main-colouring}, where $\alpha$ is the constant in the assumption $q\ge \alpha\Delta$.} we have the following result.

\begin{corollary} [Colouring: general graphs]  \label{cor:main-colouring}
  Let $q$ and $\Delta$ be two integers satisfying either
  \begin{itemize}
    \item $\Delta \geq 125, q \geq 1.809\Delta$;
    \item or $\Delta \geq 3, q \ge (11/6-\eps_0)\Delta$ for some fixed parameter $\eps_0 \approx 10^{-5}$.
  \end{itemize}
  There exists an FPTAS for the number of proper $q$-colourings on graphs of maximum degree $\Delta$.
\end{corollary}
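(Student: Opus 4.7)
The plan is to apply \Cref{thm:main} to the Gibbs distribution $\mu$ of proper $q$-colourings on $G$ in each of the two regimes, so three hypotheses must be checked: permissiveness, the marginal lower bound, and $C$-coupling independence (bounded maximum degree is given by assumption).

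Permissiveness is immediate: since $q > \Delta$ in both regimes, for every partial colouring $\sigma$ on $\Lambda \subseteq V$ and every $v \notin \Lambda$ the set of available spins at $v$ is non-empty, so the colouring system is permissive in the sense of \Cref{def:permissive}. The marginal lower bound is supplied by \cite[Lemma 3]{LY13}: whenever $q \geq \alpha\Delta$ with $\alpha > 1$, every positive marginal is at least $q^{-1} e^{-1/(\alpha-1)}$, a positive constant. Both regimes satisfy $\alpha > 1$ (either $\alpha = 1.809$ or $\alpha = 11/6 - \eps_0$), so we obtain a constant $b = b(q,\Delta) > 0$.

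The main work is verifying $C$-coupling independence. The plan is to feed the one-step contractive couplings of \cite{chen2019improved} (for $\Delta \geq 3$, $q \geq (11/6 - \eps_0)\Delta$) and of \cite{CV24} (for $\Delta \geq 125$, $q \geq 1.809\Delta$) into the general contractive-coupling-implies-CI implication developed in \Cref{sec:contractive-coupling}. Concretely, for any two partial configurations $\sigma, \tau$ on $\Lambda \subseteq V$ with $\dist\{\sigma,\tau\} = 1$, we run two copies of the flip chain on $V \setminus \Lambda$ coupled according to the one-step contractive coupling, started from $\mu^\sigma$ and $\mu^\tau$ respectively, and use the contraction factor together with stationarity to bound $\+W(\mu^\sigma,\mu^\tau)$ by a geometric series whose sum is a constant $C = C(q,\Delta)$ independent of $n$. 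With $b$ and $C$ both constants depending only on $q$ and $\Delta$, \Cref{thm:main} applies directly and returns an $\eps$-approximation to $Z$ in time $(n/\eps)^{f(q,b,C,\Delta)}$ with $f$ constant, which is an FPTAS.

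The hard part is the coupling-independence step. The contractive couplings in \cite{chen2019improved,CV24} are designed for the unconditional flip chain on the full state space, and one must verify that the same one-step contraction continues to hold under an arbitrary frozen boundary $\Lambda$, that flip moves straddling the boundary are harmless, and that the expected disagreement created by a single-vertex discrepancy remains bounded before contraction takes over. This bookkeeping is exactly what \Cref{sec:contractive-coupling} is designed to handle; once it produces a constant $C$, the remaining deduction of the corollary from \Cref{thm:main} is mechanical.
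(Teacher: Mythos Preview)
Your plan is correct and matches the paper's: check permissiveness, invoke the marginal lower bound of \cite[Lemma~3]{LY13}, obtain coupling independence from the contractive couplings of \cite{chen2019improved,CV24} via the machinery of \Cref{sec:contractive-coupling} (specifically \Cref{thm:CI-11/6-coloring}), and then apply \Cref{thm:main}.

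One point your sketch of the CI step elides is worth flagging. You describe it as directly coupling the flip chain for $\mu^\sigma$ with the flip chain for $\mu^\tau$; the snag is that \Cref{prop:coupling-charlie} and \Cref{prop:coupling-sitan} guarantee contraction only for $X,Y$ in the support of the chain's own stationary distribution, and a proper list colouring for $(G^\sigma,L^\sigma)$ need not be proper for $(G^\tau,L^\tau)$ (a neighbour of $v$ may carry the colour $\tau(v)$). The paper's fix is to pass through an intermediate instance $(G',L')$ in which neighbours of $v$ receive the full list $[q]$, so that both $\mathrm{supp}(\mu^\sigma_{V'})$ and $\mathrm{supp}(\mu^\tau_{V'})$ sit inside $\mathrm{supp}(\mu')$; \Cref{lem:W1-via-coupling} is applied to each pair $(\mu^\sigma_{V'},\mu')$ and $(\mu',\mu^\tau_{V'})$, and the triangle inequality finishes. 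Since you explicitly hand this bookkeeping off to \Cref{sec:contractive-coupling}, your overall argument is fine, but this intermediate-distribution trick is the non-obvious move hiding behind that deferral.
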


The bounds in \Cref{cor:main-colouring} are the same ones as the rapid mixing of Markov chains results by Carlson and Vigoda \cite{CV24} or by Chen, Delcourt, Moitra, Perarnau and Postle~\cite{chen2019improved}. 
As mentioned before, prior to our work, the best FPTAS \cite{BBR24} requires $q\ge (2-\eps_1)\Delta$ for some $\eps_1\approx 0.002$.

\Cref{thm:main} also implies FPTASes for spin systems satisfying the \emph{Dobrushin-Shlosman} condition.

\begin{definition}[Dobrushin-Shlosman condition]\label{def:DS}
 Let $\mu$ be a Gibbs distribution on $[q]^V$.
 The \emph{Dobrushin influence matrix} $\rho \in \={R}^{V\times V}_{\geq 0}$ is defined by
 \begin{align}\label{eqn:defDob}
   \forall u,v\in V, \rho(u,v) \defeq \max_{\substack{\sigma,\tau \in [q]^{V-v} \\ \sigma \oplus \tau \subseteq \set{u}}} \dTV(\mu^\sigma_v,\mu^\tau_v),
 \end{align}
 where we use $\dTV(\cdot,\cdot)$ to denote the TV-distance.
 The Gibbs distribution $\mu$ is said to satisfy the \emph{Dobrushin-Shlosman condition} with gap $\delta \in (0,1)$ if 
\begin{align*}
\Vert \rho \Vert_1 = \max_{u \in V} \sum_{v \in V} \rho(u,v) \leq 1 - \delta.
\end{align*}
\end{definition}

\begin{corollary} \label{cor:main-DS}
  Let $q \geq 2$, $\Delta \geq 3$, $A_E \in \mathbb{R}_{\geq 0}^{q \times q}$, $A_V \in \mathbb{R}_{\geq 0}^q$, and $\delta \in (0,1)$ be constant parameters. 
  There exists an FPTAS for the partition function of permissive spin systems $\mathcal{S}=(G, q, A_E, A_V)$ 
  if $\mathcal{S}$ satisfies the Dobrushin-Shlosman condition with gap $\delta$ and the maximum degree of $G$ is at most~$\Delta$.
\end{corollary}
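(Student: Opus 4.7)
The plan is to derive \Cref{cor:main-DS} as a direct consequence of \Cref{thm:main}. Since bounded degree and permissiveness are already in the hypotheses, it suffices to verify the two remaining conditions---the marginal lower bound and coupling independence---with constants depending only on $q,\Delta,A_E,A_V,\delta$. For the marginal lower bound, \Cref{obs:permissive-lb} immediately provides a constant $b = b(q,\Delta,A_E,A_V) > 0$ such that every permissive spin system with these parameters is $b$-marginally bounded, so no additional work is required there.

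The real content is in establishing $C$-coupling independence with $C = O(1/\delta)$, via the classical Dobrushin uniqueness argument. Fix partial configurations $\sigma,\tau$ on $\Lambda\subseteq V$ differing at a single vertex $u\in\Lambda$. Since conditioning (pinning) can only decrease each entry of the influence matrix $\rho$ of \Cref{def:DS}, the Gibbs distribution conditional on any extension of $\sigma$ or $\tau$ still obeys the Dobrushin-Shlosman condition with gap at least $\delta$. I would construct a coupling $(X,Y)$ of $\mu^\sigma$ and $\mu^\tau$ on $V\setminus\Lambda$ as the stationary limit of the Glauber dynamics started from any common configuration and coupled via the optimal single-site coupling. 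Writing $\xi(v):=\Pr[X(v)\neq Y(v)]$ for $v\in V\setminus\Lambda$, a standard one-step calculation at stationarity gives the coordinatewise recursion
\begin{align*}
\xi(v)\;\leq\;\sum_{w\in V\setminus\Lambda}\rho(w,v)\,\xi(w) \;+\; \rho(u,v),
\end{align*}
using the chained Dobrushin bound $\dTV(\mu^\alpha_v,\mu^\beta_v)\leq \sum_{w:\,\alpha(w)\neq\beta(w)}\rho(w,v)$. Summing over $v\in V\setminus\Lambda$, interchanging the double sum, and applying the row-sum bound $\sum_{v}\rho(w,v)\leq 1-\delta$ gives $\delta\sum_{v}\xi(v)\leq 1-\delta$, hence $\+W(\mu^\sigma,\mu^\tau)\leq (1-\delta)/\delta$. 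Thus $\mu$ satisfies $C$-coupling independence with $C=(1-\delta)/\delta$, and \Cref{thm:main} supplies the required FPTAS with running time $(n/\eps)^{O(1)}$ because $q,\Delta,b,C$ are all constants.

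The main obstacle I foresee is producing a coupling that actually realises the recursion above: an arbitrary Wasserstein-optimal coupling of $\mu^\sigma$ and $\mu^\tau$ need not satisfy it pointwise, which is why I would work with the stationary distribution of the coupled Glauber chains. Justifying this step cleanly requires confirming that the coupled chains converge in joint distribution (which follows from the fact that each marginal chain mixes under Dobrushin-Shlosman by the classical path-coupling contraction $1-\delta/|V\setminus\Lambda|$ per step), and then passing the one-step inequality to the limit. The remaining ingredients---monotonicity of $\rho$ under pinning, the chaining of single-site Dobrushin bounds, and the $\ell_1$ manipulation---are routine.
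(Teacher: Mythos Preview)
Your proposal is correct and follows the same high-level route as the paper: invoke \Cref{thm:main}, get the marginal lower bound from \Cref{obs:permissive-lb}, and deduce coupling independence from the Dobrushin--Shlosman condition via the contractive Glauber coupling. The paper packages that last step differently. It first isolates a general lemma (\Cref{lem:W1-via-coupling}): if chains $P,Q$ have stationary laws $\mu,\nu$ and satisfy $\+W_{\-d}(P(X),Q(X))\le C$ together with $\+W_{\-d}(Q(X),Q(Y))\le(1-\delta)\,\-d(X,Y)$, then $\+W_{\-d}(\mu,\nu)\le C/\delta$; the proof is a two-line triangle inequality using stationarity. Applying this with $P,Q$ the Glauber dynamics for $\mu^\sigma,\mu^\tau$, a crude bound $\+W(P(X),Q(X))\le\Delta/n$ and the standard $\delta/n$-contraction give $C=\Delta/\delta$. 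Your direct recursion on the per-vertex disagreements $\xi(v)$ is the same mechanism unrolled pointwise, and actually yields the sharper constant: the paper's abstraction buys reusability (the same lemma drives the flip-dynamics application for colourings), while your computation buys a better bound. Two small remarks: you should add $1$ for the pinned disagreement at $u$, so the CI constant is $1/\delta$ rather than $(1-\delta)/\delta$; and the ``stationary coupling'' worry you flag is easily sidestepped by starting the coupled chain with $X_0\sim\mu^\sigma$, $Y_0\sim\mu^\tau$, so that $\+W(\mu^\sigma,\mu^\tau)\le\sum_v\xi_t(v)$ for every $t$, and then observing that the one-step recursion gives $\sum_v\xi_{t+1}(v)\le(1-\delta/n')\sum_v\xi_t(v)+(1-\delta)/n'$ with limit superior $(1-\delta)/\delta$.
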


The Dobrushin-Shlosman condition~\cite{dobrushin1970prescribing,dobrushin1985constructive} is a sufficient condition for the uniqueness of the Gibbs measure in infinite graphs. 
It is well-known that the Dobrushin-Shlosman condition implies rapid mixing of Glauber dynamics~\cite{BubleyD97,hayes2006simple}.
However, before our result, its implication on deterministic counting algorithms is not well-understood, especially for general multi-spin ($q > 2 $) systems. 

In addition, \Cref{thm:main} provides a unified framework to derive FPTASes for a few problems for which FPTASes are known before via different methods.
Similar to our method, all these FPTASes require a bounded maximum degree $\Delta$ for the input graph.
Examples in this category include $q$-colourings for triangle-free graphs if $q>1.764\Delta+C$ for some constant $C>0$ \cite{LSS19} (where CI is established in \cite{FGYZ22,CGSV21,CF24}),
antiferromagnetic two-state spin systems in the uniqueness regime \cite{SST14} (where CI is established in \cite{CF24,CLV23}),
ferromagnetic Ising models with non-zero external fields \cite{LSS19a} (where CI is established in \cite{CZ23}),
and Holant problems with log-concave signatures \cite{HLQZ25} (where CI is established in \cite{CG24}).\footnote{Technically Holant problems are not spin systems, but our technique generalises to that setting easily.}

After this paper appeared on the arXiv, Chen, Wang, Zhang, and Zhang \cite{CWZZ25} used \Cref{thm:main} to give an FPTAS for edge colourings when $q\ge 3\Delta$.

\subsection{Our technique} \label{sec:techniques}

We first show that coupling independence implies decay of total influences.
Here, the influence from $u$ to $v$ is defined as $\dTV({\mu^{\sigma}_v},{\mu^{\tau}_v})$ such that $\sigma$ and $\tau$ are \emph{partial configuations} that differ only at $u$. 
Note that this is different from Doburshin's influence defined in~\eqref{eqn:defDob}. See \Cref{def:TI-decay} for the formal definition of influences.
Suppose $C$-coupling independence holds. 
By a simple averaging argument, there is $1\le \ell\le 2C$ such that the total influence of a vertex $v$ to vertices of distance $\ell$ from $v$ is at most $1/2$.
We repeatedly use this fact to show that, roughly, for every $2C$ distance, the total influence decays multiplicatively by a factor $1/2$,
which leads to an exponential decay in the distance from $v$.
This is in \Cref{sec:influence-decay}.
Given the decay of total influence, we may choose a sufficiently large but constant $R$ such that the total influence at distance $R$ is less than a quantity that is linear in $R$. 
(To be precise, the requirement is given in \eqref{eqn:select-R}.)
This choice is a key guarantee in our error analysis.

Our main algorithm estimates marginal probabilities of a vertex $v$ under arbitrary conditioning.
The basic building block of our algorithm does this task but requires a constant number of other conditional marginal probabilities.
These marginal probabilities are either of the same vertex $v$ but with different boundary conditions at distance $R$,
or of some other vertices at distance $R$.
We use the aforementioned bounds for $R$ to show that the output relative error is roughly half of the relative error of the input marginals.
Thus, we achieved a recursive step with constant error decay.
To achieve an $\eps$ relative error, we just need to run this up to $\log\eps^{-1}$ depth,
resulting in polynomial total running time.

On a high level, the structure of our algorithm is very similar to the correlation decay algorithms such as \cite{weitz2006counting,BG06}.
Namely, in each step, we use the marginal probabilities in smaller instances to compute the desired one,
and the computation is truncated at logarithmic depth.
However, a key difference is that in previous algorithms, the recursive step is exact and often via a closed form formula, whereas ours is algorithmic and approximate.
This is achieved via a linear programming (LP) algorithm inspired by \cite{Moi19}.
We use the LP to simulate a coupling similar to the one in \cite{CLMM23},
which is to couple vertices randomly chosen at distance $R$ one at a time.
However, the coupling in \cite{CLMM23} is recursive, and it appears difficult for the LP to handle recursion.
We instead only use the LP to simulate a partial coupling up to the recursive point.
Because this is a partial coupling, certain quantities in the LP cannot be computed efficiently.
We rewrite these quantities in terms of marginal probabilities of smaller instances,
and use recursion to solve this issue.
To control the evolution of relative errors, we need to find some new linear constraints that can be computed efficiently and characterise the failure probability of the coupling process.
These constraints, together with our choice of the radius $R$, ensure that the relative error decay by a constant factor each time.
The description and analysis of our algorithm are given in \Cref{sec:alg}.
For readers not familiar with Moitra's approach,
we also provide some heuristics and intuition behind it in \Cref{sec:heuristics}.

Comparing with all previous instantiation of the LP-based approximate counting algorithm \cite{Moi19,GLLZ19,JPV21,WY24,HLQZ25},
we do not write a polynomial-sized LP to solve.
Instead, our LP is only of constant size, but we recursively construct polynomially many of them.
Technically, because each of our sub-instances has only constant size, we are not obliged to use LP. 
(For example, we could write a quadratic system to solve.)
We choose LP just for technical convenience.
Moreover, because the (partial) coupling we use the LP to simulate has not been considered in this context before,
we need to write a new set of constraints to certify this coupling.

The coupling in \cite{CLMM23} establishes coupling independence for high-girth graphs when $q\ge \Delta+3$.
Thus our algorithm works in this setting, resulting in \Cref{cor:main-colouring-high}.
To apply our algorithm on general bounded degree graphs, we still need to establish coupling independence.
To this end, we show that contractive coupling for Markov chains (which is the main technique behind the rapid mixing results of \cite{chen2019improved} and \cite{CV24}) 
can be used to establish coupling independence.
This argument is given in \Cref{sec:contractive-coupling}.

\section{Preliminaries}\label{sec:prelim}
Let $\mu$ over $[q]^V$ be a Gibbs distribution on $G=(V,E)$ with matrix $A_E$ and vector $A_V$. 
Recall that $\mu$ is defined by for any configuration $\sigma \in [q]^V$,
\begin{align*}
  \mu(\sigma) = \frac{w(\sigma)}{Z}, \text{ where } w(\sigma) = \prod_{\{u,v\}\in E}A_E(\sigma(u),\sigma(v))\prod_{v\in V}A_V(\sigma(v)) \text{ and } Z = \sum_{\tau \in [q]^V}w(\tau).
\end{align*}
We call $w(\sigma)$ the weight of $\sigma$ and $Z$ the partition function.

For any subset $S\subseteq V$, we use subscript $S$ to denote the marginal distribution,
namely, for any $\tau\in[q]^S$, $\mu_S(\tau)\defeq \sum_{\tau'\in[q]^{V\setminus S}}\mu(\tau\cup\tau')$.
When $S=\{v\}$ is a singleton set, we may also write $\mu_v$ instead of $\mu_{\{v\}}$.

Let $\sigma \in [q]^{V \setminus \Lambda}$ be a partial configuration on $V \setminus \Lambda$, where $\sigma$ is allowed to be infeasible, namely we allow $\mu_{V \setminus \Lambda}(\sigma) = 0$. 
For any $\tau \in [q]^V$, define the conditional weight
\begin{align*}
    w^\sigma(\tau) =\*{1}_{\tau(V \setminus \Lambda)=\sigma} \cdot \prod_{v \in \Lambda}A_V(\tau(v)) \prod_{\{u,v\}\in E: u \in \Lambda \land v \not\in \Lambda} A_E(\tau(u),\sigma(v)) \prod_{\{u,v\}\in E:u \in \Lambda \land v \in \Lambda} A_E(\tau(u),\tau(v)).
\end{align*}
Define the conditional distribution $\mu^\sigma$ over $[q]^V$ by for any $\tau \in [q]^V$,
\begin{align}\label{eq:cond}
    \mu^\sigma(\tau) = \frac{w^\sigma(\tau)}{Z^\sigma}, \text{ where } Z^\sigma = \sum_{\tau \in [q]^V}w^\sigma(\tau).
\end{align}
The above definition works for all partial configurations $\sigma$.
In particular, if $\sigma$ is a feasible partial configuration, then $\mu^\sigma$ is the distribution $\mu$ conditional on $\sigma$. 
Let $\mu^\sigma_S$ denote the marginal distribution on $S$ conditional on the partial configuration $\sigma$.

As mentioned before, when $A_E=J-I$, where $J$ is the all-$1$ matrix and $I$ is the identity matrix, and $A_v=\*1$ is the all-$1$ vector,
$\mu$ is uniform over proper $q$-colourings,
and $Z$ is the number of them.
For technical purposes, we also need to consider list colourings later,
where each $v$ may have a list $L_v$ of available colours, and no edge can be monochromatic.
This can also be modelled as a spin system, with $q$ being the total number of possible colours of all vertices, and $A_v$ encoding what colours are available for $v$.
Let $\mu$ be the uniform distribution over all proper colourings.
Our list colouring instances in fact come from $\mu^\sigma$ for some partial configuration $\sigma$.
This is because fixing a colour $c$ at a vertex $v$ is equivalent to removing the vertex $v$ and removing $c$ from the lists of all neighbours of $v$.
Note that if $q-\Delta\ge k$ for some $k$,
then for any list colouring instances obtained this way, for any $v$, $\abs{L_v}-\deg_G(v)\ge k$ as well.

More generally, in this paper, we consider the permissive spin systems so that all $\mu^\sigma$ are well-defined.

\begin{definition}\label{def:permissive}
  A spin system $\mu$ is \emph{permissive} if for any $\Lambda \subseteq V$, any $\sigma \in [q]^\Lambda$, $Z^\sigma > 0$.
\end{definition}
\begin{remark}
  All spin systems with soft constraints ($A_E(i,j)>0$ and $A_V(i) > 0$ for all $i,j \in[q]$) are permissive.  
  Many natural spin systems with hard constraints are also permissive. 
  For example, the hardcore model and any list colouring instance $(G,L)$ such that $|L_v| \geq \deg_G(v) + 1$ for any $v \in V$ are permissive.
\end{remark}

Note that in \Cref{def:permissive}, $\sigma$ is allowed to be infeasible,
as $w^\sigma$ does not consider the weight contributed from inside $\Lambda$.
This allows $\mu^{\sigma}$ to be well-defined, even for infeasible $\sigma$.
Our definition of coupling independence, \Cref{def:CI}, indeed allows infeasible partial configurations.
Essentially, for any (feasible or infeasible) $\sigma$ on $\Lambda\subset V$, only the values on the boundary of $\Lambda$ matters for $\mu^{\sigma}$,
and the values inside $\Lambda$ do not.

On the other hand, for any partial configuration $\sigma$ on $\Lambda$, if $\sigma$ is locally feasible, then $\sigma$ is also globally feasible. Formally,
\begin{align}\label{eq:local2global}
  \prod_{v \in \Lambda}A_V(\sigma(v))\prod_{\{u,v\}\in E: u\in \Lambda \land v \in \Lambda} A_E(\sigma(u),\sigma(v)) > 0 \quad \implies \quad \mu_\Lambda(\sigma) > 0.
\end{align}
We use $\-{supp}(\mu^\sigma_S)$ to denote the support of $\mu^\sigma_S$. Formally,
\begin{align*}
  \-{supp}(\mu^\sigma_S) := \set{\tau \in [q]^S \mid \mu^\sigma_S(\tau) > 0}.
\end{align*}
Note that by~\eqref{eq:local2global}, the set $\-{supp}(\mu^\sigma_S)$ is easy to compute as we only need to consider local assignments.

Let us also observe that permissive systems always have a marginal lower bound.

\begin{observation}\label{obs:permissive-lb}
  Let $q$, $\Delta$, $A_E$, and $A_V$ be constants.
  Then there is a constant $b=b_{q,\Delta,A_E,A_V}$ such that for any $G$ with maximum degree $\Delta$, any permissive spin system $(G,q,A_E,A_V)$ is $b$-marginally bounded.
\end{observation}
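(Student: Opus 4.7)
My approach is a local surgery argument: any full configuration $\tau$ extending $\sigma$ will be modified in a constant-size region around $v$ to produce a companion $\tau^c$ with $\tau^c(v) = c$ and weight comparable to $w^\sigma(\tau)$; permissiveness is exactly the tool that guarantees the modification always succeeds. Set $\alpha \defeq \min\{A_V(i) : A_V(i) > 0\}$, $\beta \defeq \min\{A_E(i,j) : A_E(i,j) > 0\}$, and $M \defeq \max\{A_V(i), A_E(i,j) : i, j \in [q]\}$, each positive and depending only on $q, A_V, A_E$. Let $F = V \setminus \Lambda$ and $B = \{v\} \cup (N(v) \cap F)$, so $|B| \le \Delta + 1$. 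Given any $\tau$ with $\tau|_\Lambda = \sigma$ and $w^\sigma(\tau) > 0$, first set $\tau^c|_\Lambda = \sigma$, $\tau^c(v) = c$, and $\tau^c = \tau$ on $F \setminus B$, yielding a partial configuration $\sigma^\star$ on $V \setminus (N(v) \cap F)$. By permissiveness $Z^{\sigma^\star} > 0$, so some $\tilde\tau \in [q]^{N(v) \cap F}$ has $w^{\sigma^\star}(\tilde\tau) > 0$; pick one by a fixed deterministic rule and put $\tau^c = \tilde\tau$ on $N(v) \cap F$. The hypothesis $\mu^\sigma_v(c) > 0$ forces $A_V(c) > 0$ and $A_E(c, \sigma(u)) > 0$ for every $u \in N(v) \cap \Lambda$; together with $w^{\sigma^\star}(\tilde\tau) > 0$ (which handles all factors of $w^\sigma(\tau^c)$ touching $N(v) \cap F$) and the factors inherited from $\tau$ outside $B$, this yields $w^\sigma(\tau^c) > 0$.

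Only the $\le \Delta + 1$ vertex factors and $\le (\Delta + 1)\Delta$ edge factors incident to $B$ differ between $w^\sigma(\tau)$ and $w^\sigma(\tau^c)$, and every such ratio is at most $M / \min(\alpha, \beta)$. Hence $w^\sigma(\tau) \le R \cdot w^\sigma(\tau^c)$ for the constant $R = (M / \min(\alpha, \beta))^{(\Delta+1)^2}$. The map $\tau \mapsto \tau^c$ is at most $q^{|B|} \le q^{\Delta+1}$-to-one, since $\tau^c$ determines $\tau|_{F \setminus B}$ and then the deterministic rule fixes $\tau^c|_{N(v) \cap F}$ from $\sigma^\star$, leaving only $\tau|_B$ free. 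Summing over $\tau$ with $\tau|_\Lambda = \sigma$ gives
\[
Z^\sigma \;\le\; q^{\Delta+1}\, R \sum_{\tau' : \tau'(v) = c} w^\sigma(\tau') \;=\; q^{\Delta+1}\, R \cdot Z^\sigma \cdot \mu^\sigma_v(c),
\]
so $\mu^\sigma_v(c) \ge b \defeq (q^{\Delta+1}\, R)^{-1}$, a positive constant depending only on $q, \Delta, A_V, A_E$.

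The main obstacle is guaranteeing $w^\sigma(\tau^c) > 0$ for every source $\tau$: permissiveness applied to $\sigma^\star$ is precisely what supplies the local extension $\tilde\tau$ once $\sigma$, $c$, and $\tau|_{F \setminus B}$ have been imposed, while $\mu^\sigma_v(c) > 0$ takes care of the factors linking $v$ to its $\Lambda$-neighbors. Without permissiveness the local extension could fail for certain $\tau$ even when each individual factor looks benign, and the comparison between $Z^\sigma$ and $\sum_{\tau'(v) = c} w^\sigma(\tau')$ would collapse.
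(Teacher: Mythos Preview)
Your proof is correct, and it takes a genuinely different route from the paper's. The paper conditions on the set $S$ of 2-hop neighbours of $v$ outside $\Lambda$; once these are pinned, the Markov property isolates $v$ and its free neighbours into a component of size at most $\Delta+1$, so $\mu^{\sigma\cup\tau}_v(c)$ is determined by a bounded local picture. The bound $b$ is then obtained nonconstructively as the minimum of $\mu^{\sigma\cup\tau}_v(c)$ over the finitely many local configurations that can arise. Your argument, by contrast, is a switching argument: you build an explicit weight-preserving (up to the constant $R$) map $\tau\mapsto\tau^c$ into the set $\{\tau':\tau'(v)=c\}$ with bounded fibres, using permissiveness to fill in $N(v)\cap F$ against the boundary data $\sigma^\star$ that includes the already-chosen $v\gets c$. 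This yields an explicit formula $b=(q^{\Delta+1}R)^{-1}$ with $R=(M/\min(\alpha,\beta))^{(\Delta+1)^2}$, whereas the paper only asserts that some constant exists. The paper's proof is shorter and more conceptual (implicitly leaning on conditional independence), while yours is more hands-on and quantitative; both lean on permissiveness at the same essential point, namely to guarantee a feasible extension once the value at $v$ has been forced to $c$.
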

\begin{proof}
  Given a partial configuration $\sigma$ on $\Lambda$ and a vertex $v\in V\setminus \Lambda$,
  let $S$ be the set of $2$-hop neighbours of $v$ that is not in $\Lambda$.
  For any $c\in\-{supp}(\mu^\sigma_v)$, we have that
  \begin{align*}
    \mu^{\sigma}_v(c) = \sum_{\tau\in \-{supp}(\mu^{\sigma}_S)} \mu^{\sigma}_S(\tau) \mu^{\sigma\cup\tau}_v(c).
  \end{align*}
  For any $c\in\-{supp}(\mu^\sigma_v)$ and $\tau\in \-{supp}(\mu^{\sigma}_S)$, 
  as $S$ is not adjacent to $v$ and the system is permissive,
  $\mu^{\sigma\cup\tau}_v(c)>0$.
  As both $\-{supp}(\mu^{\sigma}_S)$ and $\-{supp}(\mu^\sigma_v)$ are finite,
  there is a minimum of $\mu^{\sigma\cup\tau}_v(c)$, say $b$, over all choices of $\tau\in\-{supp}(\mu^{\sigma}_S)$ and $c\in \-{supp}(\mu^\sigma_v)$.
  Note that this $b$ may depend on $q$, $\Delta$, $A_E$, and $A_V$.
  It implies that
  \begin{align*}
    \mu^{\sigma}_v(c) &\ge b\sum_{\tau\in \-{supp}(\mu^{\sigma}_S)} \mu^{\sigma}_S(\tau) = b.\qedhere
  \end{align*}
\end{proof}

\section{Total influence decay}
\label{sec:influence-decay}

Given a graph $G=(V,E)$,
let $B_\ell(v)$ denote the ball of radius $\ell$ centered at $v$.
Namely, $B_{\ell}(v)\defeq\{u\mid u\in V,~\dist_G(u,v)\le \ell\}$,
where the distance is graph distance in $G$.
Let $S_\ell(v)$ denote the sphere of radius $\ell$ centered at $v$,
namely, $S_{\ell}(v)\defeq\{u\mid u\in V,~\dist_G(u,v)= \ell\}$.
In other words, $S_{\ell+1}(v)=\partial B_{\ell}(v)$, where $\partial B_\ell (v)$ denote the out-boundary of $B_\ell(v)$.

\begin{definition}[Total influence decay] \label{def:TI-decay}
Let $\delta: \mathbb{N} \to \mathbb{R}$ be a non-increasing function. We say  a Gibbs distribution $\mu$ satisfies \emph{total influence decay} with rate $\delta$ if for any two partial configurations $\sigma$ and $\tau$ on $\Lambda\subseteq V$ such that they disagree only on some $v \in \Lambda$, for any integer $\ell > 0$,
\begin{align}\label{eqn:TI-decay}
  \sum_{u \in S_{\ell}(v)}\dTV(\mu_u^\sigma,\mu_u^\tau) \leq \delta(\ell).
\end{align}
\end{definition}

The influence defined in~\eqref{eqn:TI-decay} was often used in recent works for spectral independence~\cite{ALO20,FGYZ22,CGSV21}.
The definition in~\eqref{eqn:TI-decay} should be distinguished from the definition of Dobrushin's influence in~\eqref{eqn:defDob}.
The following theorem shows that the coupling independence implies total influence decay.

\begin{theorem}  \label{thm:CI->sphere-CI}
  Suppose $\mu$ satisfies \emph{$C$-coupling independence}.
  Let $R>0$ be an integer.
  For any two partial configurations $\sigma$ and $\tau$ on $\Lambda\subseteq V$ such that they disagree only on some $v\in\Lambda$,
  there is a coupling $\+C$ between $\mu^{\sigma}$ and $\mu^{\tau}$ such that
  \begin{align*}
    \Ex_{(\sigma',\tau')\sim\+C}\left[ \sum_{u\in S_{R}(v)} \*{1}_{\sigma'(u)\neq\tau'(u)} \right] \le 2C\left(\frac{1}{2}\right)^{\ceil{\frac{R}{2C}}}.
  \end{align*}  
  As a consequence,  $\mu$ satisfies total influence decay with rate $\delta(x) = 2C\cdot 2^{-\ceil{\frac{x}{2C}}}$.
\end{theorem}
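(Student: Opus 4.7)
The plan is to prove by induction on $m$ the following stronger statement: for every integer $m\ge 0$ and every pair $(\sigma,\tau)$ of partial configurations on some $\Lambda\ni v$ differing only at $v$, there is a coupling $\+D$ of $\mu^\sigma$ and $\mu^\tau$ such that
\begin{align*}
  \Ex_{(\sigma',\tau')\sim\+D}\sqtp{\,\abs{\set{u\in V\setminus B_m(v): \sigma'(u)\neq\tau'(u)}}\,} \le C\cdot 2^{-\floor{m/(2C)}}.
\end{align*}
Because $S_R(v)\subseteq V\setminus B_{R-1}(v)$ and $\floor{(R-1)/(2C)}=\ceil{R/(2C)}-1$, specialising to $m=R-1$ immediately yields the claimed $S_R(v)$ bound $2C\cdot 2^{-\ceil{R/(2C)}}$.

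The base case $0\le m<2C$ is immediate from $C$-coupling independence, which already delivers a coupling of total expected Hamming distance at most $C$. For the inductive step at $m\ge 2C$, I begin again with the CI coupling. Since the expected total Hamming disagreement splits across the concentric spheres $S_1(v),S_2(v),\ldots$ and sums to at most $C$, some $\ell\in\set{1,\ldots,2C}$ must enjoy expected disagreement at most $1/2$ on $S_\ell(v)$. Let $\pi$ denote the marginal of the CI coupling on $S_\ell(v)\times S_\ell(v)$; this is a coupling of $\mu^\sigma_{S_\ell(v)}$ and $\mu^\tau_{S_\ell(v)}$ whose expected Hamming distance is at most $1/2$.

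I then extend $\pi$ to a coupling $\+D$ of the full distributions using the Gibbs/Markov property together with path coupling. For each sample $(x,y)\sim\pi$, couple $B_{\ell-1}(v)$ arbitrarily; its disagreements are irrelevant to the count on $V\setminus B_m(v)$. By the Markov property at $S_\ell(v)$, and because $\sigma,\tau$ agree outside $v\in B_{\ell-1}(v)$, the conditional outside distributions become $\mu^{\rho\cup x}_{V\setminus B_\ell(v)}$ and $\mu^{\rho\cup y}_{V\setminus B_\ell(v)}$, where $\rho$ is the common restriction of $\sigma,\tau$ to $\Lambda\setminus B_{\ell-1}(v)$. Interpolate $x=z_0,z_1,\ldots,z_k=y$ by single-vertex flips at vertices $w_i\in S_\ell(v)$, with $k=\abs{\set{u:x(u)\neq y(u)}}$. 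Applying the inductive hypothesis to each pair $(\rho\cup z_i,\rho\cup z_{i+1})$ at centre $w_i$ with parameter $m-\ell$ yields a coupling whose expected disagreement on $V\setminus B_{m-\ell}(w_i)$ is at most $C\cdot 2^{-\floor{(m-\ell)/(2C)}}$. Since $w_i\in S_\ell(v)$, the triangle inequality in $G$ gives $V\setminus B_m(v)\subseteq V\setminus B_{m-\ell}(w_i)$, so the same bound applies to disagreements on $V\setminus B_m(v)$. Chaining the $k$ transitions by the Wasserstein triangle inequality and averaging over $(x,y)\sim\pi$, I obtain expected disagreement on $V\setminus B_m(v)$ at most $\Ex_\pi[k]\cdot C\cdot 2^{-\floor{(m-\ell)/(2C)}}\le \frac12\cdot C\cdot 2^{-\floor{m/(2C)}+1}=C\cdot 2^{-\floor{m/(2C)}}$, closing the induction.

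I expect the main obstacle to be the bookkeeping around the Markov decomposition and the path coupling, rather than any deep new idea. Permissiveness is what makes $\mu^{\rho\cup z_i}$ well-defined for every intermediate $z_i$ (which may be locally infeasible), so that the inductive hypothesis applies verbatim, and the Wasserstein triangle inequality for counting metrics is the standard construction that chains the $k$ single-flip couplings into one honest coupling of $\mu^{\rho\cup x}$ and $\mu^{\rho\cup y}$ realising the claimed expectation. With these two points handled, what remains is the averaging observation and the geometric-series iteration.
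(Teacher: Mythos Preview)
Your proposal is correct and follows essentially the same strategy as the paper's proof: use the CI coupling, average over the first $2C$ spheres to locate one with expected disagreement at most $1/2$, then recurse on the outside after interpolating the boundary disagreement through single-vertex flips and invoking the Wasserstein triangle inequality. The paper phrases this as an explicit $T$-step layer-by-layer coupling (with $T=\lceil R/(2C)\rceil$), whereas you package the same argument as an induction on $m$; the arithmetic and the role of permissiveness for the intermediate pinnings $\rho\cup z_i$ are identical.
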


\begin{proof}
  Let $T \defeq \ceil{\frac{R}{2C}}$. 
  We construct $\+C$ via a $T$-step coupling procedure as follows. The whole procedure will generate a random pair $X,Y \in [q]^V$ such that $X \sim \mu^\sigma$ and $Y \sim \mu^\tau$.

\textbf{Step 1.} Since $\mu$ satisfies $C$-coupling independence, there exists a coupling $\+C_1$ of $\mu^\sigma$ and $\mu^\tau$ such that $\sum_{w \in V}\Pr_{(\sigma',\tau')\sim \+C_1}[\sigma'(w) \neq \tau'(w)] \leq C$. By an averaging argument, there exists an integer $1 \leq \ell(1) \leq 2C$ such that the expected disagreement on the sphere $S_{\ell(1)}(v)$ is at most $\frac{C}{2C} = \frac{1}{2}$, i.e.,
  \[\sum_{w \in S_{\ell(1)}(v)}\Pr_{(\sigma',\tau')\sim \+C_1}[\sigma'(w) \neq \tau'(w)] \leq \frac{1}{2}.
  \]
If there are multiple choices of such $\ell(1)$, pick the smallest one.
Denote $B_1 = B_{\ell(1)}(v)$ and $S_1 = S_{\ell(1)}(v)$.
We use the coupling $\+C_1$ to draw a pair of random configurations and project them to the set $D = B_1 \setminus \Lambda$. We then extend $\sigma,\tau$ on $\Lambda$ to a pair of random configurations $X,Y$ on $\Lambda \cup B_1$. Formally, $X(\Lambda) = \sigma$, $Y(\Lambda) = \tau$, and
\begin{align}\label{eq:ex-1}
  X(D) \sim \mu^{\sigma}_{D}, Y(D) \sim \mu^{\tau}_{D}, \text{ and } \Ex[ \dist\{X(S_1),Y(S_1)\}]\leq \frac{1}{2}.  
\end{align}

\textbf{Step $k$ for $1 < k < T$.} Suppose we have obtained a pair of partial configurations $X$ and $Y$ on $B_{k-1} \cup \Lambda$, where $B_{k-1} = B_{\ell(k-1)}(v)$ and $\ell(k-1) \geq 1$ is an integer. 
Next we extend the two configurations $X,Y$ onto a larger set $B_k \cup \Lambda$, where $B_k = B_{\ell(k)}(v)$ for some $\ell(k) > \ell(k-1)$. 
Let $H = V \setminus B_{k-1}$ be the set of vertices outside the ball $B_{k-1}$.  
Let $\Lambda_H = \Lambda \cap H$.
By conditional independence, $\mu^X_H = \mu^{X(\Lambda_H\cup S_{k-1})}_H$ and $\mu^Y_H = \mu^{Y(\Lambda_H\cup S_{k-1})}_H$.
The two pinnings $X(\Lambda_H\cup S_{k-1})$ and $Y(\Lambda_H\cup S_{k-1})$ can disagree only at the set $S_{k-1}=S_{\ell(k-1)}(v)$, the sphere of radius $\ell(k-1)$ centered at $v$. 
Suppose the disagreements of $X(S_{k-1})$ and $Y(S_{k-1})$ can be listed as $v_1,v_2,\ldots,v_m \in S_{k-1}$, where $m = \dist\{X(S_{k-1}),Y(S_{k-1})\}$. We can define a sequence of pinnings $\sigma_0,\sigma_1,\ldots,\sigma_m$ on $S_{k-1}\cup \Lambda_H$ such that $\sigma_0 = X(\Lambda_H \cup S_{k-1})$, $\sigma_i$ is obtained from $\sigma_{i-1}$ by changing the value at $v_i$ from $X(v_i)$ to $Y(v_i)$, and so $\sigma_m = Y(\Lambda_H \cup S_{k-1}) $. By the coupling independence, for any $i \in [m]$, one can couple $\mu^{\sigma_i}$ and $\mu^{\sigma_{i-1}}$ such that the expected hamming distance is at most $C$.\footnote{Two pinnings $\sigma_i,\sigma_{i-1}$ can be improper partial configurations but conditional distributions $\mu^{\sigma_i}$ and $\mu^{\sigma_{i-1}}$ are defined in~\eqref{eq:cond}. The coupling independence condition holds for all possible pinnings including infeasible ones.} 
By the triangle inequality of the Wasserstein distance, 
we can couple $\mu^{\sigma_0}$ and $\mu^{\sigma_m}$ so that the expected Hamming distance is at most $mC$.
By projecting this coupling into the subset $H$, we obtain a coupling $\+C_k$ of $\mu_H^X = \mu_{H}^{\sigma_0}$ and $\mu_H^Y = \mu_{H}^{\sigma_m}$ such that $\Ex_{(\sigma_k,\tau_k)\sim \+C_k}[\dist\{ \sigma_k,\tau_k\}]\leq mC$. 
Again, by an averaging argument, there exists $\ell(k)$ such that $\ell(k-1) + 1 \leq \ell(k) \leq \ell(k-1) + 2C$ and 
\begin{align*}
  \sum_{w \in S_{\ell(k)}(v)}\Pr_{(\sigma_k,\tau_k)\sim \+C_k}[\sigma_k(w) \neq \tau_k(w)] \leq \frac{mC}{2C} = \frac{m}{2}=\frac{\dist\{X(S_{k-1}),Y(S_{k-1})\}}{2}. 
\end{align*}
If there are multiple choices of such $\ell(k)$, pick the smallest one.
Let $B_k = B_{\ell(k)}(v)$ and $S_k = S_{\ell(k)}(v)$. Similar to Step 1, we use the coupling $\+C_k$ to draw a pair of configurations $(\sigma_k,\tau_k)$. 
Let $D_k = B_k \setminus (B_{k-1} \cup \Lambda)$. We further extend $X,Y$ to the set $D_k$ by setting $X(D_k) = \sigma_k(D_k)$ and $Y(D_k) = \tau_k(D_k)$. We have 
\[X(D_k)\sim \mu^{X(B_{k-1}\cup \Lambda)}_{D_k},   Y(D_k)\sim \mu^{Y(B_{k-1}\cup \Lambda)}_{D_k}.\]
In other words, $X$ and $Y$ are now partial configurations on set $B_{k-1} \cup \Lambda \cup D_k = B_{k} \cup \Lambda$. It holds that
\begin{align}\label{eq:ex-2}
\Ex[ \dist\{X(S_k),Y(S_k)\} \mid X(B_{k-1} \cup \Lambda), Y(B_{k-1} \cup \Lambda),\ell(k-1)]\leq \frac{\dist\{X(S_{k-1}),Y(S_{k-1})\}}{2}. 
\end{align}

\textbf{Step $T$.} The last step is similar to the general step $k$.
Assume $X,Y$ are two fixed configurations on $B_{T-1} \cup \Lambda$, where $B_{T-1} = B_{\ell(T-1)}(v)$ and $\ell(T-1)$ is a fixed integer.
Let $H = V \setminus B_{T-1}$.
Let $\+C_T$ be the coupling obtained in the same way as the general step from the triangle inequality.
We use the coupling $\+C_T$ to sample a pair of configurations $(\sigma_T,\tau_T)$.
We then further extend $X,Y$ to the set $D_T = V \setminus (B_{T-1} \cup \Lambda)$ by setting $X(D_T) = \sigma'(D_T)$ and $Y(D_T) = \tau'(D_T)$. We have
\begin{align*}
  X(D_T)\sim \mu^{X(B_{T-1}\cup \Lambda)}_{D_T},   Y(D_T)\sim \mu^{Y(B_{T-1}\cup \Lambda)}_{D_T}. 
\end{align*}
Note that $\ell(T - 1) \leq 2C(T-1) = 2C(\ceil{\frac{R}{2C}}-1) < R$, so that $S_R(v) \subseteq H$. It holds that
\begin{align}\label{eq:ex-3}
  &\Ex[ \dist\{X(S_R(v)),Y(S_R(v))\} \mid X(B_{T-1} \cup \Lambda), Y(B_{T-1} \cup \Lambda),\ell(T-1)]\notag\\
  \leq\,& C \cdot {\dist\{X(S_{T-1}),Y(S_{T-1})\}}.
\end{align}
The above inequality holds because the expected disagreement on $S_R(v)$ is at most the total expected disagreement produced by $\+C_T$, which is at most  $C\cdot{\dist\{X(S_{T-1}),Y(S_{T-1})\}}$. 

We now show that the above procedure is a valid coupling between $\mu^\sigma$ and $\mu^\tau$. The procedure can be viewed as follows. Initially, $X$ and $Y$ are partial configurations on $D_0 = \Lambda$ with $X = \sigma$ and $Y = \tau$. In the $k$-th step, we extend $X,Y$ to a new set $D_k$. Hence $X,Y$ are partial configurations on $\cup_{i \leq k}D_i$ after the $k$-th step. The tricky part here is that all $D_1,D_2,\ldots,D_T$ are random variables. The following property is the key to proving the correctness of the coupling:
for every step $1 \leq k \leq T$, the coupling satisfies
\begin{itemize}
  \item given $(D_i)_{i < k}$, $X(D_{0:k-1})$, and $Y(D_{0:k-1})$, the set $D_k$ is fixed, where $D_{0:k-1}=\cup_{0\leq i \leq k-1}D_i$;
  \item $X(D_k) \sim \mu_{D_k}^{X(D_{0:k-1})}$ and $Y(D_k) \sim \mu_{D_k}^{Y(D_{0:k-1})}$.
\end{itemize}
This property is easy to verify from the construction of $X$ and $Y$.

Next we show $X \sim \mu^\sigma$. A similar proof shows that $Y \sim \mu^\tau$.
In every step, we sample $X(D_k)$ and $Y(D_k)$ jointly from a coupling. This can be viewed as a two-step process.
We first sample $X(D_k)$, and given the value of $X(D_k)$, the coupling specifies a conditional distribution of $Y(D_k)$. 
Then we sample $Y(D_k)$ from this conditional distribution using independent randomness $\+R_k$ (e.g., $\+R_k$ can be a uniform real number in $(0,1)$). We fix the randomness $\+R_1,\+R_2,\ldots,\+R_T$. For any configuration $\rho \in [q]^V$, we compute the probability that $X = \rho$. Initially, $D_0 = \Lambda$, $X(D_0)=\sigma$ and $Y(D_0) = \tau$. Hence, $X = \rho$ only if $\rho(\Lambda) =\sigma$. In the first step, by the property above, $D_1$ is fixed, and $X = \rho$ implies that $X(D_1) = \rho(D_1)$, which happens with probability $\mu^{\rho(D_0)}_{D_1}(\rho(D_1))$. Also, $Y(D_1)$ is fixed because $X(D_1) = \rho(D_1)$ and $\+R_1$ is fixed. By induction, in the $k$-th step, $D_k$ is fixed, $X(D_k)=\rho(D_k)$ with probability $\mu^{\rho(D_{0:k-1})}_{D_k}(\rho(D_k))$, and $Y(D_k)$ is fixed by $X(D_k)$ and $\+R_k$. By the chain rule,  
\begin{align*}
  \Pr[X = \rho \mid \+R_1,\+R_2,\ldots,\+R_T] = \*{1}_{\rho(\Lambda) =\sigma} \cdot \prod_{k=1}^T\mu^{\rho(D_{0:k-1})}_{D_k}(\rho(D_k)) = \mu^\sigma(\rho).
\end{align*}
Taking expectation over $\+R_1,\+R_2,\ldots,\+R_T$ in both sides shows that $X \sim \mu^\sigma$. 

Finally, using ~\eqref{eq:ex-1}, \eqref{eq:ex-2}, and \eqref{eq:ex-3}, the expected disagreement on $S_R(v)$ can be bounded as
\begin{align*}
  \Ex[ \dist\{X(S_R(v)),Y(S_R(v))\}] &\leq C  \Ex[\dist\{X(S_{T-1}),Y(S_{T-1})\}] \\
  &\leq \frac{C}{2^{T-2}} \Ex[\dist\{X(S_{1}),Y(S_{1})\}]\\
  &\leq \frac{C}{2^{T-1}}.
\end{align*}
The total influence decay consequence follows from the coupling inequality.
%
\end{proof}

\section{A recursive marginal estimator}\label{sec:alg}


We show \Cref{thm:main} in this section.
As in the condition of \Cref{thm:main}, throughout the section, we assume that the underlying graphs $G$ for the Gibbs distributions have constant maximum degree~$\Delta$.
Moreover, we assume a marginal lower bound $0<b<1$ as in \Cref{def:marginal-bound}, and exponential total influence decay, namely the bound in \eqref{eqn:TI-decay} with $\delta(\ell)=\exp(-\Omega(\ell))$. 

Our algorithm requires a parameter $R$, the radius at which we couple vertices.
For an integer $k\ge 0$, let $H(k)$ be the harmonic sum defined by $H(k) \defeq \sum_{i=1}^k \frac{1}{i}$ with the convention that $H(0)=0$.
We choose a sufficiently large integer $R$ such that
\begin{align} \label{eqn:select-R}
  30\delta(R) H(\Delta^R) < b^4.
\end{align}
By \Cref{thm:CI->sphere-CI},
we can take $\delta(R)=2C 2^{-\ceil{\frac{R}{2C}}}$,
where $C$ is the coupling independence constant.
Since $H(\Delta^R) = O(R\log\Delta)$, some $R=O(C(\log b^{-1}+\log C+\log\log\Delta))$ suffices.

In \Cref{sec:rec-marginal-resolver}, we introduce an LP based algorithm that takes as inputs estimated ratios of marginal probabilities of some partial configurations,
and outputs an estimation of the marginal ratio of a particular vertex, with a better approximation guarantee.
The error analysis of this algorithm is given in \Cref{sec:analysis}.
These input ratios can be written as the product of two ratios, each of which is regarding a single vertex.
Thus, we have a recursive algorithm to estimate the marginal probability, described in \Cref{sec:determistic-algo}.

\newcommand\paradelta{\ensuremath{b^{-1}\delta(R)}}
\subsection{Marginal estimation via linear programming} \label{sec:rec-marginal-resolver}
Suppose we want to estimate the marginal probability of some vertex $u$.
It is equivalent to approximate the ratio between marginals where any two different values are assigned to $u$.
The basic building block of our algorithm is an estimator which takes estimations of marginal ratios of certain partial configurations,
and outputs an estimation of the marginal ratio for $u$ with a better approximation guarantee.
We construct a linear program, similar to the one used by Moitra \cite{Moi19}, to certify a coupling between assigning $u$ to two different values.
However, the ``coupling'' we choose is different, and is inspired by the one used by Chen, Liu, Mani, and Moitra \cite{CLMM23}.
For readers not familiar with Moitra's approach,
we provide some heuristics and intuition behind it in \Cref{sec:heuristics}.

Given two partial configurations $\sigma$ and $\tau$ on $\Lambda\subseteq V$ which differ at only one vertex $u\in \Lambda$,
we want to couple $\mu^\sigma$ and $\mu^\tau$ by coupling vertices in $S_R(u)=\partial B_R(u)$ for some radius $R>0$.
We choose a vertex $v$ from $S_R(u)$ uniformly at random and couple it optimally (in the sense of TV distance) between its two marginal distributions.
If the coupling failed, we immediately stop the whole process, and otherwise we continue to couple the next randomly chosen vertex.
The process also ends when all of $S_R(u)$ has been considered.
One may have noticed that this is not a complete coupling, but rather a partial one.
Formally it is described in \Cref{alg:coupling}.
For $\sigma\in[q]^\Lambda$ and $v\not\in \Lambda$, the notation $\sigma_{v\gets c}$ means a partial configuration which agrees with $\sigma$ on $\Lambda$ and assigns $c$ to $v$.

\begin{algorithm}
    \caption{The partial coupling}
    \label{alg:coupling}    
    \SetKwInOut{Input}{Input}
    \SetKwInOut{Output}{Output}
    \underline{Partial-Coupling} $(\sigma,\tau)$\;
    \Input{Partial configurations $\sigma,\tau \in [q]^\Lambda$ that only differ at a vertex $u \in \Lambda$}
    \Output{A pair of partial configurations $\sigma'$ and $\tau'$ over some $\Lambda'\supseteq \Lambda$}
    $\sigma'\gets\sigma$, $\tau'\gets\tau$\;
    \While{$S_R(u)\setminus \Lambda\neq \emptyset$}{
      Choose $v$ from $S_R(u)\setminus \Lambda$ uniformly at random\;
      Draw $(c_1,c_2)$ from the optimal coupling between $\mu_v^{\sigma'}$ and $\mu_v^{\tau'}$\label{line:TV-coupling}\;
      $\sigma'\gets\sigma'_{v\gets c_1}$, $\tau'\gets\tau'_{v\gets c_2}$\;
      $\Lambda \gets \Lambda \cup \{v\}$\;
      \If{$c_1\neq c_2$}
      {\Return $(\sigma',\tau')$\;}
    }
    \Return $(\sigma',\tau')$\;
\end{algorithm}

Note that our algorithm does not really need to construct the partial coupling described in \Cref{alg:coupling}.
Instead, we construct linear programs that mimic the coupling process.
We start by defining the coupling tree $\+T$ with root $rt$. 
This tree essentially enumerates all possible intermediate states of \Cref{alg:coupling}.
Formally the coupling tree is constructed by \Cref{alg:coupling-tree}.

\begin{algorithm}
    \caption{Construction of the coupling tree}
    \label{alg:coupling-tree}
    \SetKwInOut{Input}{Input}
    \SetKwInOut{Output}{Output}
    \SetKwInOut{Parameter}{Parameter}
    \underline{Coupling-Tree} $(\sigma,\tau)$\;
    \Input{Partial configurations $\sigma,\tau \in [q]^\Lambda$ that only differ at a vertex $u \in \Lambda$}
    \Output{A coupling tree $\+T$ with its root $rt$}
    \Parameter{A positive integer $R$}
    Construct a tree $\+T$ containing a single root node $rt$ with $\mathrm{label}(rt) \gets \tp{\sigma,\tau,\Lambda,\emptyset}$ \; 
    \For{$v \in S_R(u)\setminus \Lambda$}{
        \For{$(c_1,c_2) \in \mathrm{supp}(\mu^\sigma_v) \times \mathrm{supp}(\mu^\tau_v)$}{
            \eIf{$c_1=c_2=c$}{
                $(\+T_{v,c},rt_{v,c}) \gets$ Coupling-Tree $(\sigma_{v \gets c}, \tau_{v \gets c})$\;
                Append $\+T_{v,c}$ to $rt$ by connecting $rt$ and $rt_{v,c}$\;
            }
            {
                Introduce a leaf node $w$ with $\mathrm{label}(w) \gets (\sigma_{v \gets c_1}, \tau_{v \gets c_2}, \Lambda \cup \{v\},v)$\;
                Connect $rt$ and $w$\;
            }
        }
    }
    \Return $\tp{\+T,rt}$\;
\end{algorithm}

Denote by $\+V(\+T)$ the set of nodes of $\+T$ and $\+L(\+T)$ the set of leaves.
Each node $w$ of the tree represents an intermediate state of \Cref{alg:coupling}.
Its label $(\sigma,\tau,\Lambda,D)$ represents the two partial configurations $\sigma$ and $\tau$, the pinned vertex set $\Lambda$, and the set $D$ of differing vertices other than the initial disagreement $u$.
In fact, $D$ can only be either some vertex $v$ or the empty set $\emptyset$, because \Cref{alg:coupling} stops whenever the first disagreement other than $u$ is introduced.
Thus, if $D\neq\emptyset$, then the node must be a leaf.
We call a leaf node $\emph{good}$ if its $D=\emptyset$.
Namely, denote by $\+{GL}(\+T) = \{w \in \+L(\+T) \mid \mathrm{label}(w) = (\ast,\ast,\ast,\emptyset)\}$ the set of good leaves,
where $\ast$ denotes any possible value at that position.
The rest of the leaves $\+{BL}(\+T) = \+L(\+T) \setminus \+{GL}(\+T)$ are \emph{bad}.
Moreover, denote by $\+C(w)$ the set of children of a node~$w$.

The linear program is introduced in \Cref{alg:LP}.
Again, some intuition and heuristics for this approach are given in \Cref{sec:heuristics}.
For each $w\in\+V(\+T)$ whose label is $(\sigma^w,\tau^w,\Lambda^w,\ast)$, we call $r_w \defeq \frac{\mu_{\Lambda^w}(\sigma^w)}{\mu_{\Lambda^w}(\tau^w)}$ its marginal ratio.
Our goal is to estimate the marginal ratio $r = r_{rt}= \frac{\mu_\Lambda(\sigma)}{\mu_\Lambda(\tau)}$ for the root of the coupling tree.
We do so by combining the LP with a binary search.
For some guessed upper and lower bounds $r_+$ and $r_-$ for $r$,
ideally, we want to construst an LP such that it is feasible if and only if $r_-\le r\le r_+$.
Because of the presence of errors, eventually, we will only establish an approximation version of this claim.

%
The LP contains two variables $x_w$ and $y_w$ for each node $w\in\+V(\+T)$ of the coupling tree.
Intuitively, $x_w$ represents $\frac{z_w \mu_\Lambda(\sigma)}{\mu_{\Lambda^w}(\sigma^w)}$ and $y_w$ represents $\frac{z_w \mu_\Lambda(\tau)}{\mu_{\Lambda^w}(\tau^w)}$, 
where $z_w$ is the probability that the coupling reaches the node $w$. 
The four types of constraints in \Cref{alg:LP} can be intuitively interpreted when $x_w=\frac{z_w \mu_\Lambda(\sigma)}{\mu_{\Lambda^w}(\sigma^w)}$ and $y_w =\frac{z_w \mu_\Lambda(\tau)}{\mu_{\Lambda^w}(\tau^w)}$ as follows.
We only explain the intuitions for $x_w$, and the same applies to $y_w$.
\begin{enumerate}
  \item Validity constraints: When $w = rt$ is the root, we have that $z_w = 1$ and $x_w = \frac{\mu_\Lambda(\sigma)}{\mu_\Lambda(\sigma)} = 1$.
  \item Recursive constraints: Given a node $w$ with label $(\sigma^w,\tau^w,\Lambda^w,\emptyset)$, fix a vertex $v \in S_R(u)\setminus \Lambda^w$ and a value $c \in \text{supp}(\mu_v^{\sigma^w})$. We verify the following identity from the coupling process
  \begin{align}\label{eq:recursive-constraint-intuition-0}
    \sum_{\substack{w' \in \+C(w):\\\mathrm{label}(w')=\tp{\sigma^w_{v \gets c},\ast, \ast, \ast} }}\frac{x_{w'}}{x_w} = \frac{1}{\mu^{\sigma^w}_v(c)} \sum_{\substack{w' \in \+C(w):\\\mathrm{label}(w')=\tp{\sigma^w_{v \gets c},\ast, \ast, \ast} }}\frac{z_{w'}}{z_w} = \frac{1}{\ell(w)},
  \end{align} 
  where $\ell(w)\defeq\abs{S_R(u) \setminus \Lambda^w}$ for a node $w\in\+V(\+T)$ with label $(\ast,\ast,\Lambda^w,\ast)$.
   The first equality holds because $x_w = \frac{z_w \mu_\Lambda(\sigma)}{\mu_{\Lambda^w}(\sigma^w)}$.
   The second equality holds because $\frac{z_{w'}}{z_w}$ is the probability of the coupling reaches $w'$ from $w$.
   Thus, the summation of $\frac{z_{w'}}{z_w}$ is the probability that the vertex $v$ is chosen in this step of the coupling, and is given the value $c$ on $\sigma$'s side.
   This happens with probability $\frac{\mu^{\sigma^w}_v(c)}{\ell(w)}$.
  \item Leaf constraints: The ratio $\frac{x_w}{y_w} = \frac{\mu_\Lambda(\sigma)}{\mu_\Lambda(\tau)} \cdot \frac{\mu_{\Lambda^w}(\tau^w)}{\mu_{\Lambda^w}(\sigma^w)}$ is a product of two ratios. Assume  $R_w \approx \frac{\mu_{\Lambda^w}(\sigma^w)}{\mu_{\Lambda^w}(\tau^w)}$ and $r_- \le \frac{\mu_\Lambda(\sigma)}{\mu_\Lambda(\tau)} \le r_+$. We have  $r_-R_w^{-1} \lesssim \frac{x_w}{y_w} \lesssim r_+R_w^{-1}$.
    Note that for good leaves, the ratio $R_w$ can be efficiently computed and the constraint contains no error.
    However, for bad leaves, we have to settle on an approximate version.
   \item Overflow constraints:
  These control the probability of going to a bad leaf:
  \begin{align*}
    \sum_{w' \in \+C(w) \cap \+{BL}(\+T)} \frac{x_{w'}}{x_w} =  \frac{1}{\mu^{\sigma^w}_v(c)} \sum_{w' \in \+C(w) \cap \+{BL}(\+T)}\frac{z_{w'}}{z_w} \leq \frac{\Pr[\text{\,coupling goes from $w$ to a bad leaf}\,]}{b},
  \end{align*}
  where the last inequality follows from the marginal lower bound $\mu_v^{\sigma^w}(c) \geq b$ and the probability that $w$ goes to a bad leaf is at most $\frac{\delta(R)}{\ell(w)}$ due to the influence bound.
  Here we need to use $b^{-1}$ to upper bound $\mu_v^{\sigma^w}(c)^{-1}$ because the latter is a hard to compute quantity.
  Because of the inequality above, we set the parameter $\eta$ to
  \begin{align}\label{eq:eta}
    \eta \defeq ~\paradelta \quad \text{where $R$ is defined in \eqref{eqn:select-R}}.
  \end{align}
\end{enumerate}

\Cref{lem:completeness} formally verifies that $x_w=\frac{z_w \mu_\Lambda(\sigma)}{\mu_{\Lambda^w}(\sigma^w)}$ and $y_w =\frac{z_w \mu_\Lambda(\tau)}{\mu_{\Lambda^w}(\tau^w)}$ do satisfy all the constraints.
When applying Moitra's method~\cite{Moi19}, it is standard to choose these variables and constraints (1) and (2),
as well as the constraints for good leaves where there is no error.
On the other hand, typical applications of Moitra's method involve some local uniformity constraints to control the probability of reaching bad leaves, and put no constraints on the bad leaves themselves.
Local uniformity no longer holds in our setting.
Instead, we include the overflow constraints to bound the effect of bad leaves.
These constraints can only reduce bad leaves' effects on the overall error by some constant factor.\footnote{A sharp-eyed reader may have noticed that we can reduce the effect of bad leaves to polynomially small by setting the radius $R$ to $\Omega(\log n)$. This is correct, but doing so would increase the size of $S_R(u)$ to polynomially large, and the overall LP would be exponentially large.}
Thus, we also introduce the leaf constraints on the bad leaves.
Their marginal ratios are involved in these constraints, which we recursively solve.
Overall, the error in our algorithm decreases by a constant factor in each iteration of the recursive call.

In other words, to construct our LP, we need the coupling tree $\+T$,
as well as the marginal ratios for all the leaf nodes of~$\+T$.
The ratios for good leaves can be efficiently computed, 
and the ratios for bad leaves are recursively fed and are denoted $\*R$.
The LP is combined with a binary search to find an estimate to $r$,
and the other two inputs $r_-$ and $r_+$ are our current guesses of the upper and lower bounds of $r$,
which we will keep adjusting during the binary search.
When $\*R$ has no error, a solution of this LP is guaranteed to exist for $r_-=r_+$.
However, as our $\*R$ may contain errors, we can only rely on binary search to reduce the gap $r_+-r_-$ to an appropriate level,
rather than require $r_+=r_-$.

\begin{algorithm}[ht]
    \caption{The linear program}
    \label{alg:LP}
    \SetKwInOut{Input}{Input}
    \SetKwInOut{Output}{Output}
    \SetKwInOut{Parameter}{Parameter}
    \underline{LP} $(r_-,r_+,\+T,\*R, \eps)$\;
    \Input{Positive real values $r_- \le r_+$, a coupling tree $\+T$, marginal ratio estimates $\*R \in \mathbb{R}_{> 0}^{\+L(\+T)}$, and their error margin~$\eps$}
    \Output{A Boolean value, indicating whether the LP has a feasible solution}
    \Parameter{$\eta>0$}
    \Return true if and only if the following LP has a feasible solution
    \begin{enumerate}
      \item Validity constraints:
        \begin{align*}
          \forall w \in \+V(\+T), \quad x_w,y_w \ge 0\\
          x_{rt} = y_{rt} = 1
        \end{align*}
      \item Recursive constraints:
        
        For any non-leaf node $w$ with $\mathrm{label}(w)=(\sigma^w,\tau^w,\Lambda^w,\emptyset)$, and $v \in S_R(u) \setminus \Lambda^w$, 
        \begin{align*}
          \forall c \in \mathrm{supp}(\mu^{\sigma^w}_v),\quad  \sum_{\substack{w' \in \+C(w):\\\mathrm{label}(w')=\tp{\sigma^w_{v \gets c},\ast, \ast, \ast} }} x_{w'} = \frac{x_w}{\ell(w)}\\
          \forall c \in \mathrm{supp}(\mu^{\tau^w}_v),\quad  \sum_{\substack{w' \in \+C(w):\\\mathrm{label}(w')=\tp{\ast,\tau^w_{v \gets c}, \ast, \ast} }} y_{w'} = \frac{y_w}{\ell(w)}
        \end{align*}
      \item Leaf constraints:
        \begin{align*}
          \forall w \in \+{GL}(\+T), \quad & r_- R_w^{-1} y_{w} \le x_w \le r_+ R^{-1}_w y_w\\
          \forall w \in \+{BL}(\+T), \quad & r_- \tp{1+\eps}^{-1} R_w^{-1} y_w \le x_w \le r_+ \tp{1+\eps} R_w^{-1} y_w
        \end{align*} 
      \item Overflow constraints: let $\eta \defeq ~\paradelta$, where $R$ is defined in \eqref{eqn:select-R},
        \begin{align*}
          \forall w \in \+V(\+T) \setminus \+L(\+T), \quad \sum_{w' \in \+C(w) \cap \+{BL}(\+T)} x_{w'} \le \frac{\eta}{\ell(w)} x_w\\
          \forall w \in \+V(\+T) \setminus \+L(\+T), \quad \sum_{\substack{w' \in \+C(w) \cap \+{BL}(\+T) }} y_{w'} \le \frac{\eta}{\ell(w)} y_w
        \end{align*}
    \end{enumerate}
\end{algorithm}

With the coupling tree $\+T$ for $(\sigma, \tau)$ (in \Cref{alg:coupling-tree}) and the linear program (in \Cref{alg:LP}) in hand, 
we then estimate the marginal ratio $r = \frac{\mu_\Lambda(\sigma)}{\mu_\Lambda(\tau)}$ by the binary search mentioned above.
We formally state this binary search in \Cref{algo:marginal-resolver}.

\begin{algorithm}
  \caption{Marginal estimation based on LP}
  \label{algo:marginal-resolver}
    \SetKwInOut{Input}{Input}
    \SetKwInOut{Output}{Output}
    \SetKwInOut{Parameters}{Parameters}
    \underline{Marginal-estimator} $(\+T,\*R, \eps)$ \\
    \Input{A coupling tree $\+T$ with root $\-{rt}$, marginal ratio estimates $\*R \in \mathbb{R}_{> 0}^{\+L(\+T)}$ and their error margin $\eps$}
    \Output{An estimate $\hat{r}$}
    \Parameters{An integer $R>0$ and a real number $\eta>0$}
    $r_{low} \gets b$ and $r_{upp} \gets b^{-1}$\;
    $\widehat{\eps} \gets \eta H(\Delta^R) \cdot \eps$\;
    \While{$r_{upp} > (1+\widehat{\eps})^2 r_{low}$}{
        Let $m \gets (r_{upp}+r_{low})/2$\;
        \If{both LP$(r_{low},m,\+T,\*R, \eps)$ and LP$(m,r_{upp},\+T,\*R, \eps)$ are true\label{line:early-exit}}{
            \Return $m$\;
        }
        \tcp{Assertion: otherwise either LP$(r_{low},m,\+T,\*R, \eps)$ or LP$(m,r_{upp},\+T,\*R, \eps)$ is true}
        \eIf{LP$(r_{low},m,\+T,\*R, \eps)$ is true}{
            $r_{upp} \gets m$\;
        }
        {
            $r_{low} \gets m$\;
        }
    }
    \Return $\hat{r} = \sqrt{r_{low}r_{upp}}$\;
\end{algorithm}

The following bound is the main guarantee of the vector $\*R$.

\begin{condition}[$\eps$-error bound]\label{cond:LP}
  Let $\eps > 0$ be a parameter.
  Let $\mu$ be a Gibbs distribution on $[q]^V$.
  Let $\+T$ be the coupling tree of $(\sigma, \tau)$,
  where $\sigma, \tau \in [q]^\Lambda$ are two partial configurations on $\Lambda\subseteq V$ that only differ at some vertex $u \in \Lambda$.
  Then, let $\*R \in \^R^{\+L(\+T)}_{\geq 0}$ be a vector defined on the leaves of $\+T$.
  For any leaf node $w\in\+L(\+T)$ with $\-{label}(w)=(\sigma^{w},\tau^{w},\Lambda^{w},\ast)$,
  \begin{itemize}
    \item if $w\in\+{GL}(\+T)$, $R_w= \frac{\mu_{\Lambda^w}(\sigma^w)}{\mu_{\Lambda^w}(\tau^w)}$;
    \item if $w\in\+{BL}(\+T)$, it holds that
      \begin{align*}
        (1+\eps)^{-1} \le R_w \cdot \frac{\mu_{\Lambda^w}(\tau^w)}{\mu_{\Lambda^w}(\sigma^w)} \le (1+\eps).
      \end{align*}
  \end{itemize}  
\end{condition}

When \Cref{cond:LP} holds, we say $\*R$ satisfies the $\eps$-error bound.

Our marginal estimator (\Cref{algo:marginal-resolver}) takes $(\+T, \*R, \eps)$ as input and outputs an estimate $\widehat{r}$ of $r$.
The key property of this estimator is that if $\*R$ satisfies the $\eps$-error bound, 
then the error of $\widehat{r}$ shrinks by a constant factor.

\begin{lemma} \label{lem:error-bound}
  Let $(\+T, \*R, \eps)$ be the input of \Cref{algo:marginal-resolver} such that $\*R$ satisfies the $\eps$-error bound (namely \Cref{cond:LP} holds) for some $\eps\le 3b^{-2}$.
  Then the output $\widehat{r}$ of \Cref{algo:marginal-resolver} satisfies
  \begin{align*}
    (1 + \widehat{\eps})^{-1} \leq \frac{\widehat{r}}{r} \leq 1 + \widehat{\eps},
  \end{align*}
  where $r=\frac{\mu_\Lambda(\sigma)}{\mu_{\Lambda}(\tau)}$ and $\widehat{\eps} := 5b^{-2}\eta H(\Delta^R) \cdot \eps$.
\end{lemma}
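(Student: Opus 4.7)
The plan is to decompose the proof into three pieces: completeness of the LP (a feasible solution exists whenever $r\in[r_-,r_+]$), a quantitative soundness statement (any feasible LP pins $r$ down to a slightly enlarged interval), and a binary-search analysis combining the two.

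For completeness, the natural candidate is the coupling-induced assignment. Let $z_w$ be the probability that \Cref{alg:coupling} reaches node $w$, so $z_{rt}=1$, and set
\[
  x_w \defeq z_w\cdot \frac{\mu_\Lambda(\sigma)}{\mu_{\Lambda^w}(\sigma^w)},\qquad y_w \defeq z_w\cdot \frac{\mu_\Lambda(\tau)}{\mu_{\Lambda^w}(\tau^w)},
\]
so in particular $x_w/y_w=r/r_w$. Validity is immediate, and the recursive constraint holds because, for each fixed $v$ and $c\in\mathrm{supp}(\mu_v^{\sigma^w})$, summing over the $c_2$-branches of the optimal coupling cancels the factor $\mu_v^{\sigma^w}(c)$ hidden in the denominator of $x_{w'}$ (using $\sum_{c_2}p(c,c_2)=\mu_v^{\sigma^w}(c)$). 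The leaf constraints reduce directly to the $\eps$-error bound on $\*R$ together with $r\in[r_-,r_+]$. The substantive step is the overflow constraint: using the marginal lower bound $\mu_v^{\sigma^w}(c_1)\ge b$ and the fact that $\sigma^w,\tau^w$ differ only at $u$ (so total influence decay applies),
\begin{align*}
  \sum_{w'\in\+{BL}\cap\+C(w)} x_{w'}
  \le \frac{x_w}{\ell(w)}\cdot b^{-1}\sum_{v\in S_R(u)\setminus \Lambda^w}\dTV(\mu_v^{\sigma^w},\mu_v^{\tau^w})
  \le \frac{b^{-1}\delta(R)}{\ell(w)}\cdot x_w = \frac{\eta}{\ell(w)}\cdot x_w,
\end{align*}
by \Cref{thm:CI->sphere-CI} and the choice $\eta=b^{-1}\delta(R)$; symmetric reasoning works for $y$.

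For soundness, I would argue by a weighted telescoping on the tree. Define $\Phi_w\defeq x_w r_w/y_w$ so that $\Phi_{rt}=r$. The leaf constraints force $\Phi_w\in[r_-,r_+]$ at good leaves and $\Phi_w\in[r_-(1+\eps)^{-2},r_+(1+\eps)^2]$ at bad leaves (absorbing the $\eps$-slack between $R_w$ and $r_w$). The recursive constraints then express $\Phi_{rt}$ as a weighted aggregate of $\Phi_w$ over leaves, and iterating the overflow inequality along each root-to-leaf path (contributing $\eta/\ell(w)$ at each level) bounds the cumulative weight of $\+{BL}$. Since any such path has length at most $|S_R(u)|\le \Delta^R$, the levels telescope into the harmonic bound $\sum_\ell \eta/\ell\le \eta H(\Delta^R)$. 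Combining with the marginal lower bound $b$ (which converts multiplicative slack between $x$ and $y$ into multiplicative slack in $r$), this yields $r\in[r_-(1+\alpha)^{-1},r_+(1+\alpha)]$ for some $\alpha=O(b^{-2}\eta H(\Delta^R)\eps)$. The binary search then easily terminates: completeness guarantees that at least one of LP$(r_{low},m)$ or LP$(m,r_{upp})$ is feasible whenever $r\in[r_{low},r_{upp}]$, and the stopping condition $r_{upp}\le(1+\widehat{\eps})^2 r_{low}$ (with the algorithm's $\widehat{\eps}=\eta H(\Delta^R)\eps$) together with soundness forces $\sqrt{r_{low}r_{upp}}$ within factor $(1+\widehat{\eps})(1+\alpha)$ of $r$, which is absorbed into the lemma's stated $5b^{-2}\eta H(\Delta^R)\eps$.

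The hardest step is the soundness telescoping, because the recursive constraint does not conserve $x$-mass in the naive sense: for each vertex $v$, $\sum_{w'\in\+C_v(w)}x_{w'}=|\mathrm{supp}(\mu_v^{\sigma^w})|\cdot x_w/\ell(w)$ rather than $x_w/\ell(w)$. The right potential function and weighting scheme must respect this asymmetry so that the overflow budget telescopes cleanly into $\eta H(\Delta^R)$ rather than blowing up multiplicatively with the support sizes across the $\Delta^R$ possible levels; designing this averaging so that both the leaf slack and the overflow budget combine linearly is the delicate part.
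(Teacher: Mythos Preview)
Your three-piece decomposition (completeness, soundness, binary search) and your completeness argument both match the paper exactly: same candidate solution $x_w=z_w\,\mu_\Lambda(\sigma)/\mu_{\Lambda^w}(\sigma^w)$, same verification of each constraint block, same use of the marginal lower bound together with total influence decay for the overflow constraint.

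The gap is in soundness, and it is precisely the one you flag at the end. Your potential $\Phi_w=x_w r_w/y_w$ does not telescope: the recursive constraints are linear in $x_w$ and in $y_w$ separately, and a ratio of sums is not a weighted average of ratios. At best you get that $r=\Phi_{rt}$ lies between $\min_{w\in\+L}\Phi_w$ and $\max_{w\in\+L}\Phi_w$, which gives only the useless bound $r\in[r_-(1+\eps)^{-2},\,r_+(1+\eps)^2]$ coming from the worst bad leaf; there is no mechanism to feed the overflow smallness into such a mediant.

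The missing ingredient is the weight that restores conservation. Set $a_w\defeq \mu_{\Lambda^w}(\sigma^w)\, x_w$ and $b_w\defeq \mu_{\Lambda^w}(\tau^w)\, y_w$. A one-line computation from the recursive constraint (the paper's \eqref{eq:prop-1}) shows $\sum_{w'\in\+C(w)} a_{w'}=a_w$ exactly: the support-size blowup you worried about is cancelled by $\mu_{\Lambda^{w'}}(\sigma^{w'})=\mu_{\Lambda^w}(\sigma^w)\,\mu_v^{\sigma^w}(c)$. Hence $\mu_\Lambda(\sigma)=\sum_{w\in\+L(\+T)}a_w$ and $\mu_\Lambda(\tau)=\sum_{w\in\+L(\+T)}b_w$. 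The paper then stratifies these leaf sums by depth, $\Gamma_{x,i}\defeq\sum_{w\in\+{BL}(\+T):\,\ell(w)=i-1}a_w$ (with $\Gamma_{x,0}$ over good leaves), and proves by induction on $\ell$ that
\[
\sum_{i\le\ell}\Gamma_{x,i}\ \le\ (1+\eps_\ell)\,r_+\sum_{i\le\ell}\Gamma_{y,i},
\qquad \eps_\ell\defeq 5b^{-2}\eta H(\ell)\,\eps.
\]
The step from $\ell-1$ to $\ell$ uses only two facts: the leaf constraint combined with \Cref{cond:LP} gives $\Gamma_{x,\ell}\le(1+\eps)^2 r_+\Gamma_{y,\ell}$, and the overflow constraint (after one application of $\mu_{\Lambda^{w'}}(\sigma^{w'})\le\mu_{\Lambda^w}(\sigma^w)$) gives $\Gamma_{y,\ell}\le\frac{\eta}{\ell}\sum_{i\le\ell}\Gamma_{y,i}$. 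The harmonic sum then drops out of the $\eta/\ell$ increment per level, with no dependence on support sizes. Once you have this weighting, the rest of your outline (including the binary-search wrap-up) goes through as written.
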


Note that our choices of $R$ in \eqref{eqn:select-R} and $\eta$ in \eqref{eq:eta} are stronger than requiring $5b^{-2}\eta H(\Delta^R)<1$.
This is because in the full algorithm, we need to rewrite each $R_w$ into a product of two marginal ratios so that the recursion can continue.
Thus, at each recursion step, the error first increases because of this product,
and then shrinks by \Cref{lem:error-bound}.

\Cref{lem:error-bound} is a direct consequence of the following two lemmas.
\begin{lemma}  \label{lem:completeness}
  Suppose \Cref{cond:LP} holds.
  If the input $r_-$ and $r_+$ to \Cref{alg:LP} satisfies $r_- \le r \le r_+$,
  then the LP is feasible.
\end{lemma}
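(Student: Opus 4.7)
The plan is to exhibit an explicit feasible solution whose entries reflect the probabilistic semantics behind \Cref{alg:coupling}. For each node $w$ of the coupling tree, let $z_w$ denote the probability that Partial-Coupling, started from $(\sigma,\tau)$, reaches the intermediate state described by $\mathrm{label}(w)$; equivalently, $z_{rt}=1$ and, for each child $w'$ obtained by selecting $v\in S_R(u)\setminus\Lambda^w$ and then drawing $(c_1,c_2)$ from the optimal TV-coupling $\gamma_v^w$ of $\mu_v^{\sigma^w}$ and $\mu_v^{\tau^w}$,
\begin{align*}
  z_{w'} \;=\; z_w\cdot\frac{1}{\ell(w)}\cdot\gamma_v^w(c_1,c_2).
\end{align*}
I would then set
\begin{align*}
  x_w \defeq \frac{z_w\,\mu_\Lambda(\sigma)}{\mu_{\Lambda^w}(\sigma^w)},\qquad y_w \defeq \frac{z_w\,\mu_\Lambda(\tau)}{\mu_{\Lambda^w}(\tau^w)},
\end{align*}
with the convention $x_w=y_w=0$ whenever $z_w=0$. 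Permissiveness ensures the denominators are positive for every reachable node, so the assignment is well-defined and nonnegative, and $x_{rt}=y_{rt}=1$, which gives the validity constraints.

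The recursive constraints reduce to a chain-rule identity. For an internal node $w$, any $v\in S_R(u)\setminus\Lambda^w$, and any $c\in\mathrm{supp}(\mu_v^{\sigma^w})$, summing $x_{w'}$ over children whose first label is $\sigma^w_{v\gets c}$ (equivalently, over $c_2\in\mathrm{supp}(\mu_v^{\tau^w})$) gives
\begin{align*}
  \sum_{c_2}\frac{z_w\,\gamma_v^w(c,c_2)\,\mu_\Lambda(\sigma)}{\ell(w)\,\mu_{\Lambda^w}(\sigma^w)\,\mu_v^{\sigma^w}(c)} \;=\; \frac{x_w}{\ell(w)},
\end{align*}
where I use $\sum_{c_2}\gamma_v^w(c,c_2)=\mu_v^{\sigma^w}(c)$ and $\mu_{\Lambda^w\cup\{v\}}(\sigma^w_{v\gets c})=\mu_{\Lambda^w}(\sigma^w)\mu_v^{\sigma^w}(c)$; the $y$-side is symmetric. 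The leaf constraints are also routine: at a good leaf the identity $x_w/y_w=r\cdot R_w^{-1}$ together with \Cref{cond:LP} and the hypothesis $r_-\le r\le r_+$ gives the constraint directly, and at a bad leaf the same identity absorbs the $(1+\eps)$ slack of \Cref{cond:LP}.

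The main step is the overflow constraint, which is the only place where both the marginal lower bound and total influence decay are simultaneously used. An analogous computation yields
\begin{align*}
  \sum_{w'\in\+C(w)\cap\+{BL}(\+T)} x_{w'} \;=\; \frac{x_w}{\ell(w)}\sum_{v\in S_R(u)\setminus\Lambda^w}\;\sum_{c_1\neq c_2}\frac{\gamma_v^w(c_1,c_2)}{\mu_v^{\sigma^w}(c_1)}.
\end{align*}
I would first note a structural point: by construction of the coupling tree, for every internal $w$ the pair $\sigma^w,\tau^w$ still differs only at the original vertex $u$, since the first additional disagreement would immediately turn $w$ into a bad leaf. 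This permits invoking \Cref{thm:CI->sphere-CI} to obtain $\sum_{v\in S_R(u)}\dTV(\mu_v^{\sigma^w},\mu_v^{\tau^w})\le\delta(R)$. Combined with the marginal lower bound $\mu_v^{\sigma^w}(c_1)\ge b$ for $c_1\in\mathrm{supp}(\mu_v^{\sigma^w})$ and the identity $\sum_{c_1\neq c_2}\gamma_v^w(c_1,c_2)=\dTV(\mu_v^{\sigma^w},\mu_v^{\tau^w})$ following from optimality of $\gamma_v^w$, the inner double sum is at most $b^{-1}\delta(R)=\eta$, matching the overflow constraint exactly; the $y$-side is identical.

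The main obstacle I would expect is this overflow step: it is where the definition $\eta=b^{-1}\delta(R)$ is forced, and it requires being careful that (i) every internal node genuinely retains its single disagreement at $u$ so total-influence-decay applies, and (ii) the expressions involving $\mu_v^{\sigma^w}(c_1)$ in the denominator never become degenerate, which is handled uniformly by the marginal lower bound. The rest of the verification is bookkeeping once the probabilistic meaning of $x_w,y_w,z_w$ is fixed.
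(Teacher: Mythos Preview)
Your proposal is correct and follows essentially the same approach as the paper: both define $z_w$ as the reaching probability in \Cref{alg:coupling}, set $x_w=z_w\mu_\Lambda(\sigma)/\mu_{\Lambda^w}(\sigma^w)$ and $y_w=z_w\mu_\Lambda(\tau)/\mu_{\Lambda^w}(\tau^w)$, and verify the four constraint families in turn, with the overflow constraint handled by combining the total influence bound $\sum_v \dTV(\mu_v^{\sigma^w},\mu_v^{\tau^w})\le\delta(R)$ with the marginal lower bound $b$. The only cosmetic difference is that the paper bounds $\sum_{w'\in\+{BL}} z_{w'}\le \frac{b\eta}{\ell(w)}z_w$ first and then passes to $x_{w'}$ via $\mu_{\Lambda^{w'}}(\sigma^{w'})/\mu_{\Lambda^w}(\sigma^w)\ge b$, whereas you write out the explicit sum $\sum_{c_1\neq c_2}\gamma_v^w(c_1,c_2)/\mu_v^{\sigma^w}(c_1)$ and bound it directly; one minor point is that the relevant hypothesis you invoke is the standing total-influence-decay assumption (\Cref{def:TI-decay}) rather than \Cref{thm:CI->sphere-CI} itself.
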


\begin{lemma}  \label{lem:soundness}
  Suppose \Cref{cond:LP} holds with $\eps\le 3b^{-2}$.  
  If \Cref{alg:LP} returns true for parameters $r_-$ and $r_+$, then
  \begin{align*}
    (1+\hat{\eps})^{-1} r_- \le r \le (1+\hat{\eps})r_+.
  \end{align*}
\end{lemma}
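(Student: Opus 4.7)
The plan is to translate LP feasibility into algebraic identities and inequalities that directly bound $r := \mu_\Lambda(\sigma)/\mu_\Lambda(\tau)$ in terms of $(r_-, r_+)$. The key idea is a reweighting of the LP variables by the true conditional marginals, turning the recursive constraints into a martingale identity along the coupling tree; the leaf and overflow constraints then combine with this identity to yield the bounds.

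First I would set $X_w := A_w x_w$ and $Y_w := B_w y_w$, with $A_w := \mu^\sigma_{\Lambda^w\setminus\Lambda}(\sigma^w|_{\Lambda^w\setminus\Lambda})$ and $B_w := \mu^\tau_{\Lambda^w\setminus\Lambda}(\tau^w|_{\Lambda^w\setminus\Lambda})$. Multiplying each recursive constraint by the corresponding single-site marginal $\mu_v^{\sigma^w}(c)$ (resp.\ $\mu_v^{\tau^w}(c)$) and summing over $(v,c)$, one obtains the martingale identity $X_w = \sum_{w' \in \+C(w)} X_{w'}$ and similarly $Y_w = \sum_{w' \in \+C(w)} Y_{w'}$. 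Iterating down to the leaves and using $X_{rt} = Y_{rt} = 1$ gives $\sum_{\ell \in \+L(\+T)} A_\ell x_\ell = 1$ and $\sum_\ell B_\ell y_\ell = 1$. The elementary identity $B_\ell = A_\ell \cdot r/r_\ell$ (from the definition of $r$ and $r_\ell$) then rewrites the second sum as $\sum_\ell A_\ell y_\ell/r_\ell = 1/r$.

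Next, I would apply the leaf constraints together with \Cref{cond:LP} (which gives $R_\ell = r_\ell$ at good leaves and $R_\ell \in [(1+\eps)^{-1}, 1+\eps]\cdot r_\ell$ at bad leaves) to obtain $x_\ell \le r_+ y_\ell/r_\ell$ at good leaves and $x_\ell \le r_+(1+\eps)^2 y_\ell/r_\ell$ at bad leaves. Multiplying by $A_\ell$, summing over $\ell$, and substituting the two identities above gives $1 \le r_+/r + r_+((1+\eps)^2-1)\Phi$, where $\Phi := \sum_{w \in \+{BL}(\+T)} A_w y_w/r_w$. Multiplying through by $r$ and noting the key identity $r\Phi = Y^{\+{BL}} := \sum_{w \in \+{BL}(\+T)} Y_w$ yields the clean bound $r \le r_+\bigl(1 + ((1+\eps)^2 - 1)\, Y^{\+{BL}}\bigr)$. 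A symmetric argument with the lower leaf bounds yields a matching lower bound on $r$ in terms of $r_-$ and $Y^{\+{BL}}$.

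Finally, I would use the overflow constraints to bound $Y^{\+{BL}}$. Since $B_{w'} = B_w\, \mu_v^{\tau^w}(c) \le B_w$, the unweighted constraint $\sum_{w' \in \+C(w) \cap \+{BL}(\+T)} y_{w'} \le \eta y_w/\ell(w)$ upgrades to $\sum_{w' \in \+C(w) \cap \+{BL}(\+T)} Y_{w'} \le \eta Y_w/\ell(w)$. Summing over all internal $w$ (every bad leaf is the child of a unique internal node), grouping by depth $d$ (so $\ell(w) = \ell_0 - d$ with $\ell_0 := |S_R(u)\setminus\Lambda| \le \Delta^R$), and using the monotone consequence of the $Y$-martingale identity that $\sum_{w\text{ internal at depth }d} Y_w \le Y_{rt} = 1$, the bound telescopes to $\eta \sum_{d=0}^{\ell_0-1} (\ell_0-d)^{-1} = \eta H(\ell_0) \le \eta H(\Delta^R)$. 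Combining with the estimate $(1+\eps)^2 - 1 \le 5b^{-2}\eps$ for $\eps \le 3b^{-2}$ (using $b \le 1$) finishes the upper bound $r \le (1+\hat\eps)r_+$, and an analogous manipulation gives the lower bound $r \ge r_-/(1+\hat\eps)$.

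The main obstacle is the first step: spotting that multiplying the LP constraints by the true single-site marginals produces an honest conservation law $X_w = \sum_{w'} X_{w'}$, without which the information from individual leaves cannot be cleanly aggregated. A secondary delicate point is the upgrade of the overflow constraint from $y$ to $Y$ via $\mu_v^{\tau^w}(c) \le 1$; this is exactly what enables the harmonic-sum telescoping to produce the $H(\Delta^R)$ bound that matches the constant $\eta H(\Delta^R)$ chosen in \eqref{eqn:select-R}.
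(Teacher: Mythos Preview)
Your argument is correct. The key reweighting idea --- multiplying LP variables by the true conditional marginals to obtain the conservation law $\sum_{w'\in\+C(w)} \mu_{\Lambda^{w'}}(\sigma^{w'}) x_{w'} = \mu_{\Lambda^w}(\sigma^w) x_w$ --- is exactly the paper's equation \eqref{eq:prop-1} (their $\Gamma_{x,i},\Gamma_{y,i}$ are, up to the constant factors $\mu_\Lambda(\sigma),\mu_\Lambda(\tau)$, your $X,Y$ summed over bad leaves of a fixed level). The aggregation step, however, differs. The paper groups the leaf contributions by level and proves by induction on $\ell$ that $\sum_{i\le \ell}\Gamma_{x,i}/\sum_{i\le \ell}\Gamma_{y,i}$ stays within $[(1+\eps_\ell)^{-1}r_-,(1+\eps_\ell)r_+]$ with $\eps_\ell = 5b^{-2}\eta H(\ell)\eps$; the overflow constraint is used once per induction step. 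Your route is more direct: you bound the total bad-leaf mass $Y^{\+{BL}}\le \eta H(\Delta^R)$ in one shot (the harmonic sum appearing via $\sum_d 1/(\ell_0-d)$), and then plug it into the single inequality $r \le r_+(1 + ((1+\eps)^2-1)Y^{\+{BL}})$. Both approaches give the same constant $\hat\eps = 5b^{-2}\eta H(\Delta^R)\eps$; yours avoids the level-by-level induction at the cost of being slightly less modular.

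One minor slip: in your second paragraph you say the symmetric lower-bound argument is ``in terms of $r_-$ and $Y^{\+{BL}}$'', but swapping $(x,\sigma)\leftrightarrow(y,\tau)$ and $r_\ell\leftrightarrow r_\ell^{-1}$ actually lands you on $X^{\+{BL}}$, not $Y^{\+{BL}}$ (bounded by $\eta H(\Delta^R)$ via the $x$-overflow constraint and $A_{w'}\le A_w$). Your final paragraph handles this correctly.
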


The proofs of \Cref{lem:completeness} and \Cref{lem:soundness} are deferred to the next section, \Cref{sec:analysis}.

\begin{proof}[Proof of \Cref{lem:error-bound}]
  First suppose the binary search terminates early in \Cref{line:early-exit} of \Cref{algo:marginal-resolver}.
  In this case, both LP$(r_{low},m,\+T,\*R, \eps)$ and LP$(m,r_{upp},\+T,\*R, \eps)$ return true.
  \Cref{lem:soundness} implies that $r\le (1+\widehat{\eps})m$ and $r\ge (1+\widehat{\eps})^{-1}m$.
  Thus $m$ satisfies the desired approximation bound.
  
  Otherwise, only one of the two LPs is feasible until the while loop ends.
  By induction $r \in [r_{low},r_{upp}]$ throughout the while loop due to \Cref{lem:completeness}. 
  The binary search exits the while loop when $r_{upp} \le (1+\hat{\eps})^2 r_{low}$.
  Thus the output satisfies the desired approximation bound.
\end{proof}

\subsection{Error analysis of the LP}
\label{sec:analysis}

In this section, we prove \Cref{lem:completeness} and \Cref{lem:soundness}.

\begin{proof}[Proof of \Cref{lem:completeness}]
  For each node $w \in \+V(\+T)$ with $\mathrm{label}(w) = (\sigma^w,\tau^w,\Lambda^w,\ast)$, let $z_w$ be the probability that \Cref{alg:coupling} reaches the node $w$ in the coupling tree. 
  Furthermore, let $x_w = \frac{z_w \mu_\Lambda(\sigma)}{\mu_{\Lambda^w}(\sigma^w)}$ and $y_w = \frac{z_w \mu_\Lambda(\tau)}{\mu_{\Lambda^w}(\tau^w)}$.
  It suffices to verify that $(\*x,\*y)$ is a feasible solution to the LP.
  We verify the four sets of constraints one by one.

  \begin{itemize}
    \item Validity constraints: it is obvious that $x_w,y_w \ge 0$ and $x_{rt} = y_{rt} = z_{rt} = 1$.
    \item Recursive constraints:
      By symmetry, we only verify the first set.
      For each non-leaf node $w \in \+V(\+T)$ with $\mathrm{label}(w) = (\sigma^w,\tau^w,\Lambda^w,\emptyset)$, it holds that
      \begin{align*}
        \forall v \in S_R(u) \setminus \Lambda^w, c \in \mathrm{supp}(\mu^{\sigma^w}_v),\quad  \sum_{\substack{w' \in \+C(w):\\\mathrm{label}(w')=\tp{\sigma^w_{v \gets c},\ast, \ast, \ast} }} z_{w'}  
        = z_w\cdot\frac{1}{\ell(w)}\cdot\frac{\mu_{\Lambda^w \cup \{v\}}(\sigma^w_{v \gets c})}{\mu_{\Lambda^w}(\sigma^w)}.
      \end{align*}
      The left hand side is the total probability of reaching $w'\in\+C(w)$ whose first label is $\sigma^{w}_{v\gets c}$.
      This can only happen by first reaching $w$, with probability $z_w$,
      and randomly chosen $v$, with probability $\frac{1}{\ell(w)}$.
      Then, as \Cref{line:TV-coupling} of \Cref{alg:coupling} is a valid coupling,
      the probability of getting $\sigma^{w}_{v\gets c}$ is $\frac{\mu_{\Lambda^w \cup \{v\}}(\sigma^w_{v \gets c})}{\mu_{\Lambda^w}(\sigma^w)}$.
      This is exactly the right hand side.
      The recursive constraints then hold for $x_w=\frac{z_w \mu_\Lambda(\sigma)}{\mu_{\Lambda^w}(\sigma^w)}$.

    \item Leaf constraints:
      For each leaf node $w \in \+{L}(\+T)$ with $\mathrm{label}(w) = (\sigma^w,\tau^w,\Lambda^w,\ast)$, 
      we have 
      \begin{align*}
        \frac{x_w}{y_w} = r\cdot\frac{\mu_{\Lambda^w}(\tau^w)}{\mu_{\Lambda^w}(\sigma^w)}.
      \end{align*}
      By \Cref{cond:LP}, the term $\frac{\mu_{\Lambda^w}(\tau^w)}{\mu_{\Lambda^w}(\sigma^w)}$ is either exactly $R_w^{-1}$ (when $w$ is a good leaf) or approximated by $R_w^{-1}$ up to $1+\eps$ relative error (when $w$ is bad).
      As $r_-\le r\le r_+$, the leaf constraints hold. 
    \item Overflow constraints:
      By symmetry, we only verify the first set. For each non-leaf node $w \in \+V(\+T)$, 
      \begin{align}\label{eqn:z-w}
        \sum_{\substack{w' \in \+C(w) \cap \+{BL}(\+T)}} z_{w'} \le \frac{b\eta}{\ell(w)} z_w.
      \end{align}
      This is because by \Cref{def:TI-decay}, the probability that \Cref{alg:coupling} reveals a disagreement is bounded by $\frac{\delta(R)}{\ell(w)}=\frac{b\eta}{\ell(w)}$. 
      Note that for any $w'\in\+C(w)$, $\Lambda^{w'}$ takes the form $\Lambda^w\cup\{v\}$ for some $v\in S_R(u)\setminus \Lambda^w$,
      and $\sigma^{w'}$ takes the form $\sigma^{w}_{v\gets c}$ for some $c\in\-{supp}(\sigma^{w}_v)$.
      By \Cref{def:marginal-bound}, the definition of the marginal lower bound $b$, we have
      \begin{align*}
        \frac{\mu_{\Lambda^{w'}}(\sigma^{w'})}{\mu_{\Lambda^w}(\sigma^{w})}=\frac{\mu_{\Lambda^w\cup\{v\}}(\sigma^{w}_{v\gets c})}{\mu_{\Lambda^w}(\sigma^{w})}\ge b.
      \end{align*}
      Therefore, by \eqref{eqn:z-w},
      \begin{align*}
        \sum_{w' \in \+C(w) \cap \+{BL}(\+T)} x_{w'} & = \sum_{w' \in \+C(w) \cap \+{BL}(\+T)} \frac{z_{w'}\mu_\Lambda(\sigma)}{\mu_{\Lambda^{w'}}(\sigma^{w'})}
        \le b^{-1}\sum_{w' \in \+C(w) \cap \+{BL}(\+T)} \frac{z_{w'}\mu_\Lambda(\sigma)}{\mu_{\Lambda^w}(\sigma^w)} \\
        & \le \frac{\eta}{\ell(w)} \cdot \frac{z_w\mu_\Lambda(\sigma)}{\mu_{\Lambda^w}(\sigma^w)} = \frac{\eta}{\ell(w)} x_w. \qedhere
      \end{align*}
  \end{itemize}
\end{proof}

From the proof above, one can observe that the overflow constraints are weaker than what the values we plug in for $x_w$ and $y_w$ really satisfy.
However, to write the exact constraints these values satisfy, some marginal probabilities are required.
There appears to be no efficient way to compute these marginal probabilities.
Thus, we choose to use the marginal lower bound to enforce a weaker set of constraints.

Next we show \Cref{lem:soundness}.
Suppose there exists a solution, say, $(\*x, \*y)$, to the linear program. 
For each $1 \le i \le \ell$ where $\ell = \abs{S_R(u) \setminus \Lambda}$, define
\begin{align*}
  \Gamma_{x,i} \defeq \sum_{\substack{w \in \+{BL}(\+T) :~\ell(w) = i-1 }} \mu_{\Lambda^w}(\sigma^w) x_w,
\end{align*}
where we use $\Lambda^w$ and $\sigma^w$ to denote the corresponding labels for $w$.
Intuitively, $\Gamma_{x,i}$ is the sum of $\mu_{\Lambda^w}(\sigma^w) x_w$ over all bad leaves of a certain depth. 
Furthermore, let
\begin{align*}
  \Gamma_{x,0} \defeq \sum_{\substack{w \in \+{GL}(\+T)}}  \mu_{\Lambda^w}(\sigma^w) x_w.
\end{align*}
Similarly, we define $\Gamma_{y,i}$ for $0 \le i \le \ell$ by replacing $x_w$ with $y_w$. 
Before proving \Cref{lem:soundness}, we derive some basic properties of $\Gamma_{\cdot, \cdot}$.

\begin{lemma}\label{lem:Gamma}
  Assuming \Cref{cond:LP},
  the following holds:
  \begin{enumerate}
    \item \label{item:Gamma-first}$\sum_{i=0}^\ell \Gamma_{x,i} = \mu_\Lambda(\sigma)$ and $\sum_{i=0}^\ell \Gamma_{y,i} = \mu_\Lambda(\tau)$;
    \item \label{item:Gamma-second}For all $1 \le i \le \ell$, $\Gamma_{x,i} \le \frac{\eta}{i} \sum_{j=0}^i \Gamma_{x,j}$ and $\Gamma_{y,i} \le \frac{\eta}{i} \sum_{j=0}^i \Gamma_{y,j}$;
    \item \label{item:Gamma-third}$r_- \Gamma_{y,0} \le \Gamma_{x,0} \le r_+ \Gamma_{y,0}$;
    \item \label{item:Gamma-fourth}For any $1\le i\le \ell$, $\tp{1+\eps}^{-2} r_- \Gamma_{y,i}  \le \Gamma_{x,i} \le \tp{1+\eps}^2 r_+ \Gamma_{y,i}$.
  \end{enumerate}
\end{lemma}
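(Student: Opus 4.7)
The plan is to treat the weighted quantities $f_x(w) \defeq \mu_{\Lambda^w}(\sigma^w)\, x_w$ and $f_y(w) \defeq \mu_{\Lambda^w}(\tau^w)\, y_w$ as ``flows'' on the coupling tree that satisfy a conservation law inherited from the recursive constraints. Specifically, for item~\ref{item:Gamma-first} I would first verify the local identity $f_x(w) = \sum_{w' \in \+C(w)} f_x(w')$ at every non-leaf $w$: grouping the children of $w$ by the pair $(v, c_1)$ with $v \in S_R(u)\setminus \Lambda^w$ and $c_1 \in \mathrm{supp}(\mu^{\sigma^w}_v)$, the recursive constraint gives $\sum_{c_2} x_{w'} = x_w/\ell(w)$, so this group contributes $\mu_{\Lambda^w \cup \{v\}}(\sigma^w_{v \gets c_1})\cdot x_w/\ell(w)$ to $\sum_{w'} f_x(w')$; summing over $c_1$ marginalises this to $\mu_{\Lambda^w}(\sigma^w)\cdot x_w/\ell(w)$, and summing over the $\ell(w)$ choices of $v$ gives $f_x(w)$. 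Iterating from the root (where $f_x(rt)=\mu_\Lambda(\sigma)$) and partitioning $\+L(\+T)$ into good leaves (contributing to $\Gamma_{x,0}$) and bad leaves at $\ell(w)=i-1$ (contributing to $\Gamma_{x,i}$) yields $\sum_{i=0}^\ell \Gamma_{x,i}=\mu_\Lambda(\sigma)$, with the $\Gamma_{y,\cdot}$ statement symmetric.

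For item~\ref{item:Gamma-second} I would use a cut argument. Let $\+S_i$ consist of every non-leaf node with $\ell(w)=i$ together with every leaf with $\ell(w)\ge i$. Since $\ell(\cdot)$ strictly decreases by one along each edge of $\+T$, every root-to-leaf path meets $\+S_i$ exactly once, so applying the local conservation law to the truncation of $\+T$ at level $i$ yields $\mu_\Lambda(\sigma) = \sum_{w \in \+S_i} f_x(w)$. Subtracting the leaf contribution $\sum_{k=i+1}^{\ell} \Gamma_{x,k}$ gives $\sum_{w\ \text{non-leaf},\ \ell(w)=i} f_x(w) = \sum_{j=0}^i \Gamma_{x,j}$. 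Finally, grouping the bad leaves contributing to $\Gamma_{x,i}$ by their parents (which all live at $\ell=i$), bounding $\mu_{\Lambda^{w'}}(\sigma^{w'}) \le \mu_{\Lambda^{w_p}}(\sigma^{w_p})$ since the child pins one extra vertex, and applying the overflow constraint give $\Gamma_{x,i} \le \frac{\eta}{i} \sum_{w_p\ \text{non-leaf},\ \ell(w_p)=i} f_x(w_p) = \frac{\eta}{i} \sum_{j=0}^i \Gamma_{x,j}$; the argument for $\Gamma_{y,i}$ is identical.

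For items~\ref{item:Gamma-third} and~\ref{item:Gamma-fourth} I would multiply the leaf constraints through by $\mu_{\Lambda^w}(\sigma^w)$ and sum. For a good leaf $R_w^{-1}\mu_{\Lambda^w}(\sigma^w)=\mu_{\Lambda^w}(\tau^w)$ exactly, so summing $r_- R_w^{-1}y_w \le x_w \le r_+ R_w^{-1} y_w$ over $\+{GL}(\+T)$ gives item~\ref{item:Gamma-third}. For a bad leaf at level $i-1$, \Cref{cond:LP} only guarantees $R_w^{-1}\mu_{\Lambda^w}(\sigma^w)=\mu_{\Lambda^w}(\tau^w)$ up to a multiplicative $(1+\eps)^{\pm 1}$ factor, which compounds with the $(1+\eps)^{\pm 1}$ slack already present in the bad-leaf constraint; summing the resulting pointwise inequality over the bad leaves at $\ell(w)=i-1$ produces the $(1+\eps)^{\pm 2}$ bound of item~\ref{item:Gamma-fourth}.

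The main obstacle will be item~\ref{item:Gamma-second}: one must set up the cut $\+S_i$ carefully, because $\+T$ has variable depth and branching, and be meticulous about which bad-leaf contributions are absorbed at each level so that the overflow constraint translates cleanly into a recursion involving only $\sum_{j=0}^i \Gamma_{x,j}$. The rest of the proof is essentially bookkeeping around the leaf constraints.
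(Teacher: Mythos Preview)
Your proposal is correct and follows essentially the same route as the paper: the paper establishes exactly the local conservation identity $\mu_{\Lambda^w}(\sigma^w)x_w=\sum_{w'\in\+C(w)}\mu_{\Lambda^{w'}}(\sigma^{w'})x_{w'}$ and then iterates it (what you call the cut at $\+S_i$) to obtain $\sum_{j=0}^i\Gamma_{x,j}=\sum_{\text{non-leaf }w:\ \ell(w)=i}\mu_{\Lambda^w}(\sigma^w)x_w$, combines this with the monotonicity $\mu_{\Lambda^{w'}}(\sigma^{w'})\le\mu_{\Lambda^{w_p}}(\sigma^{w_p})$ and the overflow constraint for item~\ref{item:Gamma-second}, and handles items~\ref{item:Gamma-third}--\ref{item:Gamma-fourth} by multiplying the leaf constraints through and summing, exactly as you outline.
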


\begin{proof} 
  Note that for any non-leaf node $w \in \+V(\+T)$,
  \begin{align}
    \sum_{w'\in\+C(w)} \mu_{\Lambda^{w'}}\tp{\sigma^{w'}} x_{w'}
    &= \sum_{v \in S_R(u) \setminus \Lambda^w}\sum_{c \in \-{supp}(\mu^{\sigma^w}_v)}\sum_{\substack{w' \in \+C(w):\\ \mathrm{label}(w')=\tp{\sigma^w_{v \gets c},\ast,\ast,\ast}}} \mu_{\Lambda^w \cup \{v\} }\tp{\sigma^w_{v \gets c}} x_{w'} \notag\\
    &= \sum_{v \in S_R(u) \setminus \Lambda^w}\sum_{c \in \-{supp}(\mu^{\sigma^w}_v)} \frac{\mu_{\Lambda^w \cup \{v\}}\tp{\sigma^w_{v \gets c}} x_w}{\ell(w)} \notag\\
    &= \sum_{v \in S_R(u) \setminus \Lambda^w} \frac{\mu_{\Lambda^w}\tp{\sigma^w} x_w}{\ell(w)}=\mu_{\Lambda^w}(\sigma^w) x_w,\label{eq:prop-1}
  \end{align}
  where the second equality follows from recursive constraints of $\*x$, and the fourth equality follows from the definition of $\ell(w)$. 
  Recursively applying~\eqref{eq:prop-1} gives us that
  \begin{align*}
    \sum_{i=0}^\ell \Gamma_{x,i} = \mu_\Lambda(\sigma)x_{rt}=\mu_\Lambda(\sigma),
  \end{align*}
  as $x_{rt}$ is set to $1$ by the LP.
  A similar proof works for $\Gamma_{y,\cdot}$.
  The first item holds.

  For any non-leaf node $w \in \+V(\+T)$ 
  and $w'\in\+C(w)$,
  $\sigma^{w'}$ must have the form $\sigma^w_{v_{w'}\gets c_{w'}}$ for some $v_{w'}\in S_R(u)\setminus \Lambda^w$ and $c_{w'} \in \-{supp}(\mu^{\sigma^w}_{v_{w'}})$,
  and $\Lambda^{w'}$ must be $\Lambda^w\cup\{v_{w'}\}$.
  Thus,
  \begin{align*}
    \sum_{\substack{w' \in \+C(w) \cap \+{BL}(\+T) }} \mu_{\Lambda^w\cup\{v_{w'}\} }\tp{\sigma^w_{v_{w'}\gets c_{w'}}} x_{w'}
    &\le \sum_{\substack{w' \in \+C(w) \cap \+{BL}(\+T)}} \mu_{\Lambda^w}\tp{\sigma^w} x_{w'} 
    \le \frac{\eta}{\ell(w)} \mu_{\Lambda^w}(\sigma^{w}) x_w,
  \end{align*}
  where the first inequality follows from the fact 
  that $ \mu_{\Lambda^w \cup \{v\} }\tp{\sigma^w_{v \gets c}}\le \mu_{\Lambda^w}\tp{\sigma^w}$ for any $v\in S_R(u)\setminus \Lambda^w$ and $c \in \-{supp}(\mu^{\sigma^w}_{v})$,
  and the second inequality follows from the overflow constraints in \Cref{alg:LP}. 
  Also notice that $\ell(w)=\ell(w')+1$.
  It implies
  \begin{align*}
    \Gamma_{x,i} \le \frac{\eta}{i} \sum_{\substack{w \in \+{V}(\+T)\setminus \+L(\+T):~\ell(w) = i }} \mu_{\Lambda^w}(\sigma^w) x_w.
  \end{align*}
  On the other hand, recursively applying~\eqref{eq:prop-1}, we have 
  \begin{align*}
    \sum_{j=0}^i \Gamma_{x,j}=\sum_{\substack{w \in \+{V}(\+T)\setminus \+L(\+T):~\ell(w) = i }} \mu_{\Lambda^w}(\sigma^w) x_w.
  \end{align*}
  A similar proof works for $\Gamma_{y,\cdot}$. The second item follows. 
  
  For the last two items, by the leaf constraints and \Cref{cond:LP},
  \begin{align*}
    \Gamma_{x,0} &= \sum_{\substack{w \in \+{GL}(\+T)  }}  \mu_{\Lambda^w}(\sigma^w) x_w 
    \le \sum_{\substack{w \in \+{GL}(\+T)  }} \mu_{\Lambda^w}(\sigma^w) r_+ \cdot \frac{\mu_{\Lambda^w}(\tau^w)}{\mu_{\Lambda^w}(\sigma^w)} \cdot  y_w
    = r_+ \Gamma_{y,0};\\
    \Gamma_{x,i} &= \sum_{\substack{w \in \+{BL}(\+T) :~\ell(w) = i-1 }}  \mu_{\Lambda^w}(\sigma^w) x_w 
    \le \tp{1+\eps}^2 \sum_{\substack{w \in \+{BL}(\+T) :~\ell(w) = i-1 }}  \mu_{\Lambda^w}(\sigma^w) r_+ \cdot \frac{\mu_{\Lambda^w}(\tau^w)}{\mu_{S^w}(\sigma^w)} \cdot  y_w\\
    &= \tp{1+\eps}^2 r_+ \Gamma_{y,i}.
  \end{align*}
  Similarly, $\Gamma_{x,0} \ge r_- \Gamma_{y,0}$ and $\Gamma_{x,i} \ge \tp{1+\eps}^{-2} r_- \Gamma_{y,i}$.
\end{proof}

With the help of \Cref{lem:Gamma}, we are now ready to give the proof of \Cref{lem:soundness}.

\begin{proof}[Proof of \Cref{lem:soundness}]
  Let $\eps > 0$ be input to \Cref{alg:LP}.
  For $0\le\ell\le \Delta^R$, define
  \begin{align} \label{eq:error-sequence}
    \eps_\ell \defeq 5b^{-2}\eta H(\ell) \cdot \eps,
  \end{align}
  where the function $H(\ell) = \sum_{i=1}^\ell \frac{1}{i}$ is the harmonic sum.
  Clearly $\eps_\ell$ is increasing in $\ell$.

  To prove \Cref{lem:soundness}, by Item~\eqref{item:Gamma-first} of \Cref{lem:Gamma}, we only need to show that
  \begin{align}\label{eq:lem-9}
    \tp{1+\eps_{\ell}}^{-1} r_- \le \frac{\sum_{i=0}^{\ell} \Gamma_{x,i}}{\sum_{i=0}^{\ell} \Gamma_{y,i}} \le \tp{1+\eps_{\ell}} r_+.
  \end{align}
  We do an induction on $\ell$. By Item~\eqref{item:Gamma-third} in \Cref{lem:Gamma}, \eqref{eq:lem-9} holds for $\ell = 0$. 
  Now assume \eqref{eq:lem-9} holds for $\ell-1$.
  We prove the upper bound first. By induction hypothesis, 
  \begin{align*}
    \sum_{i=0}^{\ell-1} \Gamma_{x,i} \le \tp{1+\eps_{\ell-1}} r_+ \sum_{i=0}^{\ell-1} \Gamma_{y,i}
  \end{align*}
  Furthermore, by~Item~\eqref{item:Gamma-fourth} of \Cref{lem:Gamma}, $\Gamma_{x,\ell} \le \tp{1+ \epsilon}^2 r_+ \Gamma_{y,\ell}$.
  Therefore,
  \begin{align*}
    \sum_{i=0}^{\ell} \Gamma_{x,i} 
    &\le r_+ \tp{\tp{1+\eps_{\ell-1}} \sum_{i=0}^{\ell-1} \Gamma_{y,i} +  \tp{1+\eps}^2 \Gamma_{y,\ell}}
  \end{align*}
  We claim that
  \begin{align*}
    \tp{1+\eps_{\ell-1}} \sum_{i=0}^{\ell-1} \Gamma_{y,i} +  \tp{1+\eps}^2 \Gamma_{y,\ell} \le \tp{1+\eps_{\ell}}\sum_{i=0}^{\ell} \Gamma_{y,i},
  \end{align*}
  which would finish the proof.
  Our choice of parameters satisfy \eqref{eqn:select-R} and \eqref{eq:eta}, which implies that
  \begin{align*}
    \eps_{\ell-1}<\eps_\ell \le \eta H(\Delta^R) \eps < \eps.
  \end{align*}
  Thus, the claim is equivalent to 
  \begin{align*}
    \Gamma_{y,\ell} \le \frac{\eps_{\ell}-\eps_{\ell-1}}{2\eps+\eps^2 - \eps_{\ell}}\sum_{i=0}^{\ell} \Gamma_{y,i}.
  \end{align*}
  As $\eps\le 3b^{-2}$ and $0<b<1$, $2\eps+\eps^2-\eps_{\ell}\le5b^{-2}\eps$,
  we just need to show
  \begin{align*}
    \Gamma_{y,\ell} \le \frac{\eps_{\ell}-\eps_{\ell-1}}{5b^{-2}\eps}\sum_{i=0}^{\ell} \Gamma_{y,i} = \frac{\eta}{\ell}\sum_{i=0}^{\ell} \Gamma_{y,i}.
  \end{align*}
  This is just Item~\eqref{item:Gamma-second} of \Cref{lem:Gamma}.

  The lower bound in~\eqref{eq:lem-9} holds by a similar argument.
  Specifically, we can flip the role of $x$ and $y$ and replace $r_+$ with $\frac{1}{r_-}$ in the proof above to show the lower bound.
\end{proof}

\subsection{The full algorithm}
\label{sec:determistic-algo}
With the \Cref{lem:error-bound} in hand, we are now able to construct the full algorithm for \Cref{thm:main},
which is a recursive marginal estimator.
It uses \Cref{alg:coupling-tree} and \Cref{algo:marginal-resolver} recursively and do a truncation at some depth $k$.
%
Recall that we have set the parameter $R$ to satisfy \eqref{eqn:select-R} and $\eta = b^{-1}\delta(R)$ in \eqref{eq:eta}.
Moreover, set the error parameter $\zeta_k$ of depth $k$ to be
\begin{align*}
  \zeta_k := \begin{cases}
    b^{-1}, & \text{ if } k = 0; \\
    2^{-k+1}, & \text{ if }k \geq 1.
  \end{cases}
\end{align*}
The full algorithm is given in \Cref{alg:marginal}.

\begin{algorithm}
    \caption{Recursive marginal estimator}
    \label{alg:marginal}
    \SetKwInOut{Input}{Input}
    \SetKwInOut{Output}{Output}
    \SetKwInOut{Parameters}{Parameters}
    \underline{Recursive-estimator} $(\sigma,\tau,\Lambda,k,u)$\;
    \Input{Partial configurations $\sigma,\tau \in [q]^\Lambda$ that only differ at a vertex $u \in \Lambda$, and $k \in \^{N}_{\ge 0}$}
    \Output{An estimate $\tilde{r}$}
    \Parameters{A positive integer $R$, a positive real $\eta$, and a function $\zeta_k$ for $k \in \^{N}_{\ge 0}$}
    $\+T \gets $ Coupling-Tree($\sigma,\tau$)\;
    \If{$k = 0$}{
        \Return 1\;
    }
    $\ell \gets \abs{S_R(u) \setminus \Lambda}$\;
    \If{$\ell = 0$}{
      \Return $r=\frac{\mu_{\Lambda}(\sigma)}{\mu_{\Lambda}(\tau)}$, computed by brute force;
    }
    \tcp{Calculate the marginal ratio of the leaves in $\+L(\+T)$ via recursion}
        \For{$w \in \+L(\+T)$}{
        $\tp{\sigma^w,\tau^w,\Lambda^w,D^w} \gets \mathrm{label}(w)$\;
                \eIf{$w \in \+{GL}(\+T)$}{
                  $R_{w} \gets$ the marginal ratio of $r = \frac{\mu_{\Lambda^w}(\sigma^w)}{\mu_{\Lambda^w}(\tau^w)}$ by brute force\;
                }
                {
                  $v\gets D^w$\;
                  $\gamma^w \gets \sigma^w_{u \gets \tau^w(u)}$\; 
                    $X \gets $ Recursive-estimator$\tp{\sigma^w,\gamma^w,\Lambda^w,k-1,u}$\label{line:X}\;
                    $Y \gets $ Recursive-estimator$\tp{\gamma^w,\tau^w,\Lambda^w,k-1,v}$\label{line:Y}\;
                    $R_w \gets X \cdot Y$\;
                }
        }
    \Return Marginal-estimator $(\+T, \*R, 2\zeta_{k-1} + \zeta_{k-1}^2)$\;
\end{algorithm}

Next we analyse the accuracy and the running time of \Cref{alg:marginal}.

\begin{lemma}[accuracy] \label{lem:accuracy}
  Let $\tilde{r}$ be the estimation given by \Cref{alg:marginal}, it holds that for $k \geq 0$,
  \begin{align*}
    (1 + \zeta_k)^{-1} \leq \tilde{r}/r \leq 1 + \zeta_k.
  \end{align*}
\end{lemma}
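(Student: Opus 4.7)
The plan is to induct on $k$.

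For the base case $k=0$, the algorithm returns $\tilde r=1$. Since $\sigma,\tau\in[q]^\Lambda$ agree on $\Lambda\setminus\{u\}$, letting $\psi$ be their common restriction one has $r=\mu^\psi_u(\sigma(u))/\mu^\psi_u(\tau(u))$, and by $b$-marginal boundedness (\Cref{def:marginal-bound}) both numerator and denominator lie in $[b,1]$, so $r\in[b,b^{-1}]$. Hence $\tilde r/r=1/r\in[b,b^{-1}]\subseteq[(1+b^{-1})^{-1},1+b^{-1}]$, matching the claim with $\eps(0)=b^{-1}$.

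For the inductive step at depth $k\ge 1$, assume the bound for $k-1$ and examine the leaves of the coupling tree $\+T$. Good leaves carry exact ratios computed by brute force, while for each bad leaf $w$ the algorithm sets $R_w=X\cdot Y$, where $X,Y$ are produced by the depth-$(k-1)$ recursive calls in lines~\ref{line:X} and~\ref{line:Y} and approximate $\mu_{\Lambda^w}(\sigma^w)/\mu_{\Lambda^w}(\gamma^w)$ and $\mu_{\Lambda^w}(\gamma^w)/\mu_{\Lambda^w}(\tau^w)$ respectively. By the inductive hypothesis each sits within factor $1+\eps(k-1)$ of its target, so the telescoped product $R_w$ approximates $\mu_{\Lambda^w}(\sigma^w)/\mu_{\Lambda^w}(\tau^w)$ within $(1+\eps(k-1))^2$. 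Thus $\*R$ satisfies \Cref{cond:LP} with margin $\eps_{\mathrm{LP}}=2\eps(k-1)+\eps(k-1)^2$; a quick case split on $k$ shows $\eps_{\mathrm{LP}}\le 3b^{-2}$, so \Cref{lem:error-bound} applies and Marginal-estimator returns $\tilde r$ with error $\hat\eps=5b^{-2}\eta H(\Delta^R)\cdot\eps_{\mathrm{LP}}$.

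To close the induction I must verify $\hat\eps\le\eps(k)$. Substituting $\eta=b^{-1}\delta(R)$ from \eqref{eq:eta} together with the choice of $R$ in \eqref{eqn:select-R} gives $\hat\eps\le (b/6)\,\eps_{\mathrm{LP}}$. For $k\ge 2$ we have $\eps(k-1)\le 1$, hence $\eps_{\mathrm{LP}}\le 3\eps(k-1)$, and therefore $\hat\eps\le \eps(k-1)/2=\eps(k)$; the single case $k=1$ is verified directly using $\eps(0)=b^{-1}$. The main obstacle is precisely this final numerical balance: forming $R_w=XY$ amplifies relative error quadratically while Marginal-estimator contracts it only by the linear factor $5b^{-2}\eta H(\Delta^R)$, and the choice of $R$ in \eqref{eqn:select-R} is calibrated so that contraction beats amplification by at least a factor of $2$ at every recursion level—ensuring this balance remains uniform across the transition from the large base-case error $\eps(0)=b^{-1}$ into the halving regime is where the care lies.
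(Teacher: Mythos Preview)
Your proposal is correct and follows essentially the same approach as the paper's proof: induction on $k$, with the same base-case analysis via the marginal lower bound, the same use of \Cref{lem:error-bound} for the inductive step, and the same appeal to \eqref{eqn:select-R} and \eqref{eq:eta} to check that the contracted error $\hat\eps$ beats the amplified error $\eps_{\mathrm{LP}}$. The one technical detail you gloss over that the paper makes explicit is why the intermediate configuration $\gamma^w$ is feasible (so that $\mu_{\Lambda^w}(\gamma^w)>0$ and the telescoping product is well-defined): this follows from permissiveness via \eqref{eq:local2global}, since $\gamma^w$ is locally feasible by construction.
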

\begin{proof}
  We prove by induction on $k$.
  When $k = 0$, \Cref{alg:marginal} simply returns $1$.
  Let $\gamma = \sigma_{\Lambda\setminus \set{u}}$ be the common configuration between $\sigma$ and $\tau$.
  Then
  \begin{align*}
    \frac{\mu_{\Lambda}(\sigma)}{\mu_{\Lambda}(\tau)} & = \frac{\mu_{\Lambda}(\sigma)}{\mu_{\Lambda\setminus \set{u}}(\gamma)} \cdot \frac{\mu_{\Lambda\setminus\set{u}}(\gamma)}{\mu_{\Lambda}(\tau)} \leq b^{-1}.
  \end{align*}
  Similarly, $\frac{\mu_{\Lambda}(\sigma)}{\mu_{\Lambda}(\tau)} \ge b$.
  Since $\zeta_0=b^{-1}$, the base case holds.

  When $k \geq 1$, by the induction hypothesis, we know that $X$ and $Y$ obtained at \Cref{line:X} and \Cref{line:Y} in \Cref{alg:marginal} are estimates to $\frac{\mu_{\Lambda^w}(\sigma^w)}{\mu_{\Lambda^w}(\gamma^w)}$ and $\frac{\mu_{\Lambda^w}(\gamma^w)}{\mu_{\Lambda^w}(\tau^w)}$, respectively, both with $(1 + \zeta_{k-1})$ relative error.
  Note that here $\mu_{\Lambda^w}(\gamma^w)>0$.
  This is because $\gamma^w$ is locally feasible by its definition, 
  and as the spin system is permissive (recall \Cref{def:permissive}), by \eqref{eq:local2global}, $\gamma^w$ is globally feasible.
  Since $R_w = X \cdot Y$, this implies that the relative error of $R_w$ to $\frac{\mu_{\Lambda^w}(\sigma^w)}{\mu_{\Lambda^w}(\tau^w)}$ is
  \begin{align*}
    (1 + \zeta_{k-1})^2 = 1 + 2\zeta_{k-1} + \zeta_{k-1}^2,
  \end{align*}
  which means that $\*R$ satisfies a $(2\zeta_{k-1}+\zeta_{k-1}^2)$-error bound (recall \Cref{cond:LP}).

  For $k=1$, $(2\zeta_{0}+\zeta_{0}^2)\le 3b^{-2}$ and thus \Cref{lem:error-bound} applies.
  The lemma holds because in this case, by our choice of parameters in \eqref{eqn:select-R} and~\eqref{eq:eta},
  \begin{align*}
    5b^{-2}\eta H(\Delta^R) (2\zeta_{0} + \zeta_{0}^2) \le 15b^{-4}\eta H(\Delta^R) <1 = \zeta_{1}.
  \end{align*}

  For $k\ge 2$, by \eqref{eqn:select-R} and~\eqref{eq:eta}, again,
  \begin{align*}
    5b^{-2}\eta H(\Delta^R) (2\zeta_{k-1} + \zeta_{k-1}^2) \leq 15b^{-2}\eta H(\Delta^R) \zeta_{k-1} < \zeta_{k-1}/2 = \zeta_{k}.
  \end{align*}
  Thus the lemma follows from \Cref{lem:error-bound} as well.
\end{proof}

Now, we can finish the proof of \Cref{thm:main}. 
\begin{proof} [Proof of \Cref{thm:main}]
  First note that the assumptions of the marginal lower bound and coupling independence both hold with respect to an arbitrary conditioning.
  Thus, by standard self-reduction \cite{JVV86}, approximating $Z$ up to relative error $1+\eps$ can be reduced to approximate $n$ marginal probabilities up to relative error $1+\frac{\eps}{n}$.
  The latter task then can be reduced to approximate $q-1$ marginal ratios up to the same relative error $1+\frac{\eps}{n}$,
  for which we use \Cref{alg:marginal}.
  Suppose the running time of \Cref{alg:marginal} with relative error $1+\frac{\eps}{n}$ is $T_{\eps/n}$, 
  then the overall running time is $O(qnT_{\eps/n})$.

  By \Cref{lem:accuracy}, to achieve relative error $1+\frac{\eps}{n}$, we only need to pick $k = \ceil{\log_2(n/\eps)} + 1$.
  Denote
  \begin{itemize}
    \item by $T$ an upper bound of the cost of \Cref{alg:coupling-tree} and \Cref{algo:marginal-resolver} (i.e., the max cost of a single recursive step);
    \item and by $N$ maximum number of the nodes of the coupling tree.
  \end{itemize}
  Clearly the total cost of \Cref{alg:marginal} is bounded by $T\cdot \sum_{i=0}^k (2N)^i \leq 2T (2N)^{k}$.
  We claim that
  \begin{align*}
    T = O(k + \log b^{-1}) \cdot\-{poly}(N) \quad \text{and} \quad N \le 2(q^2\Delta^R)^{\Delta^R },
  \end{align*}
  which implies 
  \begin{align*}
    T_{\eps/n}\le 2T \cdot (2N)^{k} = \log b^{-1} (n/\eps)^{O(\Delta^R \log (q \Delta^R))}.
  \end{align*}
  By \Cref{thm:CI->sphere-CI}, we can take $\delta(R)=2C 2^{-\ceil{R/2C}}$,
  where $C$ is the coupling independence parameter.
  To satisfy \eqref{eqn:select-R}, $R=O(C(\log b^{-1}+\log C + \log\log\Delta))$ suffices.
  Thus the bound on the running time in terms of $b,C,q,\Delta$ is
  \begin{align*}
    T_{\eps/n} = \tp{\frac{n}{\eps}}^{\Delta^{O(C(\log b^{-1} + \log C  +\log\log\Delta))} \log q}.
  \end{align*}

  To show the claim,
  for $T$, it is direct to see that it would take $O(N)$ time to construct $\+T$ in \Cref{alg:coupling-tree}.
  Then, for \Cref{algo:marginal-resolver},
  the binary search step requires repeated calls to \Cref{alg:LP} to estimate the marginal ratio.
  The number of calls is at most $O(\log \frac{n}{b\eps}) = O(k + \log b^{-1})$. 
  For each call to \Cref{alg:LP}, it costs $\-{poly}(N)$ time to construct and solve the LP.

  For $N$, each step of the coupling tree has at most $\Delta^R$ choices for the vertex $v$, and at most $q^2$ choices for the colours on $v$.
  This continues for at most $\Delta^R$ times, giving the bound on $N$.
\end{proof}

For \Cref{cor:main-colouring-high},
the colouring instance in \Cref{cor:main-colouring-high} has $O_{\Delta,q}(1)$-CI, proved in \cite[Theorem 5.10 and Lemma 5.13]{CLMM23}.\footnote{Theorem 5.10 in \cite{CLMM23} is stated for spectral independnce. However, \cite[Lemma 5.13]{CLMM23} proved the stronger coupling independence result.}
Thus \Cref{cor:main-colouring-high} follows from \Cref{thm:main} and \Cref{obs:permissive-lb}.
The generic marginal lower bound from \Cref{obs:permissive-lb} can be improved using \cite[Lemma 3]{GKM15},
which implies $b\ge q^{-1}\left( \frac{2}{3} \right)^{\Delta}$ in this setting.
Moreover, \cite[Theorem 5.9 and Lemma 5.14]{CLMM23} directly establish influence decay for colourings in high-girth graphs. One can directly use them to set the parameter $R$ in~\eqref{eqn:select-R} to obtain an algorithm with a better running time.

\section{Low disagreement coupling from Markov chain coupling}\label{sec:contractive-coupling}

In this section, we establish coupling independence from contractive coupling for Markov chains.
We note that previous results like~\cite{blanca2022mixing, liu2021coupling} have investigated how to establish spectral independence by using contractive couplings of Markov chains.
As a byproduct, their results already implicitly suggested how to establish coupling independence via contractive coupling.
However, the proofs of previous works~\cite{blanca2022mixing, liu2021coupling} rely heavily on Stein's method (see~\cite{bresler2019stein, reinert2019approximating}), which is analytical and arguably unintuitive.
In this section, we first give a direct and simple proof that contractive coupling can be used to establish coupling independence.

%
Given a Markov chain $P$ and the current state $X$, we use $P(X)$ to denote the next (random) state of $P$.
Also, recall the Wasserstein distance from \Cref{def:Wasserstein}.

\begin{lemma} \label{lem:W1-via-coupling}
  Let $(\Omega, d)$ be a finite metric space.
  Let $C > 0$ and $\delta \in (0, 1)$ be two parameters.
  Let $P, Q$ be Markov chains on $\Omega$ with stationary distribution $\mu$ and $\nu$.
  Suppose the following conditions are satisfied:
  \begin{itemize}
  \item ($C$-disagreement) for $X \in \-{supp}(\mu)$, $\+W_{\-d}(P(X), Q(X)) \leq C$;
  \item ($Q$ has $\delta$-contraction) for $X \in \-{supp}(\mu), Y \in \-{supp}(\nu)$, $\+W_{\-d}(Q(X), Q(Y)) \leq (1-\delta)\, \-d\tp{X,Y}$.
  \end{itemize}
  Then, it holds that the W1-distance between $\mu, \nu$ can be bounded by $\+W_{\-d}(\mu, \nu) \leq C/\delta$.
\end{lemma}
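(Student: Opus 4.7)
The plan is to apply one step of both Markov chains from their stationary distributions and use the triangle inequality for the $1$-Wasserstein metric to split the resulting estimate into a ``disagreement'' piece and a ``contraction'' piece. Concretely, let $X\sim\mu$ and $Y\sim\nu$ be coupled optimally, so that $\Ex[\-d(X,Y)]=\+W_{\-d}(\mu,\nu)$. Since $\mu$ is stationary for $P$ we have $P(X)\sim\mu$, and since $\nu$ is stationary for $Q$ we have $Q(Y)\sim\nu$. Using the paper's convention of writing $\+W_{\-d}(X',Y')$ for the W1-distance between the laws of two random variables, I would start from
\[\+W_{\-d}(\mu,\nu)=\+W_{\-d}(P(X),Q(Y))\le\+W_{\-d}(P(X),Q(X))+\+W_{\-d}(Q(X),Q(Y)),\]
which is the triangle inequality for the W1 distance on distributions.

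For the first summand, I would apply the $C$-disagreement hypothesis pointwise: for each $x\in\-{supp}(\mu)$ there is a coupling of $P(x)$ and $Q(x)$ with expected distance at most $C$. Mixing these couplings over $X\sim\mu$ yields a valid coupling of $(P(X),Q(X))$ with expected distance at most $C$, so $\+W_{\-d}(P(X),Q(X))\le C$.

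For the second summand, I would combine the optimal coupling of $(X,Y)$ with the contractive coupling of $Q$: for each pair $(x,y)$ with $x\in\-{supp}(\mu)$ and $y\in\-{supp}(\nu)$, the $\delta$-contraction hypothesis provides a coupling of $Q(x)$ and $Q(y)$ of expected distance at most $(1-\delta)\-d(x,y)$. Averaging these couplings against the optimal $(X,Y)$-coupling gives a coupling of $(Q(X),Q(Y))$ whose expected distance is at most $(1-\delta)\Ex[\-d(X,Y)]=(1-\delta)\+W_{\-d}(\mu,\nu)$.

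Combining the two bounds yields $\+W_{\-d}(\mu,\nu)\le C+(1-\delta)\+W_{\-d}(\mu,\nu)$, and rearranging gives the claimed $\+W_{\-d}(\mu,\nu)\le C/\delta$. The rearrangement is legitimate because $\Omega$ is finite, so $\-d$ is bounded and $\+W_{\-d}(\mu,\nu)<\infty$. I do not foresee a serious obstacle; the only mildly delicate point is the assertion that a mixture of optimal couplings is itself a valid coupling of the corresponding mixture distributions, which is standard and follows directly from the definition of a coupling.
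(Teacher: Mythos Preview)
Your argument is correct and is essentially the paper's own proof: start from stationarity so that $\+W_{\-d}(\mu,\nu)=\+W_{\-d}(P(X),Q(Y))$, apply the triangle inequality through $Q(X)$, bound the two pieces by $C$ and $(1-\delta)\+W_{\-d}(\mu,\nu)$, and rearrange. The only difference is that you spell out the ``mixture of couplings'' justification, which the paper leaves implicit.
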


\begin{proof}
  Let $X, Y$ be sampled from $\mu$ and $\nu$.
  Then we have $P(X) \sim \mu$ and $Q(Y) \sim \nu$, which implies 
  \begin{align*}
    \+W_{\-d}(X,Y) = \+W_{\-d}(P(X),Q(Y)) 
    &\leq \+W_{\-d}(P(X), Q(X)) + \+W_{\-d}(Q(X), Q(Y)) \\
    &\leq C + (1-\delta) \+W_{\-d}(X,Y).
  \end{align*}
  We finish the proof by rearranging terms.
\end{proof}

\begin{remark}
  Note that $P$ and $Q$ in \Cref{lem:W1-via-coupling} are Markov chains on $\Omega$.
  This does not require the support of $\mu$ and $\nu$ to be exactly $\Omega$.
  In fact, it is possible that $\-{supp}(\mu) \subset \Omega$ or $\-{supp}(\nu) \subset \Omega$,
  and states in $\Omega\setminus\-{supp}(\mu)$ or $\Omega\setminus\-{supp}(\nu)$ are transient.
%
\end{remark}

To establish coupling independence, 
one could apply \Cref{lem:W1-via-coupling} with $\mu^{\sigma}$ and $\mu^{\tau}$ where $\sigma$ and $\tau$ are two partial configurations such that $\dist(\mu,\nu)=1$, as in the definition of \Cref{def:CI}.
The first condition typically holds for any Markov chain with local moves.
This is because all new disagreement in one step must be near where $\sigma$ and $\tau$ disagrees.
The second condition is implied by the existence of contractive couplings, and there are plenty available in the literature.
We will see \Cref{lem:W1-via-coupling} in action in the applications next.
In fact, we will design an intermediate distribution and use triangle inequality in the next section. 

\subsection{Application: coupling independence for proper colourings}
In this section we show coupling independence for colourings using \Cref{lem:W1-via-coupling}.

\begin{theorem} \label{thm:CI-11/6-coloring}
  Let $G = (V,E)$ be a graph with maximum degree $\Delta$.
  Let $\mu$ be the uniform distribution over $q$-colouring of $G$.
  If either of the following holds:
  \begin{itemize}
    \item $\Delta \geq 125, q \geq 1.809\Delta$; or
    \item $\Delta \geq 3, q \ge (11/6-\epsilon_0)\Delta$ for some fixed parameter $\epsilon_0 \approx 10^{-5}$,
  \end{itemize}
  then $\mu$ satisfies $O_{\Delta,q}(1)$-coupling independence.
  Moreover, the CI constant is $\-{poly}(q\Delta)$.
\end{theorem}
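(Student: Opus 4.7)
My plan is to apply \Cref{lem:W1-via-coupling} to the pair $(\mu^\sigma, \mu^\tau)$ for partial configurations $\sigma, \tau \in [q]^\Lambda$ that disagree only at some $u \in \Lambda$. Take $P$ and $Q$ to be two copies of the flip dynamics from \cite{Vig00} acting on $V \setminus \Lambda$, but with $\Lambda$ pinned to $\sigma$ (resp.\ $\tau$). Both chains live on a common state space $\Omega$ consisting of colourings of $V$ that extend the common prefix $\sigma|_{\Lambda \setminus \{u\}} = \tau|_{\Lambda \setminus \{u\}}$ and are proper on $V \setminus \Lambda$ given the respective boundary; on states where the value at $u$ is inconsistent with the pinning, we define the chain by first resetting $u$ to the pinned value (followed by a local recolouring of conflicts) before performing the flip step. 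The resulting chains have stationary distributions $\mu^\sigma$ and $\mu^\tau$, as defined in \eqref{eq:cond}, and only the boundary of $\Lambda$ plays any role, so the infeasible case is handled uniformly.

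For the $\delta$-contraction of $Q$, I would directly invoke the contractive couplings of \cite{CV24} (when $\Delta \geq 125$ and $q \geq 1.809\Delta$) or \cite{chen2019improved} (when $q \geq (11/6 - \epsilon_0)\Delta$), which yield an expected multiplicative Hamming-distance contraction of $(1 - \Omega(1/n))$ for the flip dynamics per step. The contraction argument in these papers is local: it decomposes the expected distance change into contributions from vertices near the current disagreements, and the same case analysis applies when some vertices of $\Lambda$ are pinned (since pinning only reduces the set of allowed moves, never creating new bad flips). Hence $\delta = \Omega(1/n)$ carries over to our pinned chain $Q$.

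For the $C$-disagreement, I would couple $P$ and $Q$ identically (same randomly chosen vertex, same auxiliary randomness). Starting from $X \in \operatorname{supp}(\mu^\sigma)$, the two chains produce different outcomes only when the selected flip move interacts with $u$, which happens when the chosen vertex is $u$ itself or a neighbour of $u$—an event of probability $O(\Delta/n)$. Conditional on this, the induced disagreement is confined to a single Kempe component, whose expected size is $\operatorname{poly}(q\Delta)$ using the flip probabilities of \cite{Vig00}. This yields a per-step $C = \operatorname{poly}(q\Delta)/n$, and combining with $\delta = \Omega(1/n)$ in \Cref{lem:W1-via-coupling} gives $\+W(\mu^\sigma, \mu^\tau) \leq C/\delta = \operatorname{poly}(q\Delta)$, independent of $n$, which is the desired coupling independence constant.

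The main obstacle is justifying that the contractive coupling from \cite{chen2019improved, CV24} continues to hold when one vertex in $\Lambda$ is pinned to inconsistent values and when the state $X$ may itself have $X(u)$ inconsistent with the $\tau$-pinning of $Q$. This requires checking that the definition of $Q$ on such transient states is genuinely a Markov chain on $\Omega$ with stationary $\mu^\tau$, and that the local contraction estimates (typically case-analyses on the number of colour classes at the chosen vertex) are not disturbed by a single extra pinned vertex—intuitively true because pinning only restricts moves, but requiring a careful audit of the flip analysis, particularly near $u$ itself.
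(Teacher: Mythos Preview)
Your overall strategy—apply \Cref{lem:W1-via-coupling} with flip dynamics on the two pinned instances—is the paper's approach as well. But there is a real gap, and it is precisely the ``main obstacle'' you flag yourself. To invoke \Cref{lem:W1-via-coupling} you need the $\delta$-contraction of $Q$ to hold for $X \in \-{supp}(\mu^\sigma)$ and $Y \in \-{supp}(\mu^\tau)$. Viewed as list-colouring instances on $V\setminus\Lambda$, a colouring $X$ that is proper for the $\sigma$-instance may be \emph{improper} for the $\tau$-instance (some neighbour $w$ of $u$ in $V\setminus\Lambda$ may satisfy $X(w)=\tau(u)$), and the contraction statements in \Cref{prop:coupling-charlie} and \Cref{prop:coupling-sitan} are stated for proper list colourings. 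Your proposed workaround—``resetting $u$ to the pinned value followed by a local recolouring of conflicts''—is confused: $u\in\Lambda$ is pinned and not a state variable, the ``local recolouring'' is undefined and may cascade, and in any case you would then need contraction for this new ad-hoc chain, which is not supplied by the cited results.

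The paper sidesteps this cleanly by inserting an \emph{intermediate} list-colouring instance $(G',L')$ on $V'=V\setminus\Lambda$, with $L'(w)=[q]$ for every $w\in N(u)\cap V'$ and $L'=L^\sigma=L^\tau$ elsewhere, and uniform distribution $\mu'$. By construction $\-{supp}(\mu^\sigma_{V'})\subseteq\-{supp}(\mu')$ and $\-{supp}(\mu^\tau_{V'})\subseteq\-{supp}(\mu')$, so taking $Q$ to be the flip dynamics for $(G',L')$, the contraction hypothesis of \Cref{lem:W1-via-coupling} holds directly for every $X\in\-{supp}(\mu^\sigma_{V'})$. One then bounds $\+W(\mu^\sigma_{V'},\mu')$ and $\+W(\mu',\mu^\tau_{V'})$ separately and concludes by the triangle inequality. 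A smaller point: for the $C$-disagreement term you appeal to an expected Kempe-component size of $\-{poly}(q\Delta)$, but the cleaner observation (used in the paper) is that the flip parameters in both \cite{chen2019improved,CV24} satisfy $p_\ell=0$ for all $\ell>6$, so any flip touches at most $6$ vertices and the randomly chosen vertex must lie within distance $6$ of $u$—at most $\Delta^7$ choices—giving $C=O(\Delta^7/|V'|)$ deterministically.
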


To apply \Cref{lem:W1-via-coupling}, we consider flip dynamics, for which contractive couplings are known under the conditions of \Cref{thm:CI-11/6-coloring} \cite{chen2019improved,CV24}.
In fact, we need to consider the conditional distribution $\mu^\sigma$ for some (not necessarily proper) partial colouring $\sigma$ on $\Lambda\subset V$.
As mentioned in \Cref{sec:prelim}, the distribution~$\mu^\sigma$ is the same as the uniform distribution over a list colouring instance,
obtained by removing $\Lambda$ and letting $L_v=[q]\setminus \{\sigma(u)\mid u\in \Lambda\cap N_G(v)\}$.
Note that if $q\ge (1+\alpha)\Delta$ for some $\alpha>0$,
then for any list colouring instance obtained this way, $\abs{L_v}-\deg_G(v)\ge\alpha\Delta$.
Also, as long as $q\ge \Delta+1$, even if $\sigma$ is not proper, the induced list colouring instance still has solutions.
Namely the system is permissive.

We define the flip dynamics for list colouring instances obtained this way.
Let the remaining graph be $G=(V,E)$ and $\{L_v\}_{v\in V}$ be the lists.
Notice $L_v\subseteq [q]$ for all $v\in V$.
Let $X$ be a (not necessarily proper) colouring.
We say a path $w_1-w_2-\cdots-w_\ell$ is an $(X(w_1), c)$-alternating path from $w_1$ to $w_\ell$ if for all $i$, we have $X(w_i) \subseteq \set{X(w_1), c}$ and $X(w_i)\neq X(w_{i+1})$.
The \emph{Kempe component} (or Kempe chain) for $X$, a vertex $v \in V$, and a colour $c\in[q]$ is defined by: 
\begin{align*}
  S_X(v, c) \defeq \set{u \in V \; \left\vert \text{ there is a $(X(v),c)$-alternating path from $v$ to $u$} \right.}.
\end{align*}
Moreover, let $S_X(v,X(v))\defeq\emptyset$.
Let $\{p_\ell\}_{\ell\ge 1}$ be some parameters.
Given a colouring $X$, the flip dynamics updates $X$ to $X'$ as follows:
\begin{enumerate}
  \item pick a vertex $v \in V$ and a colour $c\in [q]$ uniformly at random;
  \item let $S=S_{X}(v,c)$ and $\ell =\abs{S}$;
  \item if the colouring on $S$ can be flipped, do so with probability $p_\ell/\ell$ to obtain $X'$;
  \item otherwise, let $X'=X$.
\end{enumerate}
By flipping, we mean changing the colours of all $c$-coloured vertices in $S$ to $X(v)$ and change the colours of all $X(v)$-coloured vertices in $S$ to $c$.
Note that the new colour may not be available in the lists of corresponding vertices.
In that case, we simply let $X'=X$.
Note that a Kempe component of size $\ell$ is flipped with probability at most $\frac{p_\ell}{q n}$.
Given a list colouring instance $(G,L)$, the above transition rule is defined for all $X \in [q]^V$ even if $X$ is improper for the list colouring instance~$(G,L)$.

Flip dynamics was first considered by Vigoda~\cite{Vig00},
who designed a set of parameters $\{p_\ell\}_{\ell\geq 1}$ and a contractive coupling with respect to the Hamming distance when $q > (11/6)\Delta$.
As a consequence, the flip dynamics is rapidly mixing.
Later, Chen, Delcourt, Moitra, Perarnau and Postle~\cite{chen2019improved} improved the rapid mixing regime to $q \geq (11/6 - \epsilon_0)\Delta$ for some $\epsilon_0 \approx 10^{-5}$.
They showed that a contractive coupling exists for the flip dynamics with some different $\{p_\ell\}_{\ell\geq 1}$ by considering either variable length coupling with Hamming distance or using an alternative metric.
Very recently, Carlson and Vigoda~\cite{CV24} showed that if $\Delta \geq 125$ and $q \geq 1.809\Delta$ (note that $1.809 < 11/6-\epsilon_0$),
there is another set of $\{p_\ell\}_{\ell\geq 1}$ ensuring contractive coupling, by considering a more refined metric.

These alternative metrics can be related to the Hamming distance.
Given the metric space $(\Omega, \-d)$, we say that the metric $\-d$ is $\alpha$-equivalent (to the Hamming distance) for some $\alpha \geq 1$, if for all $x, y \in \Omega$, it holds that
\begin{align} \label{eq:def-ell-equiv}
  \alpha^{-1} \dist\set{x, y} \leq \-d\set{x,y} \leq \alpha \dist\set{x,y}.
\end{align}
All metrics mentioned above are in fact $2$-equivalent.

Moreover, what we really consider is list colouring instances.
One can verify (see \cite[Theorem 4.12 and Section 4.6]{CV24}, \cite[Section 6 and Appendix E]{chen2019improved}, and \cite[Lemma 4.12]{blanca2022mixing}) that the aforementioned contractive couplings extend to list colourings as well,
summarised as follows.

\begin{proposition} [\cite{CV24}] \label{prop:coupling-charlie}
  Let $\Delta\ge 125$ and $q$ be two integers.
  There exists a sequence of parameters $\{p_{\ell}\}_{\ell \geq 0}$  satisfying $p_i = 0$ for all $i > 6$ such the the following holds.
  Given a list colouring instance $(G,L)$ such that for any $v\in V$, $L_v\subseteq [q]$ and $\abs{L_v}-\-{deg}_G(v)\in[0.809 \Delta,5/6\Delta]$, let $\Omega \subseteq [q]^V$ denote the set of proper list colourings for $(G,L)$,
  there is a $2$-equivalent metric $\-d$ on $\Omega$ such that the flip dynamics $P$ with  $\{p_{\ell}\}_{\ell \geq 1}$ satisfies the following:
  for any two proper list colourings $X,Y \in \Omega$,
  \begin{align}\label{eq:contract-1}
    \+W_{\-d}(P(X),P(Y)) \leq \left(1 - \frac{10^{-5}}{nq}\right) \-d (X,Y),
  \end{align} 
  where $n$ is the number of vertices in $G$.
\end{proposition}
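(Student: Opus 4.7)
The plan is to lift the contractive coupling of Carlson and Vigoda~\cite{CV24} from proper $q$-colourings with $q\geq 1.809\Delta$ to the list colouring setting described, by verifying that their argument is local in nature and depends only on the per-vertex colour slack $|L_v|-\deg_G(v)$ rather than a global parameter $q$. Their original argument has three ingredients: a set of flip parameters $\{p_\ell\}_{\ell\geq 1}$ supported on $\ell\leq 6$, a $2$-equivalent weighted metric $\-d$ on the state space, and a one-step coupling of the flip dynamics whose expected $\-d$-distance contracts by a factor $1-10^{-5}/(nq)$. I would adopt all three ingredients directly, adapted to the list setting.

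First, I would take the same parameters $\{p_\ell\}_{\ell\geq 1}$ as in~\cite{CV24}, observing that these parameters are chosen by solving a potential-based linear program whose coefficients depend only on the local excess $|L_v|-\deg_G(v)$ at each vertex. The assumption $|L_v|-\deg_G(v)\in[0.809\Delta, 5/6\Delta]$ places this local excess in precisely the regime for which their parameters were engineered. Next, I would define $\-d$ as the weighted Hamming-type metric of~\cite{CV24}, but with the weight at each vertex $v$ computed using $|L_v|$ rather than a global $q$. The two-sided bound on $|L_v|-\deg_G(v)$ guarantees that these weights lie in a bounded interval around $1$, from which $2$-equivalence to the Hamming distance follows by their construction.

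Third, I would construct the one-step coupling exactly as in~\cite{CV24}: given proper list colourings $X,Y\in\Omega$, pick the same vertex-colour pair $(v,c)$ and couple the two proposed Kempe flips through their optimal local coupling on the pair of Kempe components $S_X(v,c)$ and $S_Y(v,c)$. The contraction analysis then splits into cases depending on whether $v$ lies in the disagreement set $X\triangle Y$, in its boundary, or far from it, and in each case the local inequalities bounding the expected change in $\-d$ involve only $|L_v|$, $\deg_G(v)$, the weights $p_\ell$, and the counts of disagreeing neighbours. These inequalities transfer verbatim from the proper-colouring case because the only way the ambient colour set $[q]$ entered the original argument was through the ratio $q/\Delta$, which is now replaced by $|L_v|/\deg_G(v)$ locally.

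The main obstacle is handling \emph{illegal} flip proposals, that is, proposals where swapping colours on a Kempe component would place some vertex $u$ at a colour outside $L_u$. In the proper-colouring setting every colour is legal everywhere, so this term is absent; in the list setting one must verify that illegal proposals occur rarely enough that their contribution to the expected distance change is absorbed into the slack of the contraction rate. The lower bound $|L_v|-\deg_G(v)\geq 0.809\Delta$ is exactly the quantitative input needed here, since it guarantees at least a constant fraction of proposed colours are legal at every vertex along a Kempe component of bounded length $\ell\leq 6$. Once this local legality bound is combined with the case analysis above, summing over $v$ and $c$ yields \eqref{eq:contract-1} with the stated rate $1-10^{-5}/(nq)$.
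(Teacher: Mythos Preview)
The paper does not give its own proof of this proposition: it is quoted as a result of Carlson and Vigoda, with a pointer to \cite[Theorem 4.12 and Section 4.6]{CV24} (and the analogous sections of \cite{chen2019improved} and \cite{blanca2022mixing}) for the list-colouring extension. Your outline is a reasonable sketch of what one finds there, and your key structural observation---that the Carlson--Vigoda analysis is local and depends only on the per-vertex slack $|L_v|-\deg_G(v)$ rather than a global $q$---is correct and is precisely why the proposition holds.

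One point in your sketch deserves sharpening. Your treatment of ``illegal'' flips frames the issue as illegal proposals being \emph{rare} and their contribution being absorbed into slack. That is not quite how the argument runs. By the paper's definition of the flip dynamics, an illegal flip simply leaves the state unchanged ($X'=X$), so an illegal proposal never \emph{creates} disagreement. The real cost of list constraints is on the \emph{good} side of the ledger: moves that would have decreased $\-d(X,Y)$ in the uniform $q$-colouring setting may now be blocked because the target colour lies outside some $L_u$. In the path-coupling accounting, this shrinks the count of good moves at the disagreement vertex from roughly $q-(\text{local obstructions})$ to $|L_v|-(\text{local obstructions})$, while the bad-move count is unchanged or smaller. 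The lower bound $|L_v|-\deg_G(v)\geq 0.809\Delta$ is what guarantees enough good moves survive to yield contraction; the upper bound $|L_v|-\deg_G(v)\leq \tfrac{5}{6}\Delta$ is what keeps the vertex weights in the metric $\-d$ within a factor~$2$ of one another, as you correctly noted for the $2$-equivalence claim.
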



\begin{proposition} [\cite{chen2019improved}] \label{prop:coupling-sitan}
  Let $\Delta\ge 3$ and $q$ be two integers, and $\epsilon_0 \approx 10^{-5} > 0$ be a constant.
  There exists a sequence of parameters $\{p_{\ell}\}_{\ell \geq 0}$  satisfying $p_i = 0$ for all $i > 6$ such the the following holds.  
  Given a list colouring instance $(G,L)$ such that for any $v\in V$, $L_v\subseteq [q]$ and $\abs{L_v}-\-{deg}_G(v)\ge (5/6-\eps_0) \Delta$, let $\Omega \subseteq [q]^V$ denote the set of proper list colourings for $(G,L)$,
  there is a $2$-equivalent metric $\-d$ on $\Omega$ such that the flip dynamics $P$ with  $\{p_{\ell}\}_{\ell \geq 1}$ satisfies the following:
  for any two proper colourings  $X,Y \in \Omega$,
  \begin{align}\label{eq:contract-2}
    \+W_{\-d}(P(X),P(Y)) \leq \left(1 - \frac{10^{-9}}{nq}\right)\-d(X,Y),
  \end{align} 
  where $n$ is the number of vertices in $G$.
\end{proposition}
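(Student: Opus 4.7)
The plan is to apply the flip-dynamics contraction proof of Chen, Delcourt, Moitra, Perarnau, and Postle~\cite{chen2019improved} essentially verbatim to the list-colouring setting, after checking that their case analysis only depends on the local slack $|L_v|-\deg_G(v)$ rather than on the global slack $q-\Delta$. First I would adopt their sequence $\{p_\ell\}_{\ell=1}^{6}$ (a numerical refinement of Vigoda's original choice~\cite{Vig00}) and their alternative metric $\-d$ on $\Omega$. Recall that $\-d$ is a vertex-weighted Hamming distance: the weight associated to a disagreement at a vertex $v^\star$ depends on the local structure of short bichromatic alternating paths incident to $v^\star$ and its immediate neighbourhood, and all weights lie in $[1/2,2]$, so that $\-d$ is automatically $2$-equivalent to Hamming distance.

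Next I would set up the one-step coupling. By the triangle inequality for the Wasserstein distance it suffices to verify~\eqref{eq:contract-2} when $\-d(X,Y)=\dist\{X,Y\}=1$, so I may assume $X$ and $Y$ differ at a single vertex $v^\star$. The coupling proposes the same pair $(v,c)$ in both chains and then couples the flip decisions on the two Kempe components $S_X(v,c)$ and $S_Y(v,c)$ optimally. When the components coincide and the flip is legal in both chains, one uses identity coupling; otherwise the flip decisions are coupled case-by-case to minimise the expected post-step $\-d$-distance. Since $p_\ell=0$ for $\ell>6$, only components of size at most six contribute, so the case enumeration is finite.

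The heart of the argument, and the main obstacle, is the one-step drift computation. One must enumerate all local configurations around $v^\star$ (triangle-free vs.\ triangle-containing neighbourhoods, short alternating paths of each possible shape up to length six, and each possible way the Kempe components of $X$ and $Y$ may branch) and verify that the expected change in $\-d$-distance is at most $-\tfrac{10^{-9}}{nq}$ whenever $|L_v|-\deg_G(v)\ge (5/6-\eps_0)\Delta$ at every vertex. The flip probabilities $\{p_\ell\}$ and the weights defining $\-d$ are jointly chosen so that a finite linear program encoding these worst-case inequalities is feasible with the stated slack $\eps_0\approx 10^{-5}$; the feasibility certificate is exactly what~\cite{chen2019improved} produces, and I would cite their construction rather than re-run the numerical optimisation.

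Finally, I would verify that the extension from ordinary $q$-colouring to list colouring is cost-free. Inspecting~\cite{chen2019improved}, the only quantities that enter each local inequality in their LP are counts of colours still available at the vertices involved in a Kempe component, and these are lower-bounded precisely by $|L_v|-\deg_G(v)$. Replacing $q-\Delta$ by this local quantity throughout preserves every inequality in the drift analysis, so the LP remains feasible under the hypothesis of the proposition. This \emph{local-slack suffices} observation has been used in the same way in~\cite{blanca2022mixing} for related flip-dynamics analyses, and in the analogous derivation of \Cref{prop:coupling-charlie} from~\cite{CV24}, which closes the proof.
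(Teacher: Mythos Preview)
Your proposal is correct and matches the paper's treatment: the paper does not give its own proof of this proposition but states it as a cited result, pointing to \cite[Section~6 and Appendix~E]{chen2019improved} for the flip-dynamics contraction (with the alternative metric and LP-based drift analysis you describe) and to \cite[Lemma~4.12]{blanca2022mixing} for the observation that only local slack $|L_v|-\deg_G(v)$ enters the inequalities, so the analysis transfers to list colouring. Your sketch is exactly the content of those references; the only minor slip is that when $X,Y$ differ at a single vertex one has $\dist\{X,Y\}=1$ but $\-d(X,Y)$ equals the vertex weight (in $[1/2,2]$), not necessarily $1$---the path-coupling reduction still goes through, of course.
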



\begin{remark}[Domains of colourings]
Given a list colouring instance $(G,L)$, let $\Omega = \-{supp}(\mu) = \{ \sigma\mid \sigma\in[q]^V \text{ and }\sigma \text{ is a proper list colouring for } (G,L) \}$, where $\mu$ denotes the uniform distribution over all proper list colourings.
  In  \Cref{prop:coupling-charlie} and \Cref{prop:coupling-sitan}, the metric $\-d$ is defined over  $\Omega$ and the contraction results in~\eqref{eq:contract-1} and~\eqref{eq:contract-2} hold for all $X,Y \in \Omega$, which is sufficient for our proof. 
  We remark that \cite{chen2019improved,CV24} actually proved stronger results, their metric is defined over a superset of $\Omega$ and the contraction results holds even for $X,Y$ beyond $\Omega$, because they both used path coupling~\cite{BubleyD97} which requires them to prove contraction even for improper list colourings. 
\end{remark}

Note that in \Cref{prop:coupling-charlie} and \Cref{prop:coupling-sitan},
we state the results in terms of W1-distance.
This is implied by the contractive couplings from \cite{CV24} and \cite{chen2019improved}.
More precisely, they show that for any $X,Y \in \Omega$, there exists a coupling of $P(X),P(Y)$ such that $\Ex[\-d(P(X),P(Y))] \leq (1-\frac{C}{nq}) \-d(X,Y)$, where $C=10^{-5}$ or $10^{-9}$ is the constant in~\eqref{eq:contract-1} and~\eqref{eq:contract-2}.

Now we are ready to prove \Cref{thm:CI-11/6-coloring}.

\begin{proof}[Proof of \Cref{thm:CI-11/6-coloring}]
Let $\sigma$ and $\tau$ be two (not necessarily proper) partial colourings on $\Lambda \subseteq V$ such that they differ at only $v$. Let $\mu$ denote the unifrom distribution over all proper $q$-colurings in $G=(V,E)$. We goal is to bound $\+W(\mu^\sigma,\mu^\tau)$ to establish the CI for $\mu$.

Let $(G^{\sigma},L^\sigma)$ and $(G^{\tau},L^{\tau})$ be the two list colouring instances induced by $\sigma$ and $\tau$.
Note that $G^\sigma = G^\tau$ and $L^\sigma(u) \neq L^\tau(u)$ only if $u \in N(v)$, where $N(v)$ is the set of neighbours of $v$ in $G$.
We introduce a middle list colouring instance $(G',L')$ such that $G' = G^\sigma=G^\tau = (V',E')$, where $V' = V \setminus \Lambda$, and for all $u \in V'\setminus N(v)$, $L'(u) = L^\sigma(u) =L^\tau(u)$ and for all $u \in V' \cap N(v)$, $L'(u) = [q]$. 
Note that $\sigma$ and $\tau$ differ only at one vertex $v$. We use the following bound 
\begin{align}\label{eq3}
  \+W(\mu^\sigma,\mu^\tau) = 1 + \+W(\mu^\sigma_{V'},\mu^\tau_{V'}) \leq 1 + \+W(\mu^\sigma_{V'},\mu') + \+W(\mu',\mu^\tau_{V'}),
\end{align}
where $\mu'$ is the uniform distribution for list colouring $(G',L')$.

We show how to bound $\+W(\mu^\sigma_{V'},\mu')$. The bound for $\+W(\mu',\mu^\tau_{V'})$ can be obtained from the same proof. 
Let the parameters $\{p_\alpha\}_{\alpha\geq 1}$ and the metric $\-{d}$ be the same as in either \Cref{prop:coupling-charlie} or \Cref{prop:coupling-sitan} for the list colouring $(G',L')$. 
Specifically, if we assume the first condition in \Cref{thm:CI-11/6-coloring}, we use \Cref{prop:coupling-charlie}, otherwise, we use \Cref{prop:coupling-sitan}.
The space $\Omega$ in both propositions is the same one, $\Omega = \-{supp}(\mu') \subseteq [q]^{V'}$, which is the set of all proper list colourings for $(G',L')$.
Let $P$ and $Q$ be flip dynamics for $(G^\sigma,L^\sigma)$ and $(G',L')$, respectively, using the same $\{p_\alpha\}_{\alpha\geq 1}$.
Let $\mu_P = \mu^\sigma_{V'}$ and $\mu_Q = \mu'$ denote the uniform distributions of list colourings $(G^{\sigma},L^\sigma)$ and $(G',L')$ respectively, which are the stationary distributions of $P$ and $Q$ respectively.
By the definition of $(G',L')$, $\-{supp}(\mu_P) \subseteq \-{supp}(\mu_Q)$.
Both $P$ and $Q$ can be viewed as Markov chains over the space $\-{supp}(\mu_Q)$ because $P(X),Q(X) \in \-{supp}(\mu_Q)$ if $X \in \-{supp}(\mu_Q)$.
Let $(\-{supp}(\mu_Q), \-d)$ be the metric assumed in the two propositions. The contraction results in~\eqref{eq:contract-1} and ~\eqref{eq:contract-2} hold for $Q$.

Next, we use \Cref{lem:W1-via-coupling} with $P$ and $Q$ to bound $ \+W(\mu_P,\mu_Q) $.
For any colouring $X \in \-{supp}(\mu_P)$, notice that $p_\alpha > 0$ only for $\ell \le 6$,
  and the transitions of flip dynamics $P$ and $Q$ can be different only if a neighbour of $v$ is in the Kempe component.
  Let $N(v)$ denotes the set of neighbours of $v$ in $G$.
  This allows us to couple $P(X)$ and $Q(X)$ as follows:
  \begin{enumerate}
    \item in both copies, choose the same vertex $u$ and the same colour $c$ and let $S=S_{X}(u,c)$;
    \item if $\abs{S}\ge 7$, let $P(X)=Q(X)=X$;
    \item if $v$ is not adjacent to any vertex in $S$ in graph $G$, couple $P(X)$ and $Q(X)$ perfectly;
    \item otherwise, couple $P(X)$ and $Q(X)$ independently.
  \end{enumerate}
  If the coupling goes to the last step, the vertex $u$ must be within distance $6$ from $v$ in the graph $G$.
  There are at most $\Delta^7$ such choices.
  When that happens, there are at most $12$ new disagreement.
  Thus, $\+W(P(X), Q(X)) \leq \frac{12\Delta^7}{|V'|}$.
  As $\-{d}$ is $2$-equivalent, it implies
  \begin{align}\label{eq:bound1}
    \+W_{\-{d}}(P(X), Q(X)) \leq \frac{24\Delta^7}{|V'|}.
  \end{align}
  This verifies the first condition of \Cref{lem:W1-via-coupling}.

  For the second condition of \Cref{lem:W1-via-coupling},
  we need to bound the distance $\+W_{\-d}(Q(X),Q(Y))$ for $X \in \-{supp}(\mu_P)$ and $Y \in \-{supp}(\mu_Q)$. 
  By the definition of the instance $(G',L')$, it holds that $\-{supp}(\mu_P) \subseteq \-{supp}(\mu_Q)$, and thus $X \in \-{supp}(\mu_Q)$. 
  Moreover, $(G',L')$ has larger colour lists than $(G^\sigma,L^\sigma)$ and $(G^\tau,L^\tau)$.
  Thus $(G',L')$ satisfies the condition of either \Cref{prop:coupling-charlie} or \Cref{prop:coupling-sitan}, 
  which implies that $Q$ has $C/(q|V'|)$-contraction for some constant $C>0$ with respect to $\-{d}$.
  Thus, together with \eqref{eq:bound1}, we apply \Cref{lem:W1-via-coupling} with $\mu=\mu_P,\nu=\mu_Q$ and $\Omega = \-{supp}(\mu_Q)$ to derive
  \begin{align*}
    \+W_{\-{d}}(\mu^\sigma_{V'},\mu') = \+W_{\-{d}}(\mu_P,\mu_Q) \leq  \frac{24\Delta^7/|V'|}{C/(q|V'|)} = \mathrm{poly}(q\Delta).
  \end{align*}

  As $\-{d}$ is $2$-equivalent to the Hamming distance, we have $\+W(\mu^\sigma_{V'},\mu') = \mathrm{poly}(q\Delta)$. Since the same proof works for $\+W(\mu^\tau_{V'},\mu')$, the lemma follows from~\eqref{eq3}.
\end{proof}

In the proof above, all analysis except the first equation in ~\eqref{eq3} considers list colouring instances on the subgraph $G'= G[V']$.
The intermediate instance $(G',L')$ is introduced because we need to ensure that $X\in \-{supp}(\mu_Q)$ when applying \Cref{prop:coupling-charlie} or \Cref{prop:coupling-sitan}.
When verifying the first condition in \Cref{lem:W1-via-coupling}, we only use the fact that the flip chain is local and the metric $\-{d}$ is 2-equivalent. 
When verifying the second condition, we use the contraction results in \Cref{prop:coupling-charlie} or \Cref{prop:coupling-sitan} for the flip chain $Q$ over list colourings of $(G',L')$.

\Cref{cor:main-colouring} is then a direct consequence of \Cref{thm:main}, \Cref{thm:CI-11/6-coloring}, and the marginal lower bound \cite[Lemma 3]{LY13}.

\subsection{Application: coupling independence via the Dobrushin-Shlosman condition}

In this section we show \Cref{cor:main-DS}.
Recall the Dobrushin-Shlosman condition from \Cref{def:DS}.

\begin{theorem} \label{thm:DS-CI}
  Let $\delta \in (0,1)$ be a parameter.
  Suppose $\mu$ is the Gibbs distribution of a permissive spin system $(G,q,A_E,A_V)$ satisfying the Dobrushin-Shlosman condition with gap $\delta$.
  Then, $\mu$ satisfies $(\frac{\Delta}{\delta}+1)$-coupling independence.
\end{theorem}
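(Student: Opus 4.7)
Let $\sigma,\tau$ be partial configurations on $\Lambda\subseteq V$ differing only at a vertex $u$, and set $V':=V\setminus\Lambda$. The plan is to invoke \Cref{lem:W1-via-coupling} with $(\Omega,\-d)=([q]^{V'},\dist)$ (Hamming distance), taking $P$ and $Q$ to be the Glauber dynamics on $[q]^{V'}$ with stationary distributions $\mu^\sigma$ and $\mu^\tau$ respectively. Since the system is permissive, every relevant conditional marginal $\mu^{\sigma}_v(\cdot\mid X_{V'\setminus\{v\}})$ and $\mu^{\tau}_v(\cdot\mid X_{V'\setminus\{v\}})$ is well-defined, so both chains are well-defined transition kernels on all of $\Omega$.

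First I would verify the ``$C$-disagreement'' hypothesis. For any $X\in\Omega$ couple $P(X)$ and $Q(X)$ by selecting the same uniformly random vertex $v\in V'$ and then using the optimal TV coupling of the two resampling distributions. If $v\notin N(u)$, the two resampling distributions coincide because $\sigma$ and $\tau$ agree on $N(v)\cap\Lambda$, so the coupled outputs agree. If $v\in N(u)\cap V'$, the TV distance between the two resamplings is bounded exactly by the Dobrushin influence $\rho(u,v)\le 1$. Hence
\begin{align*}
  \+W(P(X),Q(X))\le \frac{1}{|V'|}\sum_{v\in N(u)\cap V'}\rho(u,v)\le \frac{\Delta}{|V'|},
\end{align*}
giving $C=\Delta/|V'|$.

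Next I would verify the contraction hypothesis for $Q$. The key observation is that the Dobrushin-Shlosman condition is inherited by pinned systems: for $u,v\in V'$, the influence $\rho^\tau(u,v)$ in the system $\mu^\tau$ is obtained by maximising over a subset of the configurations in \eqref{eqn:defDob}, so $\rho^\tau(u,v)\le \rho(u,v)$ and therefore $\|\rho^\tau\|_1\le 1-\delta$. The standard Bubley-Dyer path-coupling argument then gives, for any two states at Hamming distance one, a one-step coupling of $Q$ whose expected distance is at most $1-\delta/|V'|$; extending by the path-coupling lemma yields $\+W(Q(X),Q(Y))\le (1-\delta/|V'|)\,\dist(X,Y)$ for all $X,Y\in\Omega$. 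Thus the second hypothesis of \Cref{lem:W1-via-coupling} holds with contraction rate $\delta'=\delta/|V'|$.

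Finally, applying \Cref{lem:W1-via-coupling} with $C=\Delta/|V'|$ and $\delta'=\delta/|V'|$ gives $\+W(\mu^\sigma,\mu^\tau)\le C/\delta'=\Delta/\delta$, which is the desired $\frac{\Delta}{\delta}$-coupling independence. The proof is essentially routine; the only points needing care are (i) ensuring both Glauber kernels are sensible on the \emph{same} state space $\Omega$ despite $\sigma$ or $\tau$ possibly being infeasible, which is exactly what permissiveness guarantees, and (ii) checking that Dobrushin's condition passes to the pinned system, which is immediate from the definition as a maximum. I do not anticipate a serious technical obstacle.
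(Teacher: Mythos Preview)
Your proposal is correct and follows essentially the same approach as the paper's proof: apply \Cref{lem:W1-via-coupling} with $P,Q$ the Glauber dynamics for $\mu^\sigma,\mu^\tau$, verify the $C$-disagreement bound $C=\Delta/|V'|$ via the shared-vertex coupling, and verify $\delta/|V'|$-contraction for $Q$ by noting the Dobrushin influence matrix of the pinned system is entrywise dominated by that of $\mu$ (the paper invokes \Cref{prop:doub-glauber-coupling} for this, which is exactly the Bubley--Dyer path-coupling step you spell out). The two points you flag as needing care---permissiveness making the kernels well-defined and the Dobrushin condition passing to pinnings---are precisely the two remarks the paper makes as well.
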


The proof of \Cref{thm:DS-CI} follows from \Cref{lem:W1-via-coupling} and a well-known contractive coupling for the Glauber dynamics when the $1$-norm of the Dobrushin influence matrix $\rho$ is bounded (see~\cite{BubleyD97,hayes2006simple} and references therein).
Here Glauber dynamics is a well-studied Markov chain where in each step, we uniformly at random select a variable, and update it conditioning on the rest of the configuration.

\begin{proposition} \label{prop:doub-glauber-coupling}
  Let $\delta \in (0,1)$ be a parameter.
  Suppose $\mu$ is a distribution over $[q]^V$ satisfying the Dobrushin-Shlosman condition with gap $\delta$.
  Let $P$ be the Glauber dynamics for $\mu$.
  Then for every $X, Y \in [q]^V$, 
  \begin{align*}
    \+W_{\dist}(P(X), P(Y)) \leq \tp{1 - \frac{\delta}{n}} \dist\tp{X,Y}.
  \end{align*}
\end{proposition}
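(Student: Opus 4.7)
The plan is to follow the classical Bubley--Dyer path coupling argument (see \cite{BubleyD97,hayes2006simple}). By the path coupling principle, it suffices to exhibit a one-step coupling that contracts the Hamming distance for pairs $X, Y$ at Hamming distance exactly one, and then extend along geodesics to arbitrary pairs. So first I would reduce to the case where $X$ and $Y$ differ at exactly one vertex $u \in V$.

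Given such $X, Y$, I couple $P(X)$ and $P(Y)$ by choosing the same update vertex $v \in V$ uniformly at random, and then optimally coupling the resampled spin at $v$ under the two conditional marginals $\mu_v^{X(V \setminus v)}$ and $\mu_v^{Y(V \setminus v)}$. The analysis then splits into three cases based on the chosen $v$. First, if $v = u$, then $X$ and $Y$ agree on $V \setminus v$, so $\mu_v^{X(V \setminus v)} = \mu_v^{Y(V \setminus v)}$ and the optimal coupling removes the disagreement entirely; this contributes a decrease of $1$ to the expected distance with probability $1/n$. Second, if $v \neq u$, then the two conditionings $X(V \setminus v)$ and $Y(V \setminus v)$ differ only at $u$, so by the definition of the Dobrushin influence matrix $\rho$ in \eqref{eqn:defDob}, the TV-distance between the two marginals is at most $\rho(u,v)$, and the optimal coupling creates a disagreement at $v$ with probability at most $\rho(u,v)$. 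Summing,
\begin{align*}
  \Ex[\dist(P(X), P(Y))] \le 1 - \frac{1}{n} + \frac{1}{n}\sum_{v \neq u}\rho(u,v) \le 1 - \frac{1}{n} + \frac{1-\delta}{n} = 1 - \frac{\delta}{n},
\end{align*}
where the second inequality uses the Dobrushin-Shlosman condition $\Vert\rho\Vert_1 \le 1 - \delta$ together with $\rho(u,u) = 0$ (a conditional marginal at $u$ does not depend on the value of $u$ in the conditioning).

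To lift this to arbitrary $X, Y \in \-{supp}(\mu)$, I invoke path coupling: choose any geodesic $X = Z_0, Z_1, \ldots, Z_m = Y$ in the Hamming metric with $m = \dist(X,Y)$ and $\dist(Z_i, Z_{i+1}) = 1$, apply the one-step coupling above to each consecutive pair (so each pair contracts in expectation by the factor $1 - \delta/n$), and glue by the triangle inequality for the Wasserstein distance. This yields $\+W_{\dist}(P(X), P(Y)) \le (1 - \delta/n)\dist(X,Y)$ as required. The only mild subtlety is that one must check that all intermediate configurations $Z_i$ remain in a domain where the one-step coupling is defined; since Glauber dynamics and the Dobrushin matrix $\rho$ are defined for arbitrary configurations in $[q]^V$, and the permissive assumption ensures every conditional single-site marginal is well defined, there is no obstruction. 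I do not foresee a significant obstacle; the argument is entirely standard once the Dobrushin-Shlosman bound is in hand.
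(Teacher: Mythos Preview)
The paper does not actually prove this proposition: it is stated as a well-known fact with a citation to \cite{BubleyD97,hayes2006simple}, and the text immediately moves on to the proof of \Cref{thm:DS-CI}. Your argument is precisely the standard Bubley--Dyer path coupling proof that those references contain, so there is nothing to compare against and your writeup is correct. One small remark on the subtlety you flag at the end: in this paper's conventions the Dobrushin influence matrix in \eqref{eqn:defDob} takes the maximum over \emph{all} $\sigma,\tau\in[q]^{V\setminus v}$, not just feasible ones, and $\mu^\sigma$ is defined for arbitrary (possibly infeasible) pinnings by \eqref{eq:cond}; hence the one-step coupling and the contraction bound are valid for every pair in $[q]^V$, and the intermediate configurations $Z_i$ along the geodesic cause no trouble regardless of whether they lie in $\mathrm{supp}(\mu)$.
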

The above contractive coupling result holds for all $X,Y \in [q]^V$ because the definition of Dobrushin influence in~\eqref{eqn:defDob} considers all possible pinnings, including improper pinnings for $\mu$.

\begin{proof}[Proof of \Cref{thm:DS-CI}]
  Let $\sigma$ and $\tau$ be two (not necessarily feasible) partial configuration on $\Lambda \subseteq V$ such that they differ at only $v\in V$. Let $V' = V \setminus \Lambda$.
  As the system is permissive, $\mu^\sigma_{V'}$ and $\mu^\tau_{V'}$ are well-defined.
  To apply \Cref{lem:W1-via-coupling},
  let $P$ and $Q$ be Glauber dynamics for $\mu^\sigma_{V'}$ and $\mu^\tau_{V'}$, respectively.
  Consider the coupling where we always choose the same vertex $u \in V'$ to update, and optimally couple the updates at $u$.
  Clearly, only neighbours of $v$ could be the new disagreement.
  Thus,
  \begin{align*}
    \+W(P(X), Q(X)) \leq \frac{\Delta}{|V'|}.
  \end{align*}
  This verifies the first condition of \Cref{lem:W1-via-coupling}.

  For the second condition of \Cref{lem:W1-via-coupling},
  notice that the Dobrushin influence matrix $\rho^\tau$ for $\mu^\tau_{V'}$ is dominated by the corresponding matrix $\rho$ for $\mu$.
  Thus, $\Vert \rho^\tau\Vert_1 \le \Vert \rho\Vert_1 \le 1-\delta$.
  By \Cref{prop:doub-glauber-coupling},
  $Q$ has $\delta/{|V'|}$-contraction. 
  Note that $\sigma$ and $\tau$ differ only at a single vertex $v$.
  Thus, we can apply \Cref{lem:W1-via-coupling} to derive
  \begin{align*}
    \+W(\mu^{\sigma},\mu^{\tau}) &= \+W(\mu^{\sigma}_{V'},\mu^{\tau}_{V'}) + 1 \le \frac{\Delta}{\delta} + 1. \qedhere
  \end{align*}
\end{proof}

\Cref{cor:main-DS} directly follows from \Cref{thm:main}, \Cref{obs:permissive-lb}, and \Cref{thm:DS-CI}.

\ifdoubleblind
\else

\section*{Acknowledgement}
We thank Konrad Anand and Graham Freifeld for some helpful discussions at an early stage of this project.
We thank Charlie Carlson, Eric Vigoda, and Hongyang Liu for clarifying some questions regarding the flip dynamics and useful feedback.

\fi

\bibliographystyle{alpha}
\bibliography{refs.bib}

\appendix

\section{Heuristics behind the linear program to solve couplings}
\label{sec:heuristics}

In this section we provide some heuristics behind the linear programming approach of Moitra \cite{Moi19}, used in \Cref{sec:rec-marginal-resolver}.
Let $\sigma$ and $\tau$ be two partial configurations on $\Lambda$ such that they differ on one vertex, say, $v$.
Let $\Omega_\sigma$ be the set of states that are consistent with $\sigma$, and similarly for $\Omega_\tau$.
Suppose $\+C$ over $\Omega_\sigma\times\Omega_\tau$ is a coupling between $\mu^{\sigma}$ and $\mu^{\tau}$. 

As $\+C$ is a valid coupling, it must satisfy
\begin{align*}
  \forall \sigma'\in\Omega_{\sigma}, \quad \sum_{\tau'\in\Omega_{\tau}}\+C(\sigma',\tau') & =\mu^{\sigma}(\sigma'), \\
  \forall \tau'\in\Omega_{\tau}, \quad \sum_{\sigma'\in\Omega_{\sigma}}\+C(\sigma',\tau') &= \mu^{\tau}(\tau').
\end{align*}
Using $\mu^{\sigma}(\sigma')=\frac{\mu(\sigma')}{\mu_{\Lambda}(\sigma)}$ and $\mu^{\tau}(\tau')=\frac{\mu(\tau')}{\mu_{\Lambda}(\tau)}$,
we have
\begin{align*}
  \forall \sigma'\in\Omega_{\sigma}, \quad \sum_{\tau'\in\Omega_{\tau}}\+C(\sigma',\tau')\mu_{\Lambda}(\sigma) / \mu(\sigma') = 1,\\
  \forall \tau'\in\Omega_{\tau}, \quad \sum_{\sigma'\in\Omega_{\sigma}}\+C(\sigma',\tau')\mu_{\Lambda}(\tau) / \mu(\tau') = 1.
\end{align*}
This gives us a linear system, where we may treat $\+C(\sigma',\tau')\mu_{\Lambda}(\sigma) / \mu(\sigma')$ and $\+C(\sigma',\tau')\mu_{\Lambda}(\tau) / \mu(\tau')$ as variables.
This system is under constrained and the coupling is not unique,
while our goal is to solve $r=\frac{\mu_{\Lambda}(\sigma)}{\mu_{\Lambda}(\tau)}$.
To do so, notice that for any $(\sigma',\tau')\in\Omega_\sigma\times\Omega_\tau$,
\begin{align}  \label{eqn:heu-r}
  r=\frac{\+C(\sigma',\tau')\mu_{\Lambda}(\sigma) / \mu(\sigma')}{\+C(\sigma',\tau')\mu_{\Lambda}(\tau) / \mu(\tau')} \cdot \frac{\mu(\sigma')}{\mu(\tau')}.
\end{align}
Adding $r$ as a variable and \eqref{eqn:heu-r} to the system would make the system non-linear.
Moreover, it also makes the system less robust.
Instead, we introduce variables $x_{\sigma',\tau'}$ to represent $\+C(\sigma',\tau')\mu_{\Lambda}(\sigma) / \mu(\sigma')$,
$y_{\sigma',\tau'}$ to represent $\+C(\sigma',\tau')\mu_{\Lambda}(\tau) / \mu(\tau')$,
and $r^-$ and $r^+$ as guessed lower and upper bound for $r$.
Then consider the following set of linear equalities and inequalities:
\begin{align}\label{eqn:heu-LP}
  \begin{split}
  \forall \sigma'\in\Omega_{\sigma}, \quad & \sum_{\tau'\in\Omega_{\tau}}x_{\sigma',\tau'} = 1, \\
  \forall \tau'\in\Omega_{\tau}, \quad & \sum_{\sigma'\in\Omega_{\sigma}}y_{\sigma',\tau'} = 1, \\
  \forall (\sigma',\tau')\in\Omega_{\sigma}\times\Omega_{\tau}, \quad &r^- y_{\sigma',\tau'} \le x_{\sigma',\tau'}\cdot\frac{\mu(\sigma')}{\mu(\tau')} \le r^+ y_{\sigma',\tau'}. 
  \end{split}
\end{align}
Note that while we cannot compute $\mu(\sigma')$ or $\mu(\tau')$ easily, their ratio $\frac{\mu(\sigma')}{\mu(\tau')}$ is easy to compute exactly.

It is easy to see that if $r^-\le r\le r^+$, then the system \eqref{eqn:heu-LP} has a solution.
On the other hand, if the system \eqref{eqn:heu-LP} has a solution, we have that
\begin{align}\label{eqn:heu-ratio}
  r^-\le \frac{\sum_{(\sigma',\tau')\in\Omega_\sigma\times\Omega_\tau}x_{\sigma',\tau'} \mu(\sigma')}{\sum_{(\sigma',\tau')\in\Omega_\sigma\times\Omega_\tau}y_{\sigma',\tau'}\mu(\tau')} \le r^+.
\end{align}
Notice that
\begin{align*}
  \sum_{(\sigma',\tau')\in\Omega_\sigma\times\Omega_\tau}x_{\sigma',\tau'} \mu(\sigma') & = \sum_{\sigma'\in\Omega_\sigma} \mu(\sigma') \sum_{\tau'\in\Omega_\tau }x_{\sigma',\tau'}\\
  & = \sum_{\sigma'\in\Omega_\sigma} \mu(\sigma') = \mu_{\Lambda}(\sigma),
\end{align*}
where in the second line we used the first constraint in \eqref{eqn:heu-LP}.
Similarly for $\tau$.
Thus, \eqref{eqn:heu-ratio} implies that $r^-\le r\le r^+$.
In conclusion, the system \eqref{eqn:heu-LP} has a solution if and only if $r^-\le r\le r^+$.
Therefore, we can do a binary search to find a very accurate estimate to $r$ by repeatedly solving the LP \eqref{eqn:heu-LP}.

There is one issue with the above though, namely the system has an exponential size.
Moitra \cite{Moi19} considered constructing the coupling in a greedy way, instead of listing all final outcomes at once.
He greedily couples vertices one by one in an exploratory way, conditioning on previous choices at each step.
Each intermediate state gets its own $x$ and $y$ variables,
and the transition probabilities are reflected by linear constraints.

In fact, in Moitra's process, we can stop at any pair of intermediate partial configurations $\sigma_0$ and $\tau_0$ over ${\Lambda_0}$, and write down the corresponding LP.
The main issue of doing this is that there is no good way of computing $\frac{\mu_{\Lambda_0}(\sigma_0)}{\mu_{\Lambda_0}(\tau_0)}$.
Moitra's choice is to prioritise getting the same boundary between $\sigma_0$ and $\tau_0$.
If this is achieved, and the inside of the boundary has a logarithmic size, then $\frac{\mu_{\Lambda_0}(\sigma_0)}{\mu_{\Lambda_0}(\tau_0)}$ can be computed efficiently.
The key property for Moitra's process to succeed is to have an exponentially small (in the number of steps of the coupling process) probability of failing to get the same boundary between the two copies.
This property guarantees that one can truncate the coupling process at an logarithmic depth,
and maintain the size of the LP to be a polynomial in the input size.
To certify the exponential tail of failure probability, his LP involves linear constraints derived from local uniformity for each transition step,
which no longer holds in our setting.

The main innovation of our approach in \Cref{sec:rec-marginal-resolver} is that we do not try to efficiently compute $\frac{\mu_{\Lambda_0}(\sigma_0)}{\mu_{\Lambda_0}(\tau_0)}$ for intermediate states.
Instead, we use recursion.
To do so, notice that our partial coupling, \Cref{alg:coupling}, outputs $(\sigma_0,\tau_0)$ that either share the same boundary or differ by exactly $2$ vertices.
In the first case, $\frac{\mu_{\Lambda_0}(\sigma_0)}{\mu_{\Lambda_0}(\tau_0)}$ can be computed exactly and efficiently.
In the second case, there is a partial configuration $\rho$ such that both $(\sigma_0,\rho)$ and $(\rho,\tau_0)$ differ on only one vertex.
As $\frac{\mu_{\Lambda_0}(\sigma_0)}{\mu_{\Lambda_0}(\tau_0)}=\frac{\mu_{\Lambda_0}(\sigma_0)}{\mu_{\Lambda_0}(\rho)}\cdot\frac{\mu_{\Lambda_0}(\rho)}{\mu_{\Lambda_0}(\tau_0)}$,
we apply recursion to approximate both $\frac{\mu_{\Lambda_0}(\sigma_0)}{\mu_{\Lambda_0}(\rho)}$ and $\frac{\mu_{\Lambda_0}(\rho)}{\mu_{\Lambda_0}(\tau_0)}$.
Doing so apparently doubles the approximation error. 
However, this error occurs only in the second case, where the partial coupling exits early.
Luckily, the probability of early exits is $O(\delta(R) R\log\Delta)$.
As $\delta(R)$ decays exponentially with $R$, we can make this probability as small as we want.
There is one more wrinkle, that is, 
we cannot really write down linear constraints that exactly capture the early exit probability, because doing so would involve probabilities that we cannot compute efficiently.
Instead, we choose to use the marginal lower bound $b$ and the total influence bound in our linear program to give an upper bound of the early exit probability.
See the overflow constraints in \Cref{alg:LP}.
Eventually, we choose $R$ such that $\delta(R)$ absorbs $R\log \Delta$ together with some polynomial factors in $b^{-1}$.
This makes sure that the error decays by a constant factor at each recursive call.

\end{document}